\documentclass[11pt]{article}
\usepackage{enumerate}
\usepackage{url}
\urlstyle{same}
\usepackage{amsfonts,amsmath,amssymb,amsthm} 
\usepackage[lmargin=2cm, rmargin=2cm,bottom=2cm,top=2cm]{geometry} 

\usepackage{xspace}
\usepackage[utf8]{inputenc}
\usepackage{tikz}
\usetikzlibrary{arrows,automata}
\usepackage{soul}

 % normal singular set

\newcommand{\first}{{\rm first}}
\newcommand{\last}{{\rm last}}

\newcommand{\frag}{{\rm Frag}}

\newcommand{\bw}{\ensuremath{\mathbf{w}}\xspace}

\newcommand{\bF}{\ensuremath{\mathbf{F}}\xspace}
\newcommand{\mbf}{\ensuremath{\mathbf{f}}\xspace}
\newcommand{\bB}{\ensuremath{\mathbf{B}}\xspace}

\newcommand{\SbLSP}{S_{\rm bLSP}}
\newcommand{\SRbLSP}{S_{\rm R-bLSP}}

\newcommand{\bT}{\ensuremath{\mathbf{T}}\xspace}

\newtheorem{theorem}{Theorem}[section]

\newtheorem{lemma}[theorem]{Lemma}
\newtheorem{proposition}[theorem]{Proposition}
\newtheorem{corollary}[theorem]{Corollary}

\newtheorem{property}[theorem]{Property}

\theoremstyle{definition}

\newtheorem{example}[theorem]{Example}
\newtheorem{definition}[theorem]{Definition}

\begin{document}
\title{Characterization of infinite LSP words\\and endomorphisms preserving the LSP property}
\author{Gwena\"el Richomme\\
LIRMM, Universit\'e Paul-Val\'ery Montpellier 3, Université de Montpellier, CNRS,\\ Montpellier, France
}

\maketitle

\begin{abstract}
Answering a question of G. Fici, we give an $S$-adic characterization of the
family of infinite LSP words, that is, the family of infinite words having all their left special factors as prefixes.
More precisely we provide a finite set of morphisms $S$ and an automaton ${\cal A}$ such that an infinite word is LSP if and only if it is $S$-adic and one of its directive words is recognizable by ${\cal A}$.
Then we characterize the endomorphisms that preserve the property of being LSP for infinite words.
This allows us to prove that there exists no  set $S'$ of endomorphisms for which the set of infinite LSP words corresponds to the set of $S'$-adic words. 
This implies that an automaton is required no matter which set of morphisms is used.

\textbf{Keywords:} generalizations of Sturmian words, morphisms, $S$-adicity.
\end{abstract}

\section{Introduction}

Free monoid morphisms, also sometimes called substitutions, are basic tools to study finite and infinite words.
They are used in various fields
as Combinatorics on Words, Formal Languages or Dynamical Systems (see, {\em e.g.},
\cite{Berthe_Rigo2010CANT,Lothaire1983book,Lothaire2002,Pytheas2002,RozenbergSalomaa1997Handbook}).
Literature contains many examples of interesting infinite words that
are fixed points of endomorphisms. Their success and interest are mainly
due to the simplicity of their definitions and the ability given by endomorphisms to prove their properties.
Let $\bw$ be such a fixed point of morphism
and let $f$ be a morphism 
such that $\bw = f(\bw)$.
This word $\bw$ can be seen as the limit $\lim_{n \to \infty} f^n(a)$ where $a$ is the first letter of $\bw$ (with some extra needed conditions on $f$ to ensure that this limit exists and is an infinite word).
This word $\bw$ can also be considered as a word which can be recursively
desubstituted by the morphism $f$: here \textit{desubstituted} means the existence of a word $\bw_1$ such that $\bw = f(\bw_1)$ and \textit{recursively desubstituted} means that $\bw_1$ can be itself desubstituted by $f$ and so on).

Paraphrasing \cite{Berthe_Delecroix2014RIMS}, if one wants to go beyond the morphic case, and thus, 
get more flexibility in the hierarchical structure, 
one might want to change the morphism at each desubstitution step (the point of view in \cite{Berthe_Delecroix2014RIMS} is somewhat dual).
Thus considering a set $S$ of morphisms instead of a single morphism $f$, 
we may consider the $S$-adic words, that is, words that can be recursively desubstituted over $S$. A more precise definition is given in Section~\ref{sec:s-adicity}.
For more information on $S$-adic words, readers can consult, \textit{e.g.}, papers \cite{Berthe2016RIMS,Berthe_Delecroix2014RIMS}, Chapter 12 in \cite{Pytheas2002} and their references. Note that we will sometimes write $S$-adic for substitutive-adic word without direct reference to a set $S$ of morphisms.

$S$-adicity arises naturally in various studies as, for instance, 
those of Sturmian words \cite[Chap.2]{Lothaire2002}\cite[Chap. 5]{Pytheas2002}
or of Arnoux-Rauzy words \cite{ArnouxRauzy1991}.
In \cite{Ferenczi1996ETDS} (see also \cite{Leroy2012thesis,Leroy2014DMTCS})
Ferenczi  proved that
any word with factor complexity bounded by some affine function
is $S$-adic for some finite set $S$ of morphisms that
depends on the bound of the first difference of the factor complexity 
(remember that, for any word with factor complexity bounded by some affine function,
the first difference of the factor complexity is bounded by a constant \cite{Cassaigne1996DLT}). 
For each of the previous examples,
the $S$-adic properties are not characteristic of the considered families.
Episturmian words that generalize Sturmian and Arnoux-Rauzy words are the $S_E$-adic words where $S_E$ is the finitely generated set of morphisms that preserve episturmian words \cite{DroubayJustinPirillo2001,JustinPirillo2002,Richomme2003TCS}.
Over the binary alphabet, it is an exercise to show that the set of balanced words (Sturmian words and ultimately periodic balanced words) are the $S_S$-adic words where
$S_S$ is the finitely generated set of morphisms that preserve Sturmian words.
As far as the author knows it, the kind of characterizations of the two previous examples is very rare.

$S$-adic characterizations of families of words is generally twofold.
In addition to the induced set $S$ of morphisms, a characterization of
allowed infinite sequences of desubstitutions is provided.
For instance there exist two disjoint sets $S_a$ and $S_b$ of morphisms
such that a word is Sturmian if and only it can be recursively decomposed over $S_a\cup S_b$ using infinitely often elements from $S_a$ and 
infinitely often elements from $S_b$ \cite{BertheHoltonZamboni2006}.
This condition can be described using infinite paths with prohibited segments in an automaton or a graph. 
These kinds of conditions are also used, for instance, in the characterization of words for which the first difference of factor complexity is bounded by 2  \cite{Leroy2012thesis,Leroy2014DMTCS}
or in the characterization of sequences arising from the study of the Arnoux-Rauzy-Poincar\'e multidimensional continued fraction algorithm \cite{Berthe_Labbe2015AAM}. For the last example, all infinite paths in the graphs are allowed.

Extending an initial work by M.~Sciortino and L.Q.~Zamboni \cite{Sciortino_Zamboni2007DLT},
G.~Fici investigated relations between the structure of the suffix automaton built from a finite word $w$ and the combinatorics of this word \cite{Fici2011TCS}. 
He proved that the words having their associated automaton with a minimal number of states (with respect to the length of $w$) are the words having all their left special factors as prefixes. 
G.~Fici asked in the conclusion of his paper for a characterization of the set of words having the previous property, that he called the LSP property, both in the finite and the infinite case. 
In this paper, we provide an $\SbLSP$-adic characterization
of LSP infinite words (for a suitable $\SbLSP$ set of morphisms) 
using an automaton recognizing allowed infinite desubstitutions over
$\SbLSP$.
We prove that there exists no set of morphisms $S$
such that the family of LSP words 
is the family of $S$-adic words.

Our $S$-adic characterization is a refinement
of the one presented at Conference DLT 2017 \cite{Richomme2017DLT}.
Main ideas of Sections~\ref{sec:morphisms} to \ref{sec:carac} were already presented and used in \cite{Richomme2017DLT}.
But the set of morphisms considered here allows to provide a smaller automaton 
(in particular it can be drawn for a	 three-letter alphabet while this was not possible in \cite{Richomme2017DLT}) even
if this set is larger as it is 
the set of morphisms considered in \cite{Richomme2017DLT} plus their restrictions to smaller alphabets.
The proof that one cannot have a characterization without restrictions on allowed desubstitutions (Section~\ref{sec:preservation}) is new.

The paper is organized as follows.
After introducing in Section~\ref{sec:morphisms} our basis of morphisms $\SRbLSP$,
in Section~\ref{sec:s-adicity}, we show that all infinite LSP words are $\SRbLSP$-adic.
Section~\ref{sec:fragility} introduces a property of infinite LSP words and a property of morphisms in $\SbLSP$ that together allow to explain why the LSP property is lost when applying an LSP morphism to an infinite LSP word.
Section~\ref{sec:origin} allows to trace the origin of the previous property of infinite LSP words.
Based on this information, Section~\ref{sec:automaton} defines our automaton and
Section~\ref{sec:carac} proves our characterization of infinite LSP words.
In Section~\ref{sec:preservation} we characterize endomorphisms preserving LSP words
and deduce that the set of LSP words cannot be characterized as a set of $S$-adic words, whatever $S$ is.

\section{\label{sec:morphisms}Some basic morphisms}

We assume that readers are familiar with combinatorics on words; for omitted definitions (as for instance, factor, prefix, ...) see, \textit{e.g.}, \cite{Berthe_Rigo2010CANT,Lothaire1983book,Lothaire2002}. 
Given an alphabet $A$, $A^*$ is the set of all finite words over $A$, including the empty word $\varepsilon$, and $A^\omega$ is the set of all infinite words over $A$. For a non-empty word $u$, let $\first(u)$ denote its first letter, $\last(u)$ its last letter and ${\rm alph}(u)$ its set of letters. Notations $\first$ and $\rm alph$ are similarly defined for infinite words.

A finite word $u$ is a \textit{left special factor} of a finite or infinite word $w$ 
if there exist at least two distinct letters $a$ and $b$ such that both words $au$ and $bu$ occur in $w$. Following G.~Fici \cite{Fici2011TCS}, a finite or infinite word $\bw$ is \textit{LSP} if all its left special factors are prefixes of $\bw$. 
We also say say $\bw$ has \textit{the LSP property}.
We study these words using morphisms.

Given two alphabets $A$ and $B$, a \textit{morphism} (\textit{endomorphism} when $A = B$) $f$ is a map from $A^*$ to $B^*$ such that for all words $u$ and $v$ over $A$, $f(uv) = f(u)f(v)$.
Morphisms are entirely defined by images of letters.
Morphisms extend naturally to infinite words.
We consider only nonerasing morphisms, that is, morphisms $f$ such that $f(x) = \varepsilon$ implies $x = \varepsilon$.
When $X$ and $Y$ are two sets of morphisms, let $XY$ denote the set of morphisms $f \circ g$ with $f \in X$, $g \in Y$ such that if $f$ is from $A^*$ to $B^*$ and $g$ is from $C^*$ to $D^*$ then $D \subseteq A$.

We call \textit{basic LSP morphism} on an alphabet $A$, or $\textit{bLSP}$ in short, 
any endomorphism $f$ of $A^*$ verifying:
\begin{itemize}
\itemsep0cm
\item there exists a letter $\alpha$ such that $f(\alpha) = \alpha$, and
\item for all letters $\beta \neq \alpha$, there exists a letter $\gamma$ such that $f(\beta) = f(\gamma) \beta$
\end{itemize}

From this definition one can naturally associate the rooted tree 
(all labels are on vertices and distinct vertices have distinct labels)
whose vertices are elements of $A$, whose root is $\alpha$ 
and whose (oriented) edges are pairs $(\beta, \gamma)$ of letters such that
$f(\beta) = f(\gamma)\beta$.
Conversely,
given a labeled rooted tree $T = (A, E)$, 
let $f_T$ be the morphism defined by: for all letters $\beta$, $f_T(\beta)$ is the word obtained concatenating vertices on the path in $T$ from the root of $T$ to $\beta$. This morphism is bLSP.

Let $\SbLSP(A)$ (or shortly $\SbLSP$ when $A$ is clear) denote the set of all bLSP morphisms over the alphabet $A$. 
We have just seen that there is a bijection between this set and the set of labeled rooted trees with label in $A$.
Thus, denoting by $\#X$ the cardinality of a set $X$, there are $(\#A)^{\#A-1}$ elements in $\SbLSP(A)$ 
(see Sequence A000169 in The On-Line Encyclopedia of Integer Sequences: this sequence enumerates rooted trees; its first values are $1$, $2$, $9$, $64$, $625$, $7776$, $117649$, $2097152$). 

When $f$ is a morphism defined on an ordered alphabet $A = \{a_1, \ldots, a_k\}$, 
we let $[ u_1, u_2, \ldots, u_k]$ denote the morphism defined by $a_1 \mapsto u_1$, $a_2 \mapsto u_2$, \ldots, $a_k \mapsto u_k$. 
For instance $[a, ab, ac, acd, ace]$  defines the bLSP morphism $f$ such that $f(a) = a$, $f(b) = ab$, $f(c) = ac$, $f(d) = acd$, $f(e) = ace$. The rooted tree associated with $f$ is given in Figure~\ref{Ex1}.

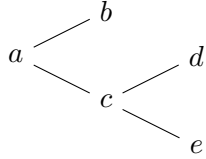
\begin{figure}[!ht]
\begin{center}
\begin{tikzpicture}[scale=0.3]
\begin{scope}
\node (a) at (0,0) {$a$};
\node (b) at (4,2) {$b$};
\node (c) at (4,-2) {$c$};
\node (d) at (8,0) {$d$};
\node (e) at (8,-4) {$e$};
\draw (a) -- (b);
\draw (a) -- (c);
\draw (c) -- (d);
\draw (c) -- (e);
\end{scope}
	
\end{tikzpicture}
\end{center}
\caption{\label{Ex1}Rooted tree associated with the morphism $[a, ab, ac, acd, ace]$}
\end{figure}

\begin{example}
\label{Exemple1}
\rm 
Some examples of bLSP morphisms are the standard episturmian morphisms $L_\alpha$ defined, for $\alpha$ a letter, by: $L_\alpha(\alpha) = \alpha$ and $L_\alpha(\beta) = \alpha \beta$ for any letter $\beta \neq \alpha$ (see for instance \cite{JustinPirillo2002,Richomme2003TCS} for example of uses of these morphisms). The associated rooted tree has its root connected to all other vertices. Observe that $\SbLSP(\{a, b\}) = \{ L_a, L_b\} = \{ [a, ab], [ba, b]\}$.
These morphisms are well-known in the context of Sturmian words. 
They are denoted $\tau_a$ and $\tau_b$ in \cite{BertheHoltonZamboni2006} from which it can be seen that standard Sturmian words are the non-periodic $\{\tau_a, \tau_b\}$-adic words (see also \cite{LeveRichomme2007TCS}). 
\end{example}

\begin{example}
\label{Exemple2}
\rm
For any alphabet $A_k = \{a_1, \ldots, a_k\}$, let $\lambda_{a_1\cdots a_k}$ be the morphism defined by $\lambda_{a_1\cdots a_k}(a_i) = a_1\cdots a_i$. The associated rooted tree is a path. For instance $\lambda_{acb} = [a, acb, ac]$. 
Note that $\SbLSP(\{a, b, c\}) = \{ L_a,$ $L_b,$ $L_c,$ $\lambda_{abc},$ 
$\lambda_{acb},$ $\lambda_{bac},$ $\lambda_{bca},$ $\lambda_{cab},$ $\lambda_{cba}\}$. Observe that these morphisms are the mirror morphisms of Arnoux-Rauzy and Poincar\'e morphisms (here $f$ is a \textit{mirror morphism} of $g$ if $f(a)$ is the mirror image or reversal of $g(a)$ for all letters $a$) used by V.~Berth\'e and S.~Labb\'e \cite{Berthe_Labbe2015AAM}.
\end{example}

Let $A$, $B$ and $A'$ be three alphabets with $A' \subseteq A$ and let $f$ be a morphism from $A^*$ to $B^*$.
The restriction of $f$ to $A'$ is the morphism from $A'^*$ to ${\rm alph}(f(A'))^*$ denoted $f_{|A'}$ or $f{|A'}$and defined by $f_{|A'}(\alpha) = f(\alpha)$ for each letter $\alpha$ of $A'$.
An R-bLSP morphism is any restriction to a subalphabet of a morphism in $\SbLSP$.
Let $\SRbLSP(A)$ or simply $\SRbLSP$ denote the set of R-bLSP morphisms. This means that if $f$ belongs to $\SRbLSP(A)$, there exists a subalphabet $B$ of $A$ and an element $g$ of $\SbLSP$ such that $f = g_{|B}$.

By construction of an $R$-bLSP morphism $f$, all images of letters by $f$ begin with the same letter: let $\first(f)$ denote it.
Next properties are also direct consequences of the definition of R-bLSP morphisms. They will often be used without explicit mention.

\begin{property}
\label{prop:bLSP properties}
Let $f$ be an R-bLSP morphism over the alphabet $A$.
\begin{enumerate}
%\item there exists a unique letter $\alpha \in A$ such that for all $\beta \in A$, $\first(f(\beta)) = \alpha$;
\item \label{prop:last} for all $\beta \in A$, $\last(f(\beta)) = \beta$;
%\item \label{rem:unique letter} there exists a unique letter $\alpha \in A$ such that $f(\alpha) = \alpha$: $\alpha = \first(f)$;
\item $f(A)$ is a suffix code (no word of $f(A)$ is a suffix of another word in $f(A)$);
\item $f$ is injective both on the set of finite words and the set of infinite words;
\item for all $\beta \in A$, $x$, $y \in A^*$, if $|x| = |y|$ and if $x\beta$ and $y\beta$ are  factors of words in $f(A)$, then $x = y$;
\item \label{property1} for all letters $\beta$, $\gamma$, $|f(\beta)|_\gamma \leq 1$.
\end{enumerate}

\end{property}

\section{\label{sec:s-adicity}Substitutive-adicity of infinite LSP words}

Let $S$ be a set of morphisms. 
Usually an infinite word $\bw$ is said to be \textit{$S$-adic}
if there exist a sequence $(f_n)_{n\geq 1}$ of morphisms in $S^\omega$
and a sequence of letters $(a_n)_{n\geq 1}$ such that 
$\lim_{n \to +\infty} |f_1f_2\cdots f_n(a_{n+1})| = +\infty$ and
$\bw = \lim_{n \to +\infty} f_1f_2\cdots f_n(a_{n+1})$.
The sequence $(f_n)_{n\geq 1}$ is called the \textit{directive word} of $\bw$. 
We consider here $S$-adicity in a rather larger way:
a word $\bw$ is \textit{$S$-adic} with directive word $(f_n)_{n\geq 1}$ if there exists an infinite sequence of infinite words $(\bw_n)_{n \geq 1}$ such that $\bw_1 = \bw$ and $\bw_n = f_n( \bw_{n+1})$ for all $n \geq 1$. If the former definition is verified, the latter is also verified. 
This second definition may include degenerated cases as, for instance, the word $a^\omega$ that is $\{Id\}$-adic with $Id$ the identity morphism. 
To understand $S$-adicity, 
it may be better to see $\bw_{n+1}$ as the inverse image of $\bw_n$,
rather than seing $\bw_n$ as the image of $\bw_{n+1}$.
As morphisms are sometimes called substitutions, we 
say that $\bw_{n+1}$ is a desubstituted word from $\bw_n$
and so saying that $\bw$ is $S$-adic means that $\bw$ can be recursively desubstituted (using elements of $S$).
Sometimes $f_n$ can be defined on a larger alphabet than ${\rm alph}(\bw_{n+1})$.
In this case $f_n$ can be replaced with its restriction to ${\rm alph}(\bw_{n+1})$.
We will say that $(f_n)_{n \geq 1}$ is a \textit{fitted directive} word when for all $n \geq 1$, $f_n$ is defined on ${\rm alph}(\bw_{n+1})$.

\begin{proposition}
\label{characLSPwords}Any infinite LSP word is $\SRbLSP$-adic and so is $\SbLSP$-adic. 
Moreover any LSP word has exactly one fitted directive word over $\SRbLSP$.
\end{proposition}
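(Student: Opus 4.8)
The plan is to desubstitute $\bw$ once in a canonical way, prove that the desubstituted word is again LSP and that the desubstitution morphism lies in $\SRbLSP$, iterate, and finally observe that every choice made along the way is forced.

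\textbf{Setup.} Let $\bw$ be infinite LSP and $a=\first(\bw)$. First I would show that $a$ occurs infinitely often in $\bw$. If not, write $\bw=u\bv$ with $u$ the prefix of $\bw$ ending at the last $a$ and $\bv$ infinite over ${\rm alph}(\bw)\setminus\{a\}$; an induction on $i$ shows each letter $\bv[i]$ occurs exactly once in $\bv$, for otherwise $\bv[i]$ would be preceded in $\bw$ both by $\bv[i-1]$ (with $\bv[0]:=\last(u)=a$) and, at a later occurrence, by a letter different from $\bv[i-1]$, making $\bv[i]$ a left special factor of $\bw$, hence (by LSP) a prefix of $\bw$, contradicting $\first(\bw)=a\neq\bv[i]$; thus $\bv$ would use infinitely many letters. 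So we may cut $\bw$ at the occurrences of $a$: $\bw=B_1B_2B_3\cdots$, each $B_i$ finite, beginning with $a$, with no further $a$. Set $\beta_i=\last(B_i)$, $\bw'=\beta_1\beta_2\beta_3\cdots$ (infinite), and, for each letter $\beta$ occurring in $\bw'$, $f(\beta)=B_i$ for any $i$ with $\beta_i=\beta$.

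\textbf{$f$ is well defined and in $\SRbLSP$.} The fact used repeatedly is that $B_i$ starts with $a$ and has no other occurrence of $a$. If $\beta_i=\beta_j$ but $B_i\neq B_j$, then neither is a suffix of the other (the unique $a$ of each is at the front), so their longest common suffix $s$ is a proper suffix of both; hence $s$ occurs in $\bw$ preceded by two distinct letters, so $s$ is a left special factor of $\bw$, hence a prefix of $\bw$, hence starts with $a$ — impossible for a proper suffix of $B_i$. So $f$ is well defined. The same argument shows that if $p,q$ are non-empty prefixes of images $f(\beta),f(\beta')$ with $\last(p)=\last(q)$ then $p=q$ (align from the right; at a divergence the common suffix would be a left special factor, hence a prefix of $\bw$ containing an $a$ away from the front of a block, impossible). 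Hence the non-empty prefixes of $\{f(\beta):\beta\in{\rm alph}(\bw')\}$, each labelled by its last letter, form a labelled rooted tree rooted at the length-one prefix labelled $a$; the associated bLSP morphism $g$ satisfies $g_{|{\rm alph}(\bw')}=f$, so $f\in\SRbLSP$, $f$ is fitted to $\bw'$, and $\bw=f(\bw')$.

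\textbf{$\bw'$ is LSP (the crux).} Suppose not, and pick a left special factor $s'$ of $\bw'$ of minimal length that is not a prefix of $\bw'$; write $m=|s'|$. Every prefix of a left special factor is left special, so minimality forces $s'[1..m-1]=\bw'[1..m-1]$ while $s'[m]=x\neq\bw'[m]=:\beta$, and $\bw'$ contains $c's'$ and $d's'$ for two distinct letters $c',d'$. Since $\bw'$ is infinite, each of these occurrences can be extended on the right by one letter $t$; applying $f$ and using $\first(f(t))=a$ we get occurrences in $\bw$ of $f(c')\,f(s')\,a$ and $f(d')\,f(s')\,a$, whence (as $\last(f(c'))=c'$, $\last(f(d'))=d'$) the word $f(s')\,a$ is left special in $\bw$, hence a prefix of $\bw$. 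But $f(s')$ itself is left special in $\bw$ (same argument without the trailing letter), hence a prefix of $\bw=B_1B_2\cdots$; since $f(s')=B_1\cdots B_{m-1}\,f(x)$ and $f(x)$ contains a single $a$, $f(x)$ must be a prefix of $B_m=f(\beta)$, and a proper one because $\last(f(x))=x\neq\beta=\last(B_m)$. Therefore the letter of $\bw$ immediately after $f(s')$ is the $(|f(x)|+1)$-st letter of $B_m$, which is not $a$ — contradicting that $f(s')a$ is a prefix of $\bw$. Hence $\bw'$ is LSP.

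\textbf{Conclusion and uniqueness.} Iterating yields infinite LSP words $\bw=\bw_1,\bw_2,\dots$ and morphisms $f_n\in\SRbLSP$ fitted to $\bw_{n+1}$ with $\bw_n=f_n(\bw_{n+1})$, so $\bw$ is $\SRbLSP$-adic; since ${\rm alph}(\bw_{n+1})\subseteq{\rm alph}(\bw_n)\subseteq{\rm alph}(\bw)$, one may extend each $f_n$ to a bLSP endomorphism of ${\rm alph}(\bw)^*$ by hanging the missing letters as new leaves, without changing $\bw_n=f_n(\bw_{n+1})$, so $\bw$ is $\SbLSP$-adic. For uniqueness, if $\bz=h(\bz')$ with $h\in\SRbLSP$ fitted to $\bz'$, then $h$ has root $\first(\bz)$, the occurrences of $\first(\bz)$ in $\bz$ are exactly the boundaries of the blocks $h(\bz'[1]),h(\bz'[2]),\dots$, and $\bz'[i]$ is the last letter of the $i$-th block; thus both $\bz'$ and $h$ are determined by $\bz$. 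Applying this at every level shows the fitted directive word over $\SRbLSP$ is unique (and coincides with the canonical one above). I expect the step ``$\bw'$ is LSP'' to be the main obstacle; the rest is bookkeeping around the single fact that an R-bLSP morphism sends every letter to a word containing exactly one occurrence of the root letter, placed first.
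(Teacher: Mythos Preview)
Your proof is correct and follows essentially the same strategy as the paper: desubstitute once by an R-bLSP morphism, show the desubstituted word is again LSP, iterate, and verify uniqueness. The paper organises this as two lemmas (existence/uniqueness of the desubstitution, and preservation of LSP under desubstitution), while you do everything in-line; your explicit argument that $\first(\bw)$ recurs infinitely often is a detail the paper leaves implicit, and your ``align from the right'' verification that the block prefixes form a labelled tree is a clean replacement for the paper's graph $G$.
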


The proof of this proposition follows the following general scheme of proof and so is a direct consequence of the next two lemmas.

Given a set $S$ of morphisms, in order to prove that infinite words verifying a property $P$ are $S$-adic, it suffices to prove that for all infinite words $\bw$ verifying $P$,
\begin{enumerate}
\item there exist $f \in S$ and an infinite word $\bw'$ such that $\bw = f(\bw')$, and
\item if $\bw = f(\bw')$ with $f \in S$, then $\bw'$ verifies Property $P$.
\end{enumerate}

To prove the uniqueness of directive words one has also to prove the uniqueness of $f$ and $\bw'$ in the first item above.

\begin{lemma}
\label{lemma2}
Given any finite or infinite LSP word $\bw$, 
there exist a unique infinite word $\bw'$ and a unique morphism $f$ in $\SRbLSP(alph(\bw'))$ such that $\bw = f(\bw')$.
\end{lemma}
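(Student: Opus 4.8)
The plan is to analyze the structure of an LSP word $\bw$ through its left special factors and, in particular, through the prefixes of $\bw$, in order to reconstruct a unique bLSP desubstitution. First I would identify the alphabet $A = {\rm alph}(\bw)$ and observe that, since $\bw$ is LSP, whenever two distinct letters $\beta,\gamma$ both occur in $\bw$ with at least two left extensions... more usefully: I would look at the longest prefix $p$ of $\bw$ that is left special (or show there is a well-defined notion), and use the fact that every left special factor is a prefix. The key combinatorial observation I expect to need is: for each letter $\beta \in A$ occurring in $\bw$, the set of (occurrences of) $\beta$ in $\bw$ induces a canonical factorization. Concretely, consider the letter $\alpha = \first(\bw)$. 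I claim $\alpha$ plays the role of the fixed letter: one should show that between consecutive occurrences of letters that are "closest to the root", the word decomposes into blocks, each block being of the form $f(\beta)$ for the sought morphism $f$. The natural way to pin down $f$ is: for each $\beta \in A$, let $u_\beta$ be the prefix of $\bw$ of length equal to (the first occurrence of $\beta$ in $\bw$) $+1$, i.e.\ $u_\beta$ ends at the first $\beta$; set $f(\beta) := u_\beta$ when $\beta \ne \alpha$ with appropriate adjustment so that $f(\beta) = f(\gamma)\beta$ for the predecessor $\gamma$, and $f(\alpha) = \alpha$. This is exactly the tree construction $f_T$ from Section~\ref{sec:morphisms}, the tree being read off from the order of first occurrences of letters as prefixes of $\bw$.

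The main steps, in order, would be: (1) Define the candidate tree $T$ on $A$ rooted at $\alpha = \first(\bw)$ by declaring the parent of $\beta$ to be the letter $\gamma$ such that the prefix of $\bw$ ending at the first occurrence of $\beta$ is $q\beta$ where $q$ ends at (an occurrence of) $\gamma$ and $q$ itself is $f_T(\gamma)$ --- this requires checking by induction on the length of the prefix that such a $\gamma$ exists and is unique, using that every prefix of $\bw$ that is "left special ready" forces the block structure. Here the LSP property is used: the prefix of $\bw$ up to just before the first $\beta$ is a left special factor (two different letters precede some occurrence of it, or one must argue it directly), hence a prefix, giving the recursive block. (2) Set $f := f_T \in \SbLSP(A)$ and show $f$ is in fact in $\SRbLSP({\rm alph}(\bw'))$ once $\bw'$ is defined. (3) Show $\bw$ admits a unique factorization into blocks from $f(A)$: existence by greedily matching prefixes of $\bw$ against the (suffix) code $f(A)$ --- using Property~\ref{prop:bLSP properties} that $f(A)$ is a suffix code, actually one wants that $f(A)$ is also recognizable enough from the left; uniqueness follows because $f$ is injective (Property~\ref{prop:bLSP properties}(3)) and because the blocks are determined by occurrences of letters minimal in the tree order. (4) Define $\bw'$ as the sequence of letters $\beta$ such that the $n$-th block equals $f(\beta)$; then $\bw = f(\bw')$ by construction, and $f$ is nonerasing so $\bw'$ is infinite. (5) Uniqueness of the pair $(f,\bw')$: suppose $\bw = g(\bw'')$ with $g \in \SRbLSP$. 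Then $\first(g) = \alpha$, and I would show that the tree of $g$ must coincide with $T$ by comparing first occurrences of letters in $\bw$ — since each $g(\beta)$ is a prefix-determined block (it is the path in $g$'s tree from root to $\beta$, hence a prefix of $\bw$ appearing at the first occurrence of $\beta$), $g$'s tree is forced to be $T$, so $g = f_{|{\rm alph}(\bw'')}$ and then $\bw'' = \bw'$ by injectivity of $f$.

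The hard part, I expect, will be Step~(1) together with Step~(3): proving that the first occurrences of letters, read as prefixes of $\bw$, really do assemble into a consistent tree and that $\bw$ factorizes cleanly over $f(A)$. The delicate point is that $f(A)$ being a suffix code guarantees a unique factorization \emph{from the right}, but $\bw$ is infinite, so one cannot factorize from the right; instead one must exploit that the blocks are detected by the positions of tree-minimal letters (ultimately, positions of $\alpha$, or more precisely of letters closer to the root) and that the LSP property forces every "partial block" seen so far to be a prefix of $\bw$, hence to extend consistently. Making this induction airtight --- in particular handling the case where the current scanned prefix of $\bw$ is itself a proper prefix of some $f(\beta)$ and showing the next letter of $\bw$ is forced --- is where the real work lies. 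I would also need to be careful that $f$ lands in $\SRbLSP({\rm alph}(\bw'))$ rather than merely $\SbLSP(A)$: it could be that some letter of $A$ is used in $\bw$ only "inside" blocks and never as a block label, so ${\rm alph}(\bw') \subsetneq A$ is possible, and $f$ should be taken as the restriction $f_{T|{\rm alph}(\bw')}$; one checks this restriction is still an R-bLSP morphism directly from the definition.
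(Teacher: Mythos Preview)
Your overall plan is the same as the paper's---build a rooted tree on ${\rm alph}(\bw)$ with root $\alpha=\first(\bw)$, take the associated bLSP morphism, factorize $\bw$, restrict to ${\rm alph}(\bw')$---but you are making the two steps you flag as ``hard'' much harder than they are, because you have not isolated the one-line combinatorial observation that drives the whole proof.

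That observation is: \emph{every letter $\beta\neq\alpha$ is non-left-special in $\bw$}. Indeed, if $\beta$ were left special then, $\bw$ being LSP, $\beta$ would be a prefix of $\bw$, contradicting $\first(\bw)=\alpha$. Hence each $\beta\neq\alpha$ has a \emph{unique} letter preceding it in $\bw$. This predecessor map \emph{is} the tree: no induction on prefix lengths, no appeal to first occurrences, no ``left special ready'' argument is needed. Your Step~(1) collapses to this. Likewise, Step~(3) collapses: since every image $f(\beta)$ begins with $\alpha$ and $\alpha$ occurs nowhere else in any image (Property~\ref{prop:bLSP properties}(\ref{property1})), the positions of $\alpha$ in $\bw$ are exactly the block starts. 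So the factorization is just ``cut $\bw$ at each $\alpha$''; the suffix-code worry about factorizing an infinite word from the right never arises.

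Your uniqueness sketch in Step~(5) has a small slip: $g(\beta)$ need not be the prefix of $\bw$ ending at the first occurrence of $\beta$ (e.g.\ $\bw=aabab\cdots$, $g(b)=ab$, but the prefix to the first $b$ is $aab$). The paper's argument avoids this: if $\bw=h(\bw'')$ with $h$ R-bLSP, then $\first(h)=\alpha$ and positions of $\alpha$ force $h({\rm alph}(\bw''))=f({\rm alph}(\bw'))$ as sets of blocks; then Property~\ref{prop:bLSP properties}(\ref{prop:last}) ($\last(f(\beta))=\beta$) pins down which letter each block is the image of, giving $h=f$ and $\bw''=\bw'$.
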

 
\begin{proof}
Let $\bw$ be a non-empty finite or infinite LSP word and let $\alpha$ be its first letter.
Let $X$ be the set of words over ${\rm alph}(\bw)\setminus\{\alpha\}$ such that $\bw$ can be factorized over $\{\alpha\} \cup X$.
Let $G$ be the graph $({\rm alph}(\bw), E)$ with $E$ the set of edges $(\beta, \gamma)$ such that $\beta\gamma$ is a factor of a word $\alpha u$ with $u \in X$. By LSP Property of $\bw$, each letter occurring in a word of $X$ is not left special in $\bw$.
Hence $G$ is a rooted tree with $\alpha$ as root, that is, for any letter $\beta$ in ${\rm alph}(\bw)$, there exists a unique path from $\alpha$ to $\beta$. 
Let $u_\beta$ denote the word obtained by concatenating the letters occurring in the path. 
Let $g$ be the morphism defined by $g(\beta) = u_\beta$ for all $\beta$ in ${\rm alph}(\bw)\setminus\{\alpha\}$, and by $g(\alpha) = \alpha$.
By construction, $g$ is bLSP.

As $\alpha$ does not occur in words of $X$ 
and as $\bw$ begins with $\alpha$,
$\bw$ has a unique decomposition over $\alpha X \cup \{\alpha\}$.
If this decomposition holds over $\alpha X$, set $Y = \alpha X$.
Otherwise set $Y = \alpha X \cup \{\alpha\}$.
Observe that $Y \subseteq \{ u_\beta \mid \beta \in {\rm alph}(\bw)\}$.
Let $f$ be the restriction of $g$ to the set $\{ \beta \mid u_\beta \in Y\}$: $f \in S_{\rm R-bLSP}$.
By construction $\bw = f(\bw')$ for a word $\bw'$ and $f \in S_{\rm R-bLSP}({\rm alph}(\bw'))$.

Assume $\bw = h(\bw'')$ with $h$ an R-bLSP morphism and $\bw''$ an infinite word. 
Let $A' = {\rm alph}(\bw')$ and $A'' = {\rm alph}(\bw'')$.
By the definition of R-bLSP morphisms, all images of letters by $f$ or $h$ begin with the same letter $\alpha$ 
(which is also the first letter of $\bw$).
Still by the definition of LSP morphisms, this letter $\alpha$ occurs only once in each image of letters by $f$ and $h$.
Positions of the letter $\alpha$ in $\bw$ determine the beginnings of images of letters by $f$ and $h$.
This implies that $f(A') = h(A'')$.
By Property~\ref{prop:bLSP properties}(\ref{prop:last}),
for letters $\beta$ and $\gamma$, $\last(f(\beta)) = \gamma$ or $\last(h(\beta)) = \gamma$ implies $\beta = \gamma$. 
Hence $f = h$ and $\bw' = \bw''$.
\end{proof}

\begin{lemma}
\label{lemma1}
For any R-bLSP morphism $f$ and any infinite word $\bw$, if $f(\bw)$ is LSP then $\bw$ is LSP.
\end{lemma}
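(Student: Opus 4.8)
The plan is to take an arbitrary left special factor $u$ of $\bw$ and show that it is a prefix of $\bw$. I would use repeatedly the following features of an R-bLSP morphism $f$ (all immediate from Property~\ref{prop:bLSP properties} and the definitions): there is a letter $\alpha = \first(f)$ that begins every image $f(\beta)$ and occurs exactly once in it; $\last(f(\beta)) = \beta$ for every letter $\beta$; and $f$ is injective on letters. Writing $\bw = w_0w_1w_2\cdots$ as a sequence of letters, these facts make the decomposition $f(\bw) = f(w_0)f(w_1)\cdots$ ``recognizable'': the occurrences of $\alpha$ in $f(\bw)$ are exactly the positions $p_i = |f(w_0\cdots w_{i-1})|$ at which the successive blocks $f(w_i)$ begin. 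If $u = \varepsilon$ there is nothing to do, so suppose $u = u_0\cdots u_{k-1}$ with $k \ge 1$ and fix distinct letters $a,b$ with $au$ and $bu$ factors of $\bw$.

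First, I would push the left-specialness of $u$ through $f$. Since $au$ and $bu$ are factors of $\bw$, the words $f(a)f(u)$ and $f(b)f(u)$ are factors of $f(\bw)$, and the letters immediately preceding these two copies of $f(u)$ are $\last(f(a)) = a$ and $\last(f(b)) = b$, which differ. Hence $f(u)$ is a left special factor of $f(\bw)$, and since $f(\bw)$ is LSP, $f(u)$ is a prefix of $f(\bw)$.

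Next I would synchronize the two block decompositions. The prefix $f(u) = f(u_0)\cdots f(u_{k-1})$ of $f(\bw)$ contains exactly $k$ occurrences of $\alpha$, so its $\alpha$-positions (the block starts inside $f(u)$) coincide with the first $k$ values $p_0 < \cdots < p_{k-1}$. Comparing the blocks then forces $f(u_j) = f(w_j)$, hence $u_j = w_j$, for $0 \le j \le k-2$, and forces $f(u_{k-1})$ to be a prefix of the block $f(w_{k-1})$. If $f(u_{k-1}) = f(w_{k-1})$, then $u_{k-1} = w_{k-1}$ and $u = w_0\cdots w_{k-1}$ is a prefix of $\bw$, as desired.

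The remaining case — $f(u_{k-1})$ a proper prefix of $f(w_{k-1})$ — is the real obstacle, and I would dispose of it by producing a forbidden left special factor of $f(\bw)$. Let $d$ be the letter of $f(w_{k-1})$ immediately following $f(u_{k-1})$; since $f(u_{k-1})$ is nonempty and $\alpha$ occurs only at the start of $f(w_{k-1})$, we get $d \neq \alpha$, and the length-$(|f(u)|+1)$ prefix of $f(\bw)$ is $f(u)d$. On the other hand, $u$ being left special means it occurs in $\bw$ preceded by $a$ and preceded by $b$; at any position $m \ge 1$ where $u$ occurs, the block decomposition shows that $f(u)$ occurs in $f(\bw)$ at $p_m$, immediately followed by $\first(f(w_{m+k})) = \alpha$ and immediately preceded by $\last(f(w_{m-1})) = w_{m-1}$. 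Using the two occurrences witnessing the left-specialness of $u$, both $af(u)\alpha$ and $bf(u)\alpha$ are factors of $f(\bw)$, so $f(u)\alpha$ is a left special factor of $f(\bw)$; but it is not a prefix of $f(\bw)$, since that prefix is $f(u)d$ with $d \neq \alpha$. This contradicts the LSP property of $f(\bw)$, ruling out the last case and finishing the proof. The points needing care are the recognizability bookkeeping (that occurrences of $\alpha$ pinpoint block boundaries, using that $\alpha$ appears exactly once per block) and checking that the left-special witnesses $a,b$ genuinely reappear in $f(\bw)$ as the letters $\last(f(w_{m-1}))$.
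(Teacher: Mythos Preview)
Your proof is correct but follows a different route from the paper's. The paper argues by contrapositive: assuming $\bw$ is not LSP, it picks a \emph{minimal-length} left special factor that is not a prefix. Minimality immediately yields a word $u$ and letters $a\neq b$, $\beta\neq\gamma$ with $ua$ a prefix of $\bw$ and $\beta ub$, $\gamma ub$ factors of $\bw$; one then pushes this forward through $f$ and compares $f(a)\alpha$ with $f(b)\alpha$ to locate a left special non-prefix of $f(\bw)$. Your argument instead takes an arbitrary left special factor $u$, first shows $f(u)$ is left special and hence a prefix of $f(\bw)$, and then runs a block-synchronization via the $\alpha$-positions to pull this back to $u$ being a prefix of $\bw$; the residual case where $f(u_{k-1})$ is a proper prefix of $f(w_{k-1})$ is killed by exhibiting $f(u)\alpha$ as a left special non-prefix. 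The paper's minimal-counterexample trick buys a shorter, structure-rich setup with no case split; your approach trades that for a more explicit, direct argument that makes the recognizability (the $\alpha$-marking of block boundaries) do all the work. Both hinge on the same three facts about R-bLSP morphisms: $\alpha$ occurs exactly once per image, $\last(f(\beta))=\beta$, and injectivity.
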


\begin{proof}
Assume by contradiction that $\bw$ is not LSP. 
This means that $\bw$ has (at least) one left special factor that is not one of its prefixes.
Considering such a factor of minimal length,
there exist a word $u$ and letters $a$, $b$, $\beta$, $\gamma$ such that
$a \neq b$, $\beta \neq \gamma$,
$ua$ is a prefix of $\bw$, 
$\beta ub$ and $\gamma ub$ are factors of $\bw$.
As $f$ is an R-bLSP morphism, 
there exists a unique letter $\alpha$ which is the first letter of all non-empty images by $f$.
The word $f(u)f(a)\alpha$ is a prefix of $f(\bw)$.
Moreover by Property~\ref{prop:bLSP properties}(\ref{prop:last}), 
the words $\beta f(u)f(b)\alpha$ and $\gamma f(u)f(b)\alpha$
are factors of $\bw$ (here the fact that $\bw$ is infinite is useful: each factor is followed by a letter whose image begins with $\alpha$).
As $f(a) \neq f(b)$ and as the letter $\alpha$
occurs only as a prefix in $f(a)$ and $f(b)$, 
$f(a)\alpha$ is not a prefix of $f(b)\alpha$ and, conversely, 
$f(b)\alpha$ is not a prefix of $f(a)\alpha$.
Hence there exist a word $v$ and letters $a'$, $b'$ such that $a' \neq b'$, 
$va'$ and $vb'$ are respectively prefixes of $f(a)\alpha$ and $f(b)\alpha$.
It follows that $f(u)va'$ is a prefix of $f(\bw)$ while
$\beta f(u)vb'$ and $\gamma f(u)vb'$
are factors of $f(\bw)$: $f(\bw)$ is not LSP, a contradiction.
\end{proof}

Observe that Lemma~\ref{lemma1} does not hold for finite words. For instance the word $baa$ is not LSP while its image $abaa$ by the morphism $[a, ab]$ is LSP. 

\section{\label{sec:fragility}Fragilities of infinite LSP words}

The converse of Lemma~\ref{lemma1} is false: there exist an infinite LSP word $\bw$ and a bLSP morphism $f$ such that $f(\bw)$ is not LSP.

\begin{example}
\label{Exemple3}
\rm
Let $\bF$ be the well-known Fibonacci word (the fixed point of the endomorphism $[ab, a]$), 
and let $g = \lambda_{acb} = [a, acb, ac]$. 
The word $g^2(\bF)$ begins with the word $g^2(ab) = g(aacb) = aaacacb$ that contains the factor $ac$ which is left special but not a prefix of the word.
Hence the word $g^2(\bF)$ is not LSP while $\bF$ is LSP and $g$ is bLSP (actually one can prove, using Lemma~\ref{when (a, b, c)-breaking morphisms appear} below, that $g(\bF)$ is LSP). 
\end{example}

In what follows, we introduce some properties of LSP words and morphisms that explain in which context a (breaking) R-bLSP morphism can map a (fragile) infinite LSP word on a non-LSP word.

\begin{definition}
\label{def_fragility}\rm
Let $a, b, c$ be three pairwise distinct letters. 
An infinite word $\bw$ is $(a, b, c)$-\textit{fragile} 
if there exist a word $u$ and distinct letters $\beta$ and $\gamma$
such that the word $ua$ is a prefix of $\bw$ and the words $\alpha u b$ 
and $\beta u c$ are factors of $\bw$. The word $u$ is called an $(a, b, c)$-\textit{fragility} of $\bw$.
We will also say that, when we need letters $\alpha$ and $\beta$, $\bw$ is $(a, b, c, \beta, \gamma)$-fragile and the word $u$ is an $(a, b, c, \beta, \gamma)$-\textit{fragility} of $\bw$.
\end{definition}

For instance, the empty word $\varepsilon$ is an $(a, b, c, c, a)$-fragility of $g(\bF)$: $\varepsilon a$ is a prefix of $g(\bF) = aacb\cdots$ while $c\varepsilon b$ and $a\varepsilon c$ are factors of  $g(\bF)$.
More generally any factor $abc$ or $acb$ in an infinite word
produces an $(a,b,c)$-fragility. 
One can also observe that, by symmetry of the definition, any $(a, b, c)$-fragile word is also $(a, c, b)$-fragile. Finally let us note that no fragility exists in words over two letters (as the definition needs three pairwise different letters).

The main idea of introducing the previous notion is that for any $(a, b, c)$-fragile LSP word $\bw$, there exists a bLSP morphism such that $f(\bw)$ is not LSP.
For instance, if $u, \alpha, \beta, \bw$ are as in Definition~\ref{def_fragility} and if $g = \lambda_{acb} = [a, acb, ac]$, the word $g(u)aa$ is a prefix of $g(\bw)$ whereas the words $\alpha g(u)acb$ and $\beta g(u)ac$ are factors of $g(\bw)$, so that $g(\bw)$ is not LSP since $g(u)ac$ is left special but not a prefix of $g(\bw)$.

\begin{definition}\rm
Let $a, b, c$ be three pairwise distinct letters.
A morphism $f$ is LSP $(a, b, c)$-\textit{breaking}, if for all $(a, b, c)$-fragile LSP word $w$, $f(w)$ is not LSP.
\end{definition}

For instance, the morphism $\lambda_{acb} = [a, acb, ac]$ is $(a, b, c)$-breaking.

\begin{lemma}
\label{when (a, b, c)-breaking morphisms appear}
Let $\bw$ be an infinite LSP word and let $f$ be an R-bLSP morphism defined on ${\rm alph}(\bw)$.
The following assertions are equivalent:
\begin{enumerate}
\item The word $f(\bw)$ is not LSP;
\item There exist some pairwise distinct letters $a, b, c$ such that $\bw$ is $(a, b, c)$-fragile and 
the longest common prefix of $f(b)$ and $f(c)$ is  strictly longer than the longest common prefix of $f(a)$ and $f(b)$;
\item There exist some pairwise distinct letters $a, b, c$ in $\bw$ such that $\bw$ is $(a, b, c)$-fragile 
and $f$ is LSP $(a, b, c)$-breaking.
\end{enumerate}
\end{lemma}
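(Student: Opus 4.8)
The plan is to establish the three-way equivalence by proving the cyclic chain of implications $(3)\Rightarrow(1)\Rightarrow(2)\Rightarrow(3)$, since some of these arcs are essentially already done in the preceding material. The implication $(3)\Rightarrow(1)$ is immediate from the definition of an LSP $(a,b,c)$-breaking morphism: if such a word $\bw$ is $(a,b,c)$-fragile and $f$ is LSP $(a,b,c)$-breaking, then by definition $f(\bw)$ is not LSP. For $(2)\Rightarrow(3)$ I would reuse the construction sketched just before the lemma: given an $(a,b,c)$-fragility $u$ of an LSP word, with $ua$ a prefix and $\alpha u b$, $\beta u c$ factors, I would show directly that the common-prefix condition in (2) forces $f(u)va'$ to be a prefix of $f(\bw)$ and $\beta f(u)vb'$, $\gamma f(u)vb'$ to be factors of $f(\bw)$ for a suitable splitting $v$, $a'\neq b'$, where $v$ is the longest common prefix of $f(b)$ and $f(c)$; the hypothesis that this is strictly longer than the longest common prefix of $f(a)$ and $f(b)$ is exactly what guarantees $f(u)va'$ genuinely extends into $f(a)$ while the factors branch off inside $f(b)=f(c)$'s common part, so $f(u)v$ is left special but not a prefix. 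Since this argument used only that $\bw$ is an arbitrary $(a,b,c)$-fragile LSP word, it shows $f$ is LSP $(a,b,c)$-breaking, giving (3); I would also need to check $a,b,c\in{\rm alph}(\bw)$, which follows since $u a$, $\alpha u b$, $\beta u c$ are factors.

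The substantive arc is $(1)\Rightarrow(2)$, and this is where I expect the real work. Here I would start, as in the proof of Lemma~\ref{lemma1}, from a shortest left special factor of $f(\bw)$ that is not a prefix: there are a word $p$ and letters $x\neq y$, $\delta\neq\eta$ with $px$ a prefix of $f(\bw)$ and $\delta p y$, $\eta p y$ factors of $f(\bw)$. The key structural fact to exploit is that $f$ is R-bLSP, so $f(\bw)$ has a unique parsing into blocks $f(\cdot)$ delimited by occurrences of $\alpha=\first(f)$ (each image contains $\alpha$ exactly once, at its start, by Property~\ref{prop:bLSP properties}). I would locate the position of the left special branching relative to this block structure. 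Using the suffix-code property and injectivity from Property~\ref{prop:bLSP properties}, together with the fact (also in that property) that $f$ determines a letter from the letter preceding it inside an image, I would argue that the branching of $\delta p y$ versus $px$ cannot occur strictly inside a single block $f(d)$ — a left-special factor properly inside one image would, by desubstitution and minimality, reflect a shorter left special non-prefix, ultimately forcing it down to a left special factor of $\bw$ contradicting that $\bw$ is LSP. Hence the branching straddles a block boundary, and I can desubstitute: the prefix $p$ corresponds to $f(u)$ followed by a proper prefix of some $f(a)$ (with $ua$ a prefix of $\bw$), while the two occurrences correspond to $\alpha u b$ and $\beta u c$ type factors of $\bw$ — yielding that $\bw$ is $(a,b,c)$-fragile. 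Finally, tracing where $px$ and $\delta p y$ split inside the images $f(a)$, $f(b)=f(c)$ gives precisely the inequality on longest common prefixes in (2).

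The main obstacle, as usual with S-adic/desubstitution arguments, will be the careful bookkeeping in $(1)\Rightarrow(2)$: I must handle the case analysis of exactly where the left-special branch point sits relative to the $\alpha$-delimited blocks, rule out the ``inside one image'' case cleanly (this is really a minimality-plus-cancellation argument, and one has to be attentive that $f$ being merely R-bLSP, not bLSP, does not break the cancellation lemmas — but Property~\ref{prop:bLSP properties} is stated for R-bLSP morphisms, so it goes through), and correctly match up the two factor occurrences $\delta p y$, $\eta p y$ in $f(\bw)$ with two occurrences in $\bw$ of the form $\alpha u b$ and $\beta u c$. One subtlety worth isolating: when $b$ and $c$ differ, $f(b)$ and $f(c)$ are distinct images, so their longest common prefix is a proper prefix of each; the claim in (2) is that passing to the images \emph{strictly lengthens} the common prefix compared to $f(a)$ vs. $f(b)$, and this is exactly the mechanism that creates a new left special non-prefix in $f(\bw)$ even though $\bw$ was LSP — so the equivalence is really capturing ``$f$ over-aligns $b$ and $c$ past where $a$ and $b$ diverge.'' Writing this direction requires no new tools beyond Property~\ref{prop:bLSP properties} and the infiniteness of $\bw$ (each factor is followed by a letter whose image starts with $\alpha$, as already used in Lemma~\ref{lemma1}), but it demands disciplined notation to keep the three letters $a,b,c$, the two context letters, and the two layers (before and after $f$) straight.
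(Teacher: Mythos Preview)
Your proposal is correct and follows essentially the same cycle $(3)\Rightarrow(1)\Rightarrow(2)\Rightarrow(3)$ as the paper, with the same mechanism in each arc. One streamlining worth noting: in $(1)\Rightarrow(2)$ the paper does not invoke minimality or a separate ``inside one image'' case analysis; it simply takes any left special non-prefix $V$, lets $v$ be its longest common prefix with $f(\bw)$, writes $v=f(u)\alpha x$ via the last occurrence of $\alpha$, reads off letters $a,b,c$ following $u$, and then uses the LSP property of $\bw$ \emph{directly} to force $b\neq c$ (otherwise $ub$ would be a left special non-prefix of $\bw$)---this is the single place LSP of $\bw$ enters, and it replaces your minimality discussion.
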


\begin{proof}
$1 \Rightarrow 2$.
Assume first that $f(\bw)$ is not LSP. 
There exists a left special factor $V$ of $f(\bw)$ which is not a prefix of $f(\bw)$.
Let $v$ be the longest common prefix of $V$ and $f(\bw)$.
Let $a', b'$ be the letters
such that $va'$ is a prefix of $f(\bw)$ 
and $vb'$ is a prefix of $V$: by construction $a' \neq b'$.
Let also $\beta$, $\gamma$ be distinct letters such that $\beta V$ and $\gamma V$ are factors of $f(\bw)$ (also $\beta v b'$ and $\gamma v b'$ are factors of $f(\bw)$).

By the definition of R-bLSP morphisms, 
the letter $\alpha = \first(f)$ is
the unique letter that can be left special in $f(\bw)$.
As $\alpha$ is the first letter of $f(\bw)$, we have $v \neq \varepsilon$ and $\first(v) = \first(f)$.
As $\alpha$ occurs exactly at the first position in all images of letters, 
occurrences of $\alpha$ mark the beginning of images of letters in $f(\bw)$.
Considering the last occurrence of $\alpha$ in $v$,
we can write $v = f(u)\alpha x$ with $|x|_\alpha = 0$.
Let $a$, $b$, $c$ be letters such that:
\begin{itemize}
\item $ua$ is a prefix of $\bw$, and, $va' = f(u)\alpha x a'$ is a prefix of $f(ua)$ when $a' \neq \alpha$ or $v = f(ua)$ when $a' = \alpha$;
\item $\beta ub$ is a factor of $\bw$, and, $\beta vb'$ is a prefix of $\beta f(ub)$ when $b' \neq \alpha$ or $v = f(ub)$ when $b' = \alpha$;
\item $\gamma uc$ is a factor of $\bw$, and, $\gamma vb'$ is a prefix of $\gamma f(uc)$  when $b' \neq \alpha$ or $v = f(uc)$ when $b' = \alpha$.
\end{itemize}

As $a' \neq b'$, we have $a \neq b$ and $a \neq c$.
Observe that until now we did not use the fact that $\bw$ is LSP.
This implies $b \neq c$ (and so $b' \neq \alpha$).
Indeed otherwise $ub$ would be a left special factor of $\bw$ without being one of its prefixes: a contradiction with the fact that $\bw$ is an LSP word.
Thus $\bw$ is $(a, b, c)$-fragile.

This ends the proof of Part $1 \Rightarrow 2$ as $\alpha xb'$ is a common prefix of $f(b)$ and $f(c)$ and $\alpha x$ is the longest common prefix of $f(a)$ and $f(b)$.

$2 \Rightarrow 3$.
By hypothesis, $f(a) = v \delta w_1$, $f(b) = v \gamma w_2$ and $f(c) = v \gamma w_3$ for letters $\delta, \gamma$ 
and words $w_1$, $w_2$ and $w_3$ with $\delta \neq \gamma$.
Let $\bw'$ be any LSP $(a, b, c)$-fragile infinite word. 
Let $u'$, $\beta'$ and $\gamma'$ be the word and letters such that 
$u'a$ is a prefix of $\bw'$ while $\beta' u' b$ and $\gamma' u' c$ are factors of $\bw'$ with $\beta' \neq \gamma'$.
The word $f(\bw')$ has $f(u')v\delta$ as a prefix and words $\beta'f(u')v\gamma$ and $\gamma'f(u')v \gamma$ as factors.
As $\delta \neq \gamma$, the word $f(\bw')$ is not LSP. The morphism $f$ is LSP $(a,b,c)$-breaking.

$3 \Rightarrow 1$.
This follows the definition of $(a, b, c)$-fragile words and LSP $(a, b, c)$-breaking morphisms.
\end{proof}

Observe that we have also proved the next result.

\begin{corollary}
\label{carac LSP breaking morphisms}
An R-bLSP morphism is LSP $(a, b, c)$-breaking for pairwise distinct letters $a$, $b$ and $c$ if and only if the longest common prefix of $f(b)$ and $f(c)$ is strictly longer than the longest common prefix of $f(a)$ and $f(b)$.
\end{corollary}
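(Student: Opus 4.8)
The plan is to read both implications off Lemma~\ref{when (a, b, c)-breaking morphisms appear}, as the sentence introducing the corollary already suggests; the only genuinely new ingredient is a single rigid witness word. Concretely, I would fix once and for all the infinite periodic word $\bw_0 = (aacb)^\omega$ written over the three given letters $a$, $b$, $c$, and first establish three facts about it by a direct inspection of its (finitely many) factors: (i) its only left special factors are $\varepsilon$ and the letter $a$, both of which are prefixes, so $\bw_0$ is LSP; (ii) $\bw_0$ is $(a,b,c,c,a)$-fragile with the empty fragility, since $\varepsilon a$ is a prefix while $c\varepsilon b$ and $a\varepsilon c$ are factors; and (iii) $\bw_0$ is fragile for no ordering of $\{a,b,c\}$ other than $(a,b,c)$ and $(a,c,b)$, because in $\bw_0$ every occurrence of $b$ is preceded by $c$, every occurrence of $c$ is followed by $b$, and every occurrence of $aa$ is followed by $c$; these three facts rule out any fragility whose first letter is $b$ or $c$.

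For the implication from the common-prefix condition to the breaking property, I would take an arbitrary LSP $(a,b,c)$-fragile word $\bw$ and replace $f$ by its restriction $g = f_{|{\rm alph}(\bw)}$, which is again R-bLSP, is defined on ${\rm alph}(\bw)$, and agrees with $f$ on $a$, $b$ and $c$. Then $\bw$ and $g$ satisfy assertion~2 of Lemma~\ref{when (a, b, c)-breaking morphisms appear}, so assertion~1 gives that $f(\bw) = g(\bw)$ is not LSP; since $\bw$ was arbitrary, $f$ is LSP $(a,b,c)$-breaking. This is just a restatement of the $2\Rightarrow 3$ step in the proof of the lemma.

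For the converse, assume $f$ is LSP $(a,b,c)$-breaking. Applying this to $\bw_0$ (whose alphabet $\{a,b,c\}$ lies in the domain of $f$) shows that $f(\bw_0)$ is not LSP, so by the implication $1\Rightarrow 2$ of Lemma~\ref{when (a, b, c)-breaking morphisms appear}, applied to $\bw_0$ and $f_{|\{a,b,c\}}$, there are pairwise distinct letters $a'$, $b'$, $c'$ occurring in $\bw_0$ such that $\bw_0$ is $(a',b',c')$-fragile and the longest common prefix of $f(b')$ and $f(c')$ is strictly longer than the longest common prefix of $f(a')$ and $f(b')$. Since $\{a',b',c'\} = {\rm alph}(\bw_0) = \{a,b,c\}$, fact~(iii) forces $(a',b',c') \in \{(a,b,c),(a,c,b)\}$. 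If $(a',b',c') = (a,b,c)$ we are done; if $(a',b',c') = (a,c,b)$ we only know that the longest common prefix of $f(b)$ and $f(c)$ is strictly longer than that of $f(a)$ and $f(c)$, and I would conclude with the elementary remark that, among the lengths of the longest common prefixes of the three pairs drawn from $f(a),f(b),f(c)$, the two smallest coincide; hence if the prefix of $f(b)$ and $f(c)$ is the strictly longest, then it is strictly longer than the longest common prefix of $f(a)$ and $f(b)$ as well.

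The step I expect to be the main obstacle is the converse direction: Lemma~\ref{when (a, b, c)-breaking morphisms appear} only produces \emph{some} fragile triple, not necessarily the prescribed $(a,b,c)$, so the real work is to have a witness word rigid enough to pin the first letter of that triple to $a$ (this is fact~(iii)) and then to absorb the remaining interchange of $b$ and $c$ with the prefix-length remark. The only genuine computation is the factor analysis of $\bw_0$ supporting facts~(i)--(iii).
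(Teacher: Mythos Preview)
Your proof is correct and follows exactly the route the paper has in mind: both directions are read off the proof of Lemma~\ref{when (a, b, c)-breaking morphisms appear}, the $2\Rightarrow 3$ step giving the forward implication verbatim, and the $1\Rightarrow 2$ step, applied to a concrete $(a,b,c)$-fragile LSP witness, giving the converse. The paper compresses this into the single sentence ``Observe that we have also proved the next result'', so your explicit witness $(aacb)^\omega$ together with the two-smallest-lcp remark simply makes rigorous the step the paper leaves to the reader; the word $g(\bF)=\lambda_{acb}(\bF)$ already discussed after Definition~\ref{def_fragility} would have served equally well.
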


To end this section let us mention that in the binary case the converses of Lemma~\ref{lemma1} and Proposition~\ref{characLSPwords} hold.
Indeed as shown in Lemma~\ref{when (a, b, c)-breaking morphisms appear}, if $\bw$ is LSP, $f$ is an R-bLSP morphism and $f(\bw)$ is not LSP, then $\bw$ contains at least three distinct letters. 
Recall that the elements of $\SRbLSP(\{a, b\})$ are the morphisms $L_a$, $L_b$ and their restrictions to alphabets $\{a\}$ and $\{b\}$.

\begin{corollary}
\label{C:carac1_binary}
If $\bw$ is an LSP infinite word over $\{a, b\}$ and if $f$ is defined over ${\rm alph}(\bw)$ and
belongs to the set $\SRbLSP(\{a, b\}$ then $f(\bw)$ is also LSP. 
Consequently an infinite word over $\{a, b\}$ is LSP 
if and only if it is $\{L_a, L_b \}$-adic.
\end{corollary}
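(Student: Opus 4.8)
The plan is to prove the two assertions in turn: the first directly from Lemma~\ref{when (a, b, c)-breaking morphisms appear}, and then to deduce the equivalence from it together with Lemma~\ref{lemma1} (for one direction) and an explicit desubstitution, cross-checked against Proposition~\ref{characLSPwords} (for the other). For the first assertion I would argue by contradiction: if $\bw$ is LSP over $\{a,b\}$, $f\in\SRbLSP(\{a,b\})$ is defined on ${\rm alph}(\bw)$, and $f(\bw)$ is not LSP, then implication $1\Rightarrow 2$ of Lemma~\ref{when (a, b, c)-breaking morphisms appear} yields pairwise distinct letters $x,y,z$ such that $\bw$ is $(x,y,z)$-fragile; but by Definition~\ref{def_fragility} such a word has $x$, $y$ and $z$ among its letters, which is impossible since ${\rm alph}(\bw)\subseteq\{a,b\}$. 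Hence $f(\bw)$ is LSP (and again over $\{a,b\}$, the images of letters under $f$ lying in $\{a,b\}^*$).

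For the direct implication of the equivalence I would desubstitute $\bw$ explicitly. If ${\rm alph}(\bw)=\{c\}$ then $\bw=c^\omega=L_c(c^\omega)$, so $\bw$ is $\{L_a,L_b\}$-adic with the constant directive word $(L_c,L_c,\dots)$. Otherwise ${\rm alph}(\bw)=\{a,b\}$; writing $\alpha=\first(\bw)$ and $\bar\alpha$ for the other letter, since $\bar\alpha$ is not a prefix of $\bw$ and $\bw$ is LSP, the letter $\bar\alpha$ is not a left special factor of $\bw$ and therefore has a unique left extension, which — its first occurrence being preceded by $\alpha$ — is $\alpha$; thus every occurrence of $\bar\alpha$ in $\bw$ is preceded by $\alpha$, so $\bw$ factorizes uniquely over $\{\alpha,\alpha\bar\alpha\}=L_\alpha(\{\alpha,\bar\alpha\})$ and $\bw=L_\alpha(\bw')$ for a unique infinite word $\bw'$, which is LSP by Lemma~\ref{lemma1} and satisfies ${\rm alph}(\bw')\subseteq\{a,b\}$. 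Iterating this step (using $L_a$ or $L_b$ in the binary case and $c^\omega=L_c(c^\omega)$ in the unary case) yields a directive word $(f_n)_{n\ge1}\in\{L_a,L_b\}^\omega$ and infinite words $(\bw_n)_{n\ge1}$ with $\bw_1=\bw$ and $\bw_n=f_n(\bw_{n+1})$, so $\bw$ is $\{L_a,L_b\}$-adic. (Equivalently one may refine the unique fitted $\SRbLSP$-directive word provided by Proposition~\ref{characLSPwords}, using that the alphabet only shrinks along a desubstitution and that the elements of $\SRbLSP(\{a,b\})$ with domain $\{a,b\}$ are exactly $L_a$ and $L_b$.)

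For the converse implication, take $\bw$ to be $\{L_a,L_b\}$-adic with directive word $(f_n)_{n\ge1}$ and infinite words $(\bw_n)_{n\ge1}$, $\bw_1=\bw$, $\bw_n=f_n(\bw_{n+1})$; each $\bw_n$ is a word over a subalphabet of $\{a,b\}$. Applying the first assertion to the restriction of $f_n$ to ${\rm alph}(\bw_{n+1})$, which lies in $\SRbLSP(\{a,b\})$, gives ``$\bw_{n+1}$ LSP $\Rightarrow\bw_n$ LSP'', while Lemma~\ref{lemma1} gives the reverse implication; hence all the $\bw_n$ are LSP or none is, and it suffices to produce one LSP among them. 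If some $\bw_N$ is over a single letter it is trivially LSP, which settles this case. Otherwise every $\bw_n$ contains both letters, and then neither $L_a$ nor $L_b$ appears only finitely often among the $f_n$ — e.g.\ if $f_n=L_b$ for all $n\ge M$, then $\bw_M=L_b^{\,k}(\bw_{M+k})$ begins with $b^k$ for every $k$, forcing $\bw_M=b^\omega$, a contradiction — so $|f_1\cdots f_n(x)|\to\infty$ for each letter $x$, $\bw=\lim_n f_1\cdots f_n(\first(\bw_{n+1}))$, and $\bw$ is a non-periodic $\{\tau_a,\tau_b\}$-adic word, that is, a standard Sturmian word by \cite{BertheHoltonZamboni2006} (see Example~\ref{Exemple1}); standard Sturmian words are LSP, since for each length they have a unique left special factor, namely the prefix of that length.

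The hard part is precisely this last case of the converse, where no unary desubstituted word is ever reached so the equivalence of LSP statuses cannot be bootstrapped from a base point: one is forced to invoke the known identification of $\{\tau_a,\tau_b\}$-adic binary words with standard Sturmian words and the classical fact that the latter are LSP. If one prefers a self-contained argument, one can instead assume a minimal-length left special factor of $\bw$ that is not a prefix and — using $|f_1\cdots f_n(x)|\to\infty$ — pull it back through $f_1,\dots,f_n$ for large $n$ until it contradicts the first assertion; this is essentially a binary reproof of the converse of Proposition~\ref{characLSPwords}.
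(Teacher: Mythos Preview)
Your argument is correct. The first assertion and the forward direction of the equivalence are handled exactly as in the paper (the paper's proof is the single sentence before the corollary, noting that Lemma~\ref{when (a, b, c)-breaking morphisms appear} forces three distinct letters; the forward direction then comes directly from Proposition~\ref{characLSPwords} together with $\SbLSP(\{a,b\})=\{L_a,L_b\}$, so your explicit desubstitution, while correct, is more than is needed).

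Where you genuinely diverge is the backward direction of the equivalence. The paper gives no explicit argument for it; the intended self-contained route---and the one that reappears, in full generality, as Step~1 of the proof of Theorem~\ref{th_carac}---is: assume there is a non-LSP $\{L_a,L_b\}$-adic word, pick one whose shortest non-prefix left special factor $u$ has minimal length among all such words, and run the decomposition $u=f_1(v)\alpha x$; minimality forces $v$ to be a prefix of $\bw_2$, and the remaining analysis yields an $(a,b,c)$-fragility in $\bw_2$, which is impossible on two letters. Your main route instead reduces to external Sturmian theory (both morphisms occur infinitely often $\Rightarrow$ standard Sturmian $\Rightarrow$ LSP). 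This is legitimate and buys you a quick finish once you accept the cited characterization, but it imports more than the paper needs. Your closing sketch of the self-contained alternative is the right idea, though the phrasing ``pull it back \ldots\ until it contradicts the first assertion'' is slightly off: the contradiction is not with the first assertion but with the impossibility of a fragility on two letters, via the minimality argument just described. One small wording point in your Sturmian branch: you assert $\bw$ is ``non-periodic'' before invoking Example~\ref{Exemple1}; what you have actually shown is that both morphisms occur infinitely often, which is the usual hypothesis under which the cited result identifies $\bw$ as standard Sturmian (and hence non-periodic), so the logic is fine but the order of the clauses is inverted.
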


\section{\label{sec:origin}Origin of fragilities}

Before characterizing infinite LSP words, we need to know how fragilities in an LSP word $\bw$ can appear. 
We will see that this depends only on the morphisms in the fitted $\SRbLSP$-adic directive word of $\bw$ (Theorem~\ref{th_carac}). At a first step, let $f$ be an R-bLSP morphism from $A^*$ to $B^*$ with $\alpha = \first(f)$. We examine the fragilities occurring in the image by $f$ of an LSP word.

\textit{New fragilities.} 
Assume that for some letters $b$, $c$, $\beta$, $\gamma$ 
with $\alpha \neq b \neq c \neq \alpha$ and $\beta \neq \gamma$, the words $\beta b$ and $\gamma c$ belong to $f(A)$, that is, are factors of images of letters by $f$. Then for any word $\bw$ containing all letters of $A$ (or at least the letters whose images contain $\beta b$ and $\gamma c$), the empty word is an $(\alpha, b, c, \beta, \gamma)$-fragility of $f(\bw)$. We say that $\varepsilon$ is an $(\alpha, b, c,  \beta, \gamma)$-fragility (or simply an $(\alpha, b, c)$-fragility) \textit{associated with $f$}. For instance, $(a, b, c)$ is a fragility associated with the morphism $\lambda_{abc} = [a, ab, abc]$.

\textit{Propagated fragilities.} 
Assume now that an infinite word $\bw$ over $A$ contains an $(a', b', c', \beta, \gamma)$-fragility $v$.
Observe that: 
$f(va')\alpha$ is a prefix of $f(\bw)$;
$\beta f(vb')\alpha$ and $\gamma f(vc')\alpha$ are factors of $f(\bw)$; 
the word $f(v)\alpha$ is a prefix of the three words $f(va')\alpha$, $f(vb')\alpha$
and $f(vc')\alpha$. 
Assume there exist a common prefix $u$ of these words and pairwise distinct letters $a$, $b$ and $c$ 
such that $ua$ is a prefix of $f(va')\alpha$, 
$\beta ub$ is a prefix of $\beta f(vb')\alpha$
and $\gamma uc$ us a prefix of $\gamma f(vc')\alpha$. 
Then $u$ is an $(a, b, c, \beta, \gamma)$-fragility in $f(w)$. 
We say that this fragility is \textit{propagated} by $f$ from the $(a', b', c', \beta, \gamma)$-fragility $v$. 

Note that $|u| \geq |f(v)\alpha|$ and more precisely $u = f(v)\alpha u'$ for some word $u'$.
Observe that 
$\alpha u'a$ is a prefix of $f(a')\alpha$,
$\alpha u'b$ is a prefix of $f(b')\alpha$ and
$\alpha u'c$ is a prefix of $f(c')\alpha$.
This is an important fact as it shows that letters $a$, $b$ and $c$ depend only on $f$ and on letters $a'$, $b'$ and $c'$:
if a word contains an $(a', b', c')$-fragility then $f(\bw)$ contains an $(a, b, c)$-fragility. 
For instance, if $A = \{a, b, c\}$ and 
$\bw$ contains an $(a, b, c)$-fragility $u$, then this fragility is propagated by the morphism $f = [a, ab, ac]$.
Note that $|f(v)\alpha| \leq  |u|$ implies $|v| < |u|$. Hence the propagation of a fragility makes it strictly longer.

Some fragilities can also be not propagated.
For instance, the morphism $[a, ab, abc]$ does not propagate the $(b,a,c)$-fragilities.

\begin{lemma}
\label{origine fragilities}
Let $\bw$ be an infinite word (not necessarily LSP) 
and let $f$ be an R-bLSP morphism defined over ${\rm alph}(\bw)$.
Fragilities of $f(\bw)$ are exactly the fragilities associated with $f$ and the fragilities of $\bw$ propagated by $f$.
\end{lemma}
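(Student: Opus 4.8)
The plan is to prove Lemma~\ref{origine fragilities} by showing both inclusions of the claimed equality of sets of fragilities of $f(\bw)$. The ``easy'' inclusion is that every fragility associated with $f$ and every fragility of $\bw$ propagated by $f$ is indeed a fragility of $f(\bw)$; this is essentially already contained in the discussion preceding the lemma (``New fragilities'' and ``Propagated fragilities''), so I would only need to recall those two constructions: if $\beta b$ and $\gamma c$ belong to $f(A)$ with $b,c \neq \alpha$, $b \neq c$, $\beta \neq \gamma$, then $\varepsilon$ is an $(\alpha,b,c,\beta,\gamma)$-fragility of $f(\bw)$ (using that $\bw$, being infinite, contains the relevant letters — or rather that $f$ is defined on ${\rm alph}(\bw)$, so I should be slightly careful and note the letters whose images contain $\beta b$ and $\gamma c$ occur in $\bw$); and if $v$ is an $(a',b',c',\beta,\gamma)$-fragility of $\bw$ then the word $u = f(v)\alpha u'$ described in the ``Propagated fragilities'' paragraph is an $(a,b,c,\beta,\gamma)$-fragility of $f(\bw)$.

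The substantive direction is the converse: every fragility of $f(\bw)$ arises in one of these two ways. So I would start from an arbitrary $(a,b,c,\beta,\gamma)$-fragility $u$ of $f(\bw)$, meaning $ua$ is a prefix of $f(\bw)$ while $\beta u b$ and $\gamma u c$ are factors of $f(\bw)$, with $a,b,c$ pairwise distinct and $\beta \neq \gamma$. Since $f$ is R-bLSP, the letter $\alpha = \first(f)$ occurs exactly once, as the first letter, in each image $f(x)$, so occurrences of $\alpha$ in $f(\bw)$ mark precisely the left boundaries of the blocks $f(x)$ in the factorization of $f(\bw)$ induced by $f$. Using this synchronization, I would locate $u$ relative to these block boundaries: either $u$ ends at a block boundary (the ``new fragility'' case) or strictly inside a block (the ``propagated'' case). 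More precisely, $\beta u b$ and $\gamma u c$ are factors of $f(\bw)$; look at where $b$ and $c$ sit. If $ub$ and $uc$ (resp.\ $ua$) begin new blocks, i.e.\ $b = c = \alpha$ — but $b \neq c$ forbids both being $\alpha$, so at most one of $a,b,c$ equals $\alpha$. I would split on whether $u$ itself ends exactly at a block boundary: if yes, then $a$, $b$, $c$ are the second letters (after $\alpha$) of three images $f(a')$, $f(b')$, $f(c')$ — wait, more carefully, $a$ is the letter following the last $\alpha$ of $ua$, etc. The cleanest bookkeeping is: write the position just after $u$ inside its ambient block; let $\alpha x$ be the suffix of $u$ from its last occurrence of $\alpha$ (this last $\alpha$ exists because $\first(f(\bw)) = \alpha$ and $ua$ is a prefix, unless $u = \varepsilon$, a degenerate sub-case to treat separately and which lands in the ``associated'' case). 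Then $\alpha x a$, $\alpha x b$, $\alpha x c$ are prefixes of (the $\alpha$-suffixes of) three images $f(a') \alpha, f(b')\alpha, f(c')\alpha$ for letters $a', b', c'$ read off from $\bw$ at the corresponding positions — this is exactly the mechanism of Lemma~\ref{when (a, b, c)-breaking morphisms appear}, Part $1 \Rightarrow 2$, and I would reuse that argument. Now either $x = \varepsilon$, in which case $a,b,c$ are second letters of images and, extracting the factor $\beta b$, $\gamma c$ of $f(A)$, we get that $u = \varepsilon$-shifted-by-$f(v)\alpha$ reduces to either an associated fragility (if the ambient positions coincide, $v$ empty) or a propagated one; or $x \neq \varepsilon$, forcing $a', b', c'$ pairwise distinct (since $\alpha x a, \alpha x b, \alpha x c$ are distinct prefixes of $f(a'), f(b'), f(c')$ and by Property~\ref{prop:bLSP properties} an R-bLSP morphism separates letters), hence $v := $ (the word such that $f(v)\alpha$ is the block-prefix up to the relevant $\alpha$) together with $\beta, \gamma$ witnesses that $\bw$ is $(a',b',c',\beta,\gamma)$-fragile, and $u$ is the propagation of that fragility by $f$.

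The main obstacle I expect is the careful case analysis around which of $a, b, c$ (or their preimages $a', b', c'$) may equal $\alpha$, and correspondingly whether $u$ ends mid-block or at a block boundary — i.e.\ distinguishing cleanly the ``associated fragility'' situation from the ``propagated fragility'' situation, and handling the degenerate cases $u = \varepsilon$ or $x = \varepsilon$. The key technical levers making this manageable are: (i) $\alpha$ occurs exactly once, at the front, in every $f(x)$, giving unambiguous block synchronization of $f(\bw)$; (ii) Property~\ref{prop:bLSP properties} parts \ref{prop:last} and \ref{property1}, which guarantee that distinct letters have ``distinguishable'' images and that reading letters of $\bw$ off from positions in $f(\bw)$ is well-defined; and (iii) the observation already recorded in the ``Propagated fragilities'' paragraph that the triple $(a,b,c)$ is determined by $f$ and $(a',b',c')$, so that the correspondence between fragilities of $\bw$ and propagated fragilities of $f(\bw)$ is well-behaved. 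With these in hand the argument is a refinement of the $1 \Rightarrow 2$ part of Lemma~\ref{when (a, b, c)-breaking morphisms appear}, extended to work without assuming $\bw$ is LSP (so that I no longer get $b \neq c$ for free — but the lemma only asserts that fragilities correspond, and the definition of fragility already builds in $a,b,c$ pairwise distinct, so I simply carry that hypothesis through rather than deriving it).
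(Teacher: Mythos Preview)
Your approach is essentially the paper's own proof: treat $u = \varepsilon$ separately (yielding a fragility associated with $f$), and for $u \neq \varepsilon$ write $u = f(v)\alpha x$ via the last occurrence of $\alpha = \first(f)$, then read off letters $a', b', c'$ from $\bw$ so that $v$ is an $(a',b',c',\beta,\gamma)$-fragility of $\bw$ propagated by $f$ to $u$. The paper does not further split on whether $x = \varepsilon$ --- the argument for nonempty $u$ is uniform and always lands in the propagated case (associated fragilities are by definition the empty word), so that extra distinction in your outline, and the suggestion that the $x = \varepsilon$ sub-case might yield an associated fragility, is an unnecessary complication you can drop.
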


\begin{proof}
From the definitions given before the lemma, the fragilities associated with $f$ and the fragilities of $\bw$ propagated by $f$ are fragilities of $f(\bw)$.
Let $u$ be an $(a, b, c, \beta, \gamma)$-fragility of $f(\bw)$.

%\textit{(New fragilities)}
If $u = \varepsilon$, 
it follows from the definition of an $(a, b, c, \beta, \gamma)$-fragility 
that $a = \first(f(\bw))$ and $\beta b$, $\gamma c$ are factors of $f(\bw)$.
Now observe that, still by the same definition, $a \not\in \{b, c\}$.
Thus by the definition of R-bLSP morphisms, $a = \first(f)$ and $\beta b$, $\gamma c$ belong to ${\rm Fact}(f({\rm alph}(\bw)))$.
The fragility $u$ is associated with $f$.

%\textit{(Propagated fragilities)}
Assume from now on that $u$ is not empty.
Let $\alpha = \first(f)$.
Considering the last occurrence of $\alpha$ in $u$, 
observe that the word $u$ can be decomposed in a unique way 
as $u = f(v) \alpha x$ with $v$, $x$ words such that $|x|_\alpha = 0$.
As $u$ is an $(a, b, c, \beta, \gamma)$-fragility of $f(\bw)$, there exist words $w_1$, $w_2$ and $w_3$ such that:
\begin{itemize}
\itemsep0cm
\item $|w_1|_\alpha = |w_2|_\alpha = |w_3|_\alpha = 0$;
\item $f(v) \alpha x w_1 \alpha$ is a prefix of $f(\bw)$ and $a = \first(w_1\alpha)$;
\item $\beta f(v)\alpha x w_2 \alpha$ and $\gamma f(v) \alpha x w_3 \alpha$ are factors of $f(\bw)$ with $b = \first(w_2\alpha)$ and $c = \first(w_3 \alpha)$.
\end{itemize}
By the definition of an R-bLSP morphism, there exist letters $a'$, $b'$, $c'$ such that $f(a') = \alpha x w_1$, $f(b') = \alpha x w_2$, $f(c') = \alpha x w_3$. 
These letters $a'$, $b'$, $c'$ are pairwise distinct since letters $a = \first(w_1\alpha)$, $b = \first(w_2\alpha)$ and $c = \first(w_3 \alpha)$ are pairwise distinct.
Moreover $va'$ is a prefix of $\bw$ and words $\beta v b'$ and $\gamma v c'$ are factors of $\bw$ 
(remember that $\alpha$ marks the beginning of letters in $f(\bw)$ as 
$f$ is an R-bLSP morphism and as, for letters $x$ and $y$, last$(f(x)) = y$ implies $x = y$). 
Hence the word $v$ is an $(a', b', c', \beta, \gamma)$-fragility of $\bw$.
This fragility is propagated by $f$.
\end{proof}

\section{\label{sec:automaton}A first automaton}

Lemma~\ref{origine fragilities} shows that bLSP morphisms act locally on fragilities on the words. 
One can then construct an automaton to store the actions of these morphisms on fragilities of LSP words 
(one can do this for arbitrary words but this is not needed for our purpose). 
It is important to note that the exact alphabet on which a morphism is applied has a lot of importance. 
For instance, if $\bw$ is an infinite word over $\{a, b, c\}$ and 
if $f = \lambda_{abc} = [a, ab, abc]$, 
$f(\bw)$ contains an $(a, b, c)$-fragility if and only if $c$ occurs in ${\rm alph}(\bw)$.
It is also important to note that the automaton will not follow the fragilities themselves but only the $3$-tuples of letters $(a, b, c)$ for which a fragility occurs.
Given an infinite word $\bw$, let $\frag(\bw)$ be the set of $3$-tuples $(a, b, c)$ of letters
such that $\bw$ contains an $(a, b, c)$-fragility.

Let ${\cal A}_1 = (\SRbLSP, Q_1, \Delta_1)$ be the automaton without initial and final states defined by:
\begin{itemize}
\item The alphabet of ${\cal A}_1$ is the set $\SRbLSP$ of R-bLSP morphisms over the alphabet $A$.

\item The set of states is the set $\{({\rm alph}(\bw), \frag(\bw)) \mid \bw$ LSP over $A \}$.

\item A $3$-tuple $((A_1, F_1), f, (A_2, F_2))$ is a transition of ${\cal A}_{1}$ if and only if the following condition holds:
	\begin{enumerate}
	\item $f_{|A_1} = f$ ($f$ is exactly defined on $A_1$; this point is important as it prevents from having a morphism in the transition that creates a fragility which is not related to $A_1$; we need to control new and propagated fragilities);
	\item $A_2 = {\rm alph}(f(A_1))$ (each letter of $A_2$ occurs in at least one image of a letter in $A_1$).
	\item if $(a, b, c) \in A_1$ then $f$ is not LSP $(a,b,c)$-breaking;
	\item $F_2$ is the union of the set of $3$-tuples $(a, b, c)$ for which an $(a, b, c)$-fragility  is associated with $f$ 
	   and the set of $3$-tuples $(a, b, c)$ propagated by $f$ from $(a', b', c')$-fragilities with $(a', b', c')$ in $F_1$.
	\end{enumerate}
\end{itemize}

One can observe that for each state $q$ of the automaton and each R-bLSP morphism $f$, there exists at most one state $q'$
such that $(q, f, q')$ is a transition.
There is no transition when $q = (B, F)$, $(a, b, c) \in F$ and $f$ is $(a, b, c)$-breaking.

By the definition of the automaton, Lemma~\ref{origine fragilities} has the following immediate corollary.

\begin{corollary}
\label{C:following fragilities}
Let $\bw$ be an LSP word, let $A_0, A_1, \ldots, A_k$ be alphabets and let $f_1$, $f_2$, \ldots, $f_k$ be R-bLSP morphisms such that:
\begin{itemize}
\item $A_0 = {\rm alph}(\bw)$;
\item for all $i$, $1 \leq i \leq k$, $A_i = {\rm alph}(f_i \circ \cdots \circ f_1(\bw))$;
\item For all $i$, $1 \leq i \leq k$, $f_i \circ \cdots \circ f_1(\bw)$ is LSP.
\end{itemize}
Then in ${\cal A}_1$ there is a unique path labeled by $f_1 \cdots f_k$ from the state $({\rm alph}(\bw), \frag(\bw))$ to a state $(A_k, F)$. Moreover $F$ is the set of $3$-tuples $(a, b, c)$ for which there exist $(a, b, c)$-fragilities in $f_k\circ \cdots \circ f_1(\bw)$.
\end{corollary}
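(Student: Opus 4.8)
The plan is to prove Corollary~\ref{C:following fragilities} by induction on $k$, using Lemma~\ref{origine fragilities} as the engine at each step. The base case $k = 0$ is immediate: the empty path from the state $({\rm alph}(\bw), \frag(\bw))$ to itself is the unique path labeled by the empty word, and by definition the second component $\frag(\bw)$ is exactly the set of $3$-tuples $(a,b,c)$ for which there is an $(a,b,c)$-fragility in $\bw = f_0 \circ \cdots \circ f_1(\bw)$ (empty composition). For the inductive step, suppose the claim holds for $k - 1$, so that there is a unique path in ${\cal A}_1$ labeled by $f_1 \cdots f_{k-1}$ from $({\rm alph}(\bw), \frag(\bw))$ to a state $(A_{k-1}, F_{k-1})$, where $A_{k-1} = {\rm alph}(f_{k-1} \circ \cdots \circ f_1(\bw))$ and $F_{k-1} = \frag(f_{k-1} \circ \cdots \circ f_1(\bw))$.

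Write $\bv = f_{k-1} \circ \cdots \circ f_1(\bw)$, which is LSP by hypothesis, with ${\rm alph}(\bv) = A_{k-1}$. The remaining work is to check that $((A_{k-1}, F_{k-1}), f_k, (A_k, \frag(f_k(\bv))))$ is a transition of ${\cal A}_1$, and that it is the only transition out of $(A_{k-1}, F_{k-1})$ labeled by $f_k$. For the first part I would verify the four conditions in the definition of $\Delta_1$ in turn. Condition~1 ($f_{k|A_{k-1}} = f_k$) is the hypothesis that $f_k$ is an R-bLSP morphism defined over ${\rm alph}(f_{k-1}\circ\cdots\circ f_1(\bw)) = A_{k-1}$. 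Condition~2 ($A_k = {\rm alph}(f_k(A_{k-1}))$) is precisely the hypothesis $A_k = {\rm alph}(f_k \circ \cdots \circ f_1(\bw)) = {\rm alph}(f_k(\bv))$, since the letters occurring in $f_k(\bv)$ are exactly those occurring in $f_k$ applied to the letters of $\bv$. Condition~3 requires that no $(a,b,c) \in F_{k-1}$ makes $f_k$ LSP $(a,b,c)$-breaking: since $F_{k-1} = \frag(\bv)$ and $f_k(\bv) = f_k\circ\cdots\circ f_1(\bw)$ is LSP by hypothesis, if some $(a,b,c) \in F_{k-1}$ had $f_k$ LSP $(a,b,c)$-breaking, then by Lemma~\ref{when (a, b, c)-breaking morphisms appear} (equivalence $3 \Rightarrow 1$, applied to the LSP word $\bv$, using that $\bv$ is $(a,b,c)$-fragile because $(a,b,c)\in\frag(\bv)$) the word $f_k(\bv)$ would not be LSP, a contradiction. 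Condition~4 identifies the target's fragility set as the union of the fragilities associated with $f_k$ and those propagated by $f_k$ from tuples in $F_{k-1}$; by Lemma~\ref{origine fragilities} applied to $\bv$ and $f_k$, this union is exactly $\frag(f_k(\bv))$, which is what we want the second component of the target state to be. So $((A_{k-1}, F_{k-1}), f_k, (A_k, \frag(f_k(\bv))))$ is indeed a transition, and composing it with the path from the inductive hypothesis gives a path labeled $f_1\cdots f_k$ from $({\rm alph}(\bw),\frag(\bw))$ to $(A_k, \frag(f_k(\bv)))$, with $\frag(f_k(\bv)) = \frag(f_k\circ\cdots\circ f_1(\bw))$ as required.

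For uniqueness, I would invoke the observation stated just before the corollary: for each state $q$ and each R-bLSP morphism $f$ there is at most one state $q'$ with $(q, f, q')$ a transition. Combined with the uniqueness of the length-$(k-1)$ path from the inductive hypothesis, this forces the length-$k$ path to be unique as well: any path labeled $f_1\cdots f_k$ starting at $({\rm alph}(\bw),\frag(\bw))$ must, reading the first $k-1$ letters, coincide with the one given by induction and hence pass through $(A_{k-1}, F_{k-1})$, and then the last transition labeled $f_k$ is determined.

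I do not expect a serious obstacle here; the statement is essentially a bookkeeping consequence of Lemma~\ref{origine fragilities} together with the definition of ${\cal A}_1$. The only point requiring a little care is making sure that the hypothesis "$f_i\circ\cdots\circ f_1(\bw)$ is LSP" is used exactly where needed — namely, in verifying Condition~3 of the transition (so that the path does not get stuck at a breaking morphism) and in being allowed to apply Lemma~\ref{origine fragilities}, which itself does not require the LSP hypothesis but whose conclusion about $\frag$ we are chaining. One should also note that Condition~1 of the transition definition is what guarantees $f_k$ is "exactly defined on $A_{k-1}$," so that the alphabet on which each morphism acts is tracked correctly and no spurious fragility is introduced; this matches the remark in the text that "the exact alphabet on which a morphism is applied has a lot of importance."
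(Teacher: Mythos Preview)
Your proposal is correct and is exactly the argument the paper has in mind: the paper states the result as an ``immediate corollary'' of Lemma~\ref{origine fragilities} and the definition of ${\cal A}_1$, and your induction on $k$ simply spells out that immediacy, verifying the four transition conditions at each step and invoking the at-most-one-outgoing-transition observation for uniqueness. The only remark worth adding is that Condition~1 (that $f_k$ is defined \emph{exactly} on $A_{k-1}$) is not literally written among the hypotheses of the corollary; you are right to treat it as implicit, since the paper consistently works with fitted morphisms and the path in ${\cal A}_1$ would otherwise fail to exist.
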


On the alphabet $\{a, b\}$, as there is no fragility, the automaton ${\cal A}_1$ has only three states (see Figure~\ref{Graph_A1_binary}): one state for each of the alphabets $\{a\}$, $\{b\}$, $\{a, b\}$.

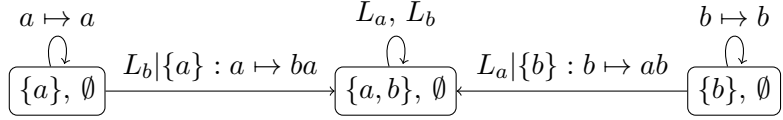
\begin{figure}[!ht]
\begin{center}
\begin{tikzpicture}[scale=0.3]
\begin{scope}
\node[draw, rectangle, rounded corners = 3pt] (a) at (0,0) {$\{a\}$, $\emptyset$};
\path  (a) edge [loop above] node[above]{$a \mapsto a$} (a) ;

\node[draw, rectangle, rounded corners = 3pt] (b) at (30,0) {$\{b\}$, $\emptyset$};
\path  (b) edge [loop above] node[above]{$b \mapsto b$} (b) ;

\node[draw, rectangle, rounded corners = 3pt] (ab) at (15,0) {$\{a, b\}$, $\emptyset$};
\path  (ab) edge [loop above] node[above]{$L_a$, $L_b$} (ab) ;
\draw [->] (a) -- (ab) node[midway,above]{$L_b|\{a\}: a \mapsto ba$};
\draw [->] (b) -- (ab) node[midway,above]{$L_a|\{b\}: b \mapsto ab$};

\end{scope}
\end{tikzpicture}
\end{center}
\caption{\label{Graph_A1_binary}${\cal A}_{1}$ for the binary alphabet}
\end{figure}

The automaton for the alphabet $\{a, b, c\}$ is provided in Figure~\ref{Graph_A1_ternary}.
Here follow some explanations.
We say that $\bw$ has fragilities of type $a$ (resp. of type $b$, of type $c$) 
if it has $(a, b, c)$-fragilities or $(a, c, b)$-fragilities (resp. $(b, a, c)$-fragilities or $(b, c, a)$-fragilities; $(c, a, b)$-fragilities or $(c, b, a)$-fragilities). 
Figure~\ref{table} shows the action of R-bLSP morphisms on fragilities.
One can observe that, for this ternary case, the type of a propagated fragility is the same then the type of the fragility from which it is propagated.

\begin{figure}[!ht]
\begin{center}
\begin{tabular}{|c|c|c|c|c|}
\hline
Properties of the & new fragilities & action on & action on & action on  \\
R-bLSP morphism $f$ &  &type $a$ fragilities  & type $b$ fragilities  & type $c$ fragilities\\
considered & & & & \\
\hline
$f$ is $L_a$, $L_b$, $L_c$ & & propagated & propagated & propagated \\
or one of their restrictions & &  &  &  \\
\hline
$f(c) = abc$ or $f(b) = acb$ & $a$-fragilities & LSP breaking & not propagated & not propagated \\
\hline
$f(c) = bac$ or $f(a) = bca$ & $b$-fragilities & not propagated & LSP breaking & not propagated \\
\hline
$f(b) = cab$ or $f(a) = cba$ & $c$-fragilities  & not propagated & not propagated & LSP breaking \\
\hline
\end{tabular}
\end{center}
\caption{\label{table}Ternary case: action of R-bLSP morphisms on fragilities}
\end{figure}

Observe that in the last three cases of Figure~\ref{table} only one type of fragility is kept. 
This shows that for ternary LSP words only four sets of fragilities can occur: the emptyset, the set of $a$-fragilities, the set of $b$-fragilities and the set of $c$-fragilities. In other words an LSP word over $\{a, b, c\}$ cannot have simultaneously 
fragilities of type $a$ and $b$ (nor of type $a$ and $c$; nor of type $b$ and $c$). 

The presentation of the automaton in Figure~\ref{Graph_A1_ternary} is split into two parts.
The first part contains the seven states corresponding to words without fragilities.
The second part contains the three states  corresponding to words with fragilities.
In the first part, morphisms with entries ``-" denote restrictions of endomorphisms: ``-" means ``not defined".
In the second part, entries ``*" denote any possibility of image (defined or not defined as this kind of entries correspond to restriction of $\lambda_{.}$ morphisms).
There exist numerous transitions from the first part to the second one. The outgoing side of the transition is presented in the first part and the in-going side of the transition is presented in the second part.
For instance there exists a transition 
from state $(\{b, c\}, \emptyset)$
to state $(\{a, b, c\}$, type $a)$ labeled by morphism $[-, ab, abc]$.

\begin{figure}[!ht]
\begin{center}
\begin{tikzpicture}[scale=0.3]
\begin{scope}
%\node (exc) at (0, 0) {ex-type c};
%\node (exa) at (6, 12) {ex-type a};
%\node (exb) at (12, 0) {ex-type b};

%% ETATS COMPOSANTE 1
\node[draw, rectangle, rounded corners = 3pt] (aa) at (12,0) {\tiny $\{a\}$, $\emptyset$};
\node[draw, rectangle, rounded corners = 3pt] (bb) at (6,10) {\tiny $\{b\}$, $\emptyset$};
\node[draw, rectangle, rounded corners = 3pt] (cc) at (18,10) {\tiny $\{c\}$, $\emptyset$};

\node[draw, rectangle, rounded corners = 3pt] (ab) at (0,0){\tiny $\{a, b\}$, $\emptyset$};
\node[draw, rectangle, rounded corners = 3pt] (ac) at (24,0) {\tiny $\{a, c\}$, $\emptyset$};
\node[draw, rectangle, rounded corners = 3pt] (bc) at  (12,16)  {\tiny $\{b, c\}$, $\emptyset$};

\node[draw, rectangle, rounded corners = 3pt] (d) at (12, 8) {\tiny $\{a, b, c\}$, $\emptyset$};%(12, -12)

%% LOOPS SUR ETATS COMPOSANTE 1
\path  (aa) edge [loop below,>=latex] node[pos=0.5]{\tiny $[a, -, -]$} (aa) ;
\path  (bb) edge [loop left,>=latex] node[pos=0.5]{\tiny $[-, b, -]$} (bb) ;
\path  (cc) edge [loop right,>=latex] node[pos=0.5]{\tiny $[-, -, c]$} (cc) ;
\path  (ab) edge [loop left,>=latex] node[pos=0.35]{\tiny $[a, ab, -]$} (ab) ;
\path  (ab) edge [loop left,>=latex]  node[pos=0.65]{\tiny $[ba, b, -]$} (ab) ;
\path  (ac) edge [loop right,>=latex] node[pos=0.35]{\tiny $[a, -, ac]$} (ac) ;
\path  (ac) edge [loop right,>=latex]  node[pos=0.65]{\tiny $[ca, -, c]$} (ac) ;
\path  (bc) edge [loop above,>=latex] node[pos=0.5]{\tiny $[-, b, bc]$, $[-, cb, c]$} (bc) ;

% Autre transition composante 1
\draw [->,>=latex] (ab) -- (d) node[pos=0.4]{\tiny $[ca, cb, -]~~$};
\draw [->,>=latex] (ac) -- (d) node[pos=0.4]{\tiny $~~[ba, -, bc]$};
\draw [->,>=latex] (bc) to node[pos=0.4]{\tiny $[-, ab, ac]$} (d);

\draw [->,>=latex] (aa)  to node[pos=0.5, below]{\tiny $~[ba, -, -]$} (ab);%[bend left]
\draw [->,>=latex] (aa)  to node[pos=0.5, below]{\tiny $[ca, -, -]~$} (ac) ;%[bend right]
\draw [->,>=latex] (bb)  to node[pos=0.3] {\tiny $[-, ba, -]$} (ab) ;%[bend right]
\draw [->,>=latex] (bb) to node[pos=0.2]{\tiny $~~[-, bc, -]$} (bc) ;%[bend left]
\draw [->,>=latex] (cc) to node[pos=0.5]{\tiny $[-, -, ac]$} (ac) ;%[bend left]
\draw [->,>=latex] (cc) to node[pos=0.15]{\tiny $[-, -, bc]~~$} (bc) ;%[bend right]

% COMPOSANTE 1 vers COMPOSANTE 2
\node (bc_vers_a) at (20,16) {\tiny type $a$};
\node (bc_vers_b) at (4,18) {\tiny type $b$};
\node (bc_vers_c) at (4,14) {\tiny type $c$};
\draw [->,>=latex] (bc) to node[above]{\tiny $[-,ab,abc]$} node[below]{\tiny $[-, acb, ac]$} (bc_vers_a) ;

\draw [->,>=latex] (bc) to node[pos=0.5,below]{\tiny $[-, b, bac]$}  (bc_vers_b) ;
\draw [->,>=latex] (bc) to node[pos=0.5, below]{\tiny $[-, cab, c]$}  (bc_vers_c) ;

\node (ab_vers_a) at (-6,-6) {\tiny type $a$};
\node (ab_vers_b) at (0,-6) {\tiny type $b$};
\node (ab_vers_c) at (6,-6) {\tiny type $c$};
\draw [->,>=latex] (ab) to node[pos=0.2]{\tiny $[a, acb,-]$} (ab_vers_a) ;
\draw [->,>=latex] (ab) to node[pos=0.4]{\tiny $[bca, b, -]$} (ab_vers_b) ;
\draw [->,>=latex] (ab) to node[pos=0.6]{\tiny $[ca, cab, -]$} node[pos=0.8]{\tiny $[cba, cb, -]$} (ab_vers_c) ;

\node (ac_vers_a) at (18,-6) {\tiny type $a$};
\node (ac_vers_b) at (24,-6) {\tiny type $b$};
\node (ac_vers_c) at (30,-6) {\tiny type $c$};
\draw [->,>=latex] (ac) to node[pos=0.2]{\tiny $[a, -, abc]$} (ac_vers_a) ;
\draw [->,>=latex] (ac) to node[pos=0.4]{\tiny $[ba, -, bca]$} node[pos=0.6]{\tiny $[bca, -, bc]$} (ac_vers_b) ;
\draw [->,>=latex] (ac) to  node[pos=0.8]{\tiny $[cba, -, c]$} (ac_vers_c) ;

\node (d_vers_a) at (6,2) {\tiny type $a$};
\node (d_vers_b) at (12,2) {\tiny type $b$};
\node (d_vers_c) at (18,2) {\tiny type $c$};
\draw [->,>=latex] (d) to node[pos=0.5]{\tiny $\lambda_{abc}$} node[pos=0.7]{\tiny $\lambda_{acb}$} (d_vers_a) ;
\draw [->,>=latex] (d) to node[pos=0.5]{\tiny $\lambda_{bac}$} node[pos=0.7]{\tiny $\lambda_{bca}$} (d_vers_b) ;
\draw [->,>=latex] (d) to  node[pos=0.5]{\tiny $\lambda_{cab}$} node[pos=0.7]{\tiny $\lambda_{cba}$} (d_vers_c) ;

\node (bb_vers_a) at (0,13) {\tiny type $a$};
\node (bb_vers_c) at (0,7) {\tiny type $c$};
\draw [->,>=latex] (bb) to node[pos=0.5]{\tiny $[-, acb, -]$} (bb_vers_a) ;
\draw [->,>=latex] (bb) to node[pos=0.5]{\tiny $[-, cab, -]$} (bb_vers_c) ;

\node (cc_vers_a) at (23,13) {\tiny type $a$};
\node (cc_vers_b) at (23,7) {\tiny type $b$};
\draw [->,>=latex] (cc) to node[pos=0.5]{\tiny $[-, -, abc]$} (cc_vers_a) ;
\draw [->,>=latex] (cc) to node[pos=0.5,right]{\tiny $[-, -, bac]$} (cc_vers_b) ;

\node (aa_vers_b) at (7,-5) {\tiny type $a$};
\node (aa_vers_c) at (17,-5) {\tiny type $b$};
\draw [->,>=latex] (aa) to node[pos=0.7]{\tiny $[bac, -, -]$} (aa_vers_b) ;
\draw [->,>=latex] (aa) to node[pos=0.5,right]{\tiny $[cba, -, -]$} (aa_vers_c) ;

%% ETATS COMPOSANTE 2

\node[draw, rectangle, rounded corners = 3pt] (a) at (12,-19) {\tiny $\{a, b, c\}$, type $a$};
\node[draw, rectangle, rounded corners = 3pt] (b) at (18,-11) {\tiny $\{a, b, c\}$, type $b$};
\node[draw, rectangle, rounded corners = 3pt] (c) at (6,-11) {\tiny $\{a, b, c\}$, type $c$};

% Transitions composante 2
\draw [->,>=latex] (c.275) to[bend right] node[pos=0.5,left]{\tiny $\lambda_{abc}$} (a.110) ;
\draw [->,>=latex] (c.275) to[bend right] node[pos=0.6,left]{\tiny $\lambda_{acb}$} (a.110) ;
\draw [->,>=latex] (a.110) to[bend right] node[pos=0.4, left]{\tiny $\lambda_{cab}$} (c.275) ;
\draw [->,>=latex] (a.110) to[bend right] node[pos=0.5, left]{\tiny $\lambda_{cba}$} (c.275) ;

\draw [->,>=latex] (a.70) to[bend right] node[pos=0.4,right]{\tiny $\lambda_{bca}$} (b.265);
\draw [->,>=latex] (a.70) to[bend right] node[pos=0.5, right]{\tiny $\lambda_{bac}$} (b.265) ;
\draw [->,>=latex] (b.265) to[bend right] node[pos=0.5, right]{\tiny $\lambda_{acb}$} (a.70) ;
\draw [->,>=latex] (b.265) to[bend right] node[pos=0.6, right]{\tiny $\lambda_{abc}$} (a.70) ;

\draw [->,>=latex] (c.335) to[bend right] node[pos=0.5, above]{\tiny $~~\lambda_{bca}$, $\lambda_{bac}$} (b.205) ;
\draw [->,>=latex] (b.155) to[bend right] node[pos=0.5, above]{\tiny $\lambda_{cab}$, $\lambda_{cba}$} (c.25) ;

\path  (c) edge [loop left,>=latex] node[pos=0.3]{\tiny $L_a$} (c) ;
\path  (c) edge [loop left,>=latex] node[pos=0.5]{\tiny $L_b$} (c) ;
\path  (c) edge [loop left,>=latex] node[pos=0.7]{\tiny $L_c$} (c) ;

\path  (b) edge [loop right,>=latex] node[pos=0.3]{\tiny $L_a$} (b) ;
\path  (b) edge [loop right,>=latex] node[pos=0.5]{\tiny $L_b$} (b) ;
\path  (b) edge [loop right,>=latex] node[pos=0.7]{\tiny $L_c$} (b) ;

\path  (d) edge [loop right,>=latex] node[pos=0.25]{\tiny $L_a$} (d) ;
\path  (d) edge [loop right,>=latex] node[pos=0.5]{\tiny $L_b$} (d) ;
\path  (d) edge [loop right,>=latex] node[pos=0.7]{\tiny $L_c$} (d) ;

\path  (a) edge [loop right,>=latex] node[pos=0.3]{\tiny $L_a$} (a) ;
\path  (a) edge [loop right,>=latex] node[pos=0.5]{\tiny $L_b$} (a) ;
\path  (a) edge [loop right,>=latex] node[pos=0.7]{\tiny $L_c$~} (a) ;

% Entrees dans composantes 2
\node (vers_a) at (12,-23) {};
\node (vers_b) at (18,-7) {};
\node (vers_c) at (6,-7) {};

\draw [->,>=latex] (vers_a) to node[pos=0.5]{\tiny $[*, *, abc]$} node[pos=0.3]{\tiny $[*, acb *]$} (a) ;
\draw [->,>=latex] (vers_b) to node[pos=0.5]{\tiny $[*, *, bac]$} node[pos=0.3]{\tiny $[bca, *, *]$}  (b) ;
\draw [->,>=latex] (vers_c) to node[pos=0.5]{\tiny $[*, cab, *]$} node[pos=0.3]{\tiny $[cba, *, *]$} (c) ;

\end{scope}
\end{tikzpicture}
\end{center}
\caption{\label{Graph_A1_ternary}${\cal A}_{1}$ for the ternary alphabet}
\end{figure}

For larger alphabets, the number of states explodes. 
This is due to the increasing number of subalphabets 
but also to the fact that the set of fragilities may contain more than one fragility. 
For instance, any LSP morphism $f$ such that $f(a) = abcd$ creates simultaneously several fragilities: in the image of an infinite word containing $a$, the empty word is an $(a, b, c)$-fragility, an $(a, b, d)$-fragility and an $(a, c, d)$-fragility.

\section{\label{sec:carac}The {\it S}-adic Characterization of LSP words}

Let $\bw$ be an LSP word. 
By Proposition~\ref{characLSPwords}, $\bw$ is $\SRbLSP$-adic.
Let $\mbf = (f_k)_{k \geq 0}$ be its fitted directive word over $\SRbLSP$.
Let also $(\bw_k)_{k \geq 1}$ be the sequence of infinite words verifying: $\bw_1 = \bw$; 
for all $k \geq 1$, $\bw_k = f_k(\bw_{k+1})$.

Let ${\cal A}_{R-bLSP}(A)$, or simply ${\cal A}_{R-bLSP}$, be the automaton obtained by reversing transitions in ${\cal A}_1$
and considering all states as initial.
As $\mbf$ is fitted,
an immediate consequence of Corollary~\ref{C:following fragilities} is 
that there exists an infinite path labeled by $\mbf$ in ${\cal A}_{R-bLSP}$ starting on state $({\rm alph}(\bw), \frag(\bw))$.
The next theorem states that conversely if such an infinite path exists for
an $\SRbLSP$-adic word $\bw$, then $\bw$ is LSP.
If $q = (X, F)$ is a state of ${\cal A}_{R-bLSP}$, let ${\rm alph}(q) = X$ and $\frag(q) = F$.

\begin{theorem}
\label{th_carac}
A word $\bw$ is LSP if and only if it is $\SRbLSP$-adic and 
there exists an infinite path in ${\cal A}_{R-bLSP}$
labeled by the fitted directive word of $\bw$.
\end{theorem}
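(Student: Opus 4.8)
The plan is to prove the two directions separately. The forward direction ($\bw$ LSP $\Rightarrow$ the stated condition) is essentially already in hand: by Proposition~\ref{characLSPwords} any LSP word $\bw$ is $\SRbLSP$-adic with a unique fitted directive word $\mbf = (f_k)_{k\geq 1}$, and by Lemma~\ref{lemma1} every $\bw_k$ in the associated desubstitution sequence is LSP. Hence Corollary~\ref{C:following fragilities}, applied along the initial segments $f_1,\dots,f_k$, shows that in ${\cal A}_1$ there is a path labeled $f_1\cdots f_k$ starting at $({\rm alph}(\bw),\frag(\bw))$ for every $k$; reversing transitions, these glue into an infinite path in ${\cal A}_{R-bLSP}$ labeled by $\mbf$ starting at $({\rm alph}(\bw),\frag(\bw))$. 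The only thing to note carefully is that the path is forced to pass through the states $({\rm alph}(\bw_{k+1}),\frag(\bw_{k+1}))$, which is exactly the content of Corollary~\ref{C:following fragilities}.

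For the converse, suppose $\bw$ is $\SRbLSP$-adic with fitted directive word $\mbf=(f_k)_{k\geq 1}$ and desubstitution sequence $(\bw_k)_{k\geq 1}$, and suppose there is an infinite path in ${\cal A}_{R-bLSP}$ labeled by $\mbf$, say through states $q_1,q_2,\dots$ with $q_k=(A_k,F_k)$. First I would check that the alphabet components are forced: since $\mbf$ is fitted, ${\rm alph}(\bw_{k+1})={\rm alph}(f_k)\supseteq{\rm alph}(f_k(\bw_{k+1}))={\rm alph}(\bw_k)$ up to the defining conditions of ${\cal A}_1$, so $A_k={\rm alph}(\bw_k)$ for all $k$ (using transition conditions (1) and (2)). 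The heart of the argument is then: because the path exists, none of the transitions is blocked, i.e., for every $k$ and every $(a,b,c)\in F_{k+1}$ the morphism $f_k$ is not LSP $(a,b,c)$-breaking; moreover by condition (4) together with Lemma~\ref{origine fragilities}, $F_k$ is exactly $\frag(\bw_k)$ \emph{provided} we know each $\bw_k$ is LSP — but that is what we are trying to prove, so this must be bootstrapped.

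The key step, and the main obstacle, is to transfer the ``no blocked transition'' information on the directive word into the statement ``every $\bw_k$ is LSP.'' I would argue by contradiction: let $k_0$ be minimal such that $\bw_{k_0}$ is not LSP (such $k_0$ exists if $\bw=\bw_1$ is not LSP). Since over a one- or two-letter alphabet every word is trivially LSP or is covered by Corollary~\ref{C:carac1_binary}, and more to the point since Lemma~\ref{lemma1} gives that $f_{k_0}(\bw_{k_0+1})=\bw_{k_0}$ not LSP forces... wait — Lemma~\ref{lemma1} goes the wrong way; rather I use Lemma~\ref{when (a, b, c)-breaking morphisms appear}. Here is the point: to apply that lemma I need $\bw_{k_0+1}$ to be LSP, which holds by minimality of $k_0$. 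Then $f_{k_0}(\bw_{k_0+1})$ not LSP gives, via Lemma~\ref{when (a, b, c)-breaking morphisms appear} ($1\Rightarrow 3$), pairwise distinct letters $a,b,c$ with $\bw_{k_0+1}$ being $(a,b,c)$-fragile and $f_{k_0}$ LSP $(a,b,c)$-breaking. Now I must show $(a,b,c)\in F_{k_0+1}=\frag(\bw_{k_0+1})$: this is where I invoke that $F_{k_0+1}=\frag(\bw_{k_0+1})$, which follows by a downward induction from the forward direction applied to the LSP words $\bw_{k_0+1},\bw_{k_0+2},\dots$ — all of which are LSP by minimality of $k_0$ — using Corollary~\ref{C:following fragilities} along the tail of the path. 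Concretely, the tail of the path starting at $q_{k_0+1}$ is a path in ${\cal A}_{R-bLSP}$ labeled by the fitted directive word of the LSP word $\bw_{k_0+1}$, and by the forward direction (and uniqueness of the path, noted after the construction of ${\cal A}_1$) it must start at $({\rm alph}(\bw_{k_0+1}),\frag(\bw_{k_0+1}))$, whence $F_{k_0+1}=\frag(\bw_{k_0+1})$. So $(a,b,c)\in F_{k_0+1}$ and $f_{k_0}$ is LSP $(a,b,c)$-breaking, meaning transition condition (3) fails for the step $q_{k_0+1}\to q_{k_0}$ of the path in ${\cal A}_{R-bLSP}$ — contradicting the existence of the path. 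Hence no such $k_0$ exists, every $\bw_k$ is LSP, in particular $\bw$ is LSP, completing the proof.
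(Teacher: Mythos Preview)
Your forward direction is fine, but the converse has a real gap. You take $k_0$ minimal with $\bw_{k_0}$ not LSP and assert that ``by minimality of $k_0$, $\bw_{k_0+1}$ is LSP.'' Minimality gives $\bw_j$ LSP for $j<k_0$, not for $j>k_0$. In fact, Lemma~\ref{lemma1} (``$f(\bw)$ LSP $\Rightarrow$ $\bw$ LSP'') forces $k_0=1$ whenever $\bw_1$ is not LSP, so your minimality yields nothing about $\bw_2$. There is no a priori reason any $\bw_k$ is LSP; the whole tail could fail to be LSP, and then Lemma~\ref{when (a, b, c)-breaking morphisms appear} ($1\Rightarrow 3$) cannot be invoked. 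A second problem: you claim the path through $q_{k_0+1}$ must coincide with the canonical one starting at $({\rm alph}(\bw_{k_0+1}),\frag(\bw_{k_0+1}))$ by ``uniqueness of the path.'' But the determinism noted after the construction of ${\cal A}_1$ is \emph{forward} determinism in ${\cal A}_1$, i.e.\ \emph{co}-determinism in ${\cal A}_{R-bLSP}$; paths labeled by a given directive word are in general not unique (the paper explicitly remarks this for the Tribonacci directive word). So you cannot conclude $F_{k_0+1}=\frag(\bw_{k_0+1})$.

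The paper circumvents both issues by a different minimality: it minimizes the length $|u|$ of a bad left special factor over \emph{all} counterexample triples $(\mbf,\bw,u)$. With that choice, desubstituting $u$ through $f_1$ yields a left special factor $v$ of $\bw_2$ with $|v|<|u|$, and the minimality of $|u|$ (applied to the triple for $\bw_2$, which also has a valid path) forces $v$ to be a prefix of $\bw_2$, producing an $(a,b,c)$-fragility in $\bw_2$ with $f_1$ LSP $(a,b,c)$-breaking---all without ever assuming $\bw_2$ is LSP. Then, using Lemma~\ref{origine fragilities} (which holds for arbitrary words, not just LSP ones), that fragility is traced back step by step to a \emph{new} fragility associated with some $f_n$; since new fragilities depend only on $f_n$ and ${\rm alph}(q_{n+1})$, transition condition~(4) puts it in $\frag(q_n)$, and then the propagation rule in condition~(4) carries it inductively down to $\frag(q_2)$. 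This never needs $F_k=\frag(\bw_k)$, only the specific chain of triples produced by Step~3. To repair your argument you would have to replace the index-minimality by this length-minimality and trace the one fragility through the automaton directly rather than appeal to path uniqueness.
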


Applying this theorem allows to provide particular examples of LSP words.
For instance it can be verified that all standard episturmian words are LSP \cite{ArnouxRauzy1991,JustinPirillo2002}. These words are the ${\cal S}$-adic words where $S$ is the set of morphisms containing morphisms $L_\alpha$ and their restrictions. 
In particular, if $A = \{a_1, \ldots, a_k\}$, the $k$-bonacci word $\bB_{a_1, \ldots, a_k}$ (or Tribonacci word, also denoted $\bT$, when $k = 3$) which is the fixed point of the morphism $L_{a_1}\circ \cdots \circ L_{a_k}$ is LSP. 
Observe that $\bB_{a_1, \ldots, a_k}$ contains no fragility as
its fitted directive word is recognized using only the state 
$(\{a_1, \ldots, a_k\}, \emptyset)$ in ${\cal A}_{R-bLSP}$.
It may be observed that the fitted directive word of an LSP word can be recognized by several paths.
For instance  the fitted directive word of the Tribonacci word is the word $(L_aL_bL_c)^\omega$ which is recognized by four paths starting respectively from states $(\{a, b, c\}, \emptyset)$, $(\{a, b, c\}, {\rm type~a})$,  $(\{a, b, c\}, {\rm type~b})$ and  $(\{a, b, c\}, {\rm type~c})$.

%It is important to observe that ${\cal A}_{R-bLSP}$ does not recognize all directive words of LSP words. 
%For instance, the LSP word $(abc)^\omega$ has $[-, -, abc][-, -, c]^\omega$ as fitted directive word. 
%The word $\lambda_{abc}[-, -, c]^\omega$ is another directive word of $(abc)^\omega$ but it is not recognized by ${\cal A}_{R-bLSP}$.

\begin{proof}[Proof of Theorem~\ref{th_carac}]
The only if part was already mentioned before the statement of the theorem.

Assume, by contradiction, that there exists in ${\cal A}_{R-bLSP}$ a path labeled by a directive word $\mbf$ of a word $\bw$
which is $\SRbLSP$-adic but not LSP
(for this part of the proof it is not needed that $\mbf$
is a fitted directive word).
Such a word contains a left special factor $u$ that is not a prefix of $\bw$.
Among all possible $3$-tuples $(\mbf, \bw, u)$, choose one such that $|u|$ is minimal.

For $n \geq 1$, we denote by $f_n$ the $n^{th}$ letter of $\mbf$ and $\bw_n$ the word directed by $(f_k)_{k\geq n}$ ($\bw_1 = \bw$ ; $\bw_2$ is the word directed by 
$f_2 f_3 \cdots$; $\bw_n = f_n( \bw_{n+1})$ for all $n \geq 1$).

The first three steps do not depend on the automaton.

\medskip

\textit{Step 1: $\bw_2$ contains a fragility} 

First observe that $|u| \geq 2$. 
Indeed we have $|u|\neq 0$ as the empty word is a prefix of $\bw$.
Moreover, by the structure of images of the R-bLSP morphism $f_1$,
only the letter $\first(f_1)$ can be left special, whence $|u| \neq 1$.

Let $\alpha = \first(\bw) = \first(f_1)$.
Considering the last occurrence of $\alpha$ in $u$,
the word $u$ can be decomposed in a unique way $u = f_1(v) \alpha x$ with $v$, $x$ words such that $|x|_\alpha = 0$.

As $u$ is left special, there exist distinct letters $\beta$ and $\gamma$ 
such that $\beta u$ and $\gamma u$ are factors of $\bw$.
As the letter $\alpha$ marks the beginning of images of letters in $\bw$
and as for all letters $\delta$, $f_1(\delta)$ ends with $\delta$,
we deduce that 
$\beta v $ and $\gamma v$ are factors of $\bw_2$.
As $|v| < |u|$ and by the choice of the $3$-tuple $(\mbf, \bw, u)$,  the word $v$ is a prefix of $\bw_2$.
Consequently $f_1(v)\alpha$ is a prefix of $\bw$ and so $x \neq \varepsilon$.

Assume there exists a unique letter $b$ such that 
$\beta vb$ is a factor of $\bw_2$ and $u$ is a prefix of $f(vb)$.
Assume also that $b$ is the unique letter $c$ such that $\gamma vc$ 
is a factor of $\bw_2$ and $u$ is a prefix of $f(vc)$.
As $u$ is not a prefix of $\bw = f_1(\bw_2)$ and as $u$ is a prefix of $f_1(vb)$,
the word $vb$ is not a prefix of $\bw_2$.
By the choice of the $3$-tuple $(\mbf, \bw, u)$, $|vb| \geq |u|$.
As $|v|<|u|$, we get $|vb| = |u| = |f_1(v)\alpha x|$.
As $|f_1(v)| \geq |v|$, it follows $x = \varepsilon$: a contradiction.

From what precedes, we deduce the existence of two distinct letters
$b$ and $c$ such that $\beta v b$ and $\gamma v c$ are factors of $\bw_2$ 
with $u$ a prefix of $f_1(vb)$ and $f_1(vc)$. As $u$ is not a prefix of $\bw = f_1(\bw_2)$, 
the letter $a$ that follows the prefix $v$ of $\bw_2$ is different from $b$ and $c$.
Hence the word $\bw_2$ is $(a, b, c, \beta, \gamma)$-fragile and $v$ is such a fragility.

\medskip

\textit{Step 2: $f_1$ is LSP $(a, b, c)$-breaking}

By the definition of letters $b$ and $c$ at Step~1, the word $\alpha x$ is a common prefix of $f_1(b)$ and $f_1(c)$. Also as $u = f_1(v)\alpha x$ is not a prefix of $\bw$ while $f_1(v)a$ is a prefix of $\bw$,
the word $\alpha x$ is not a prefix of $f_1(a)$.
By Corollary~\ref{carac LSP breaking morphisms},
$f_1$ is $(a, b, c)$-breaking.

\medskip

\textit{Step 3: origin of fragilities of $\bw_2$}

Applying iteratively Lemma~\ref{origine fragilities},
as the lengths of propagated fragilities decrease with the propagation, 
we deduce the existence of an integer $n \geq 2$ and
a sequence of $3$-tuples of pairwise different letters
$(a_i, b_i, c_i)_{i \in \{2, \cdots, n\}}$, 
a sequence $(v_i)_{i \in \{2, \cdots, n\}}$ of words such that:
\begin{itemize}
\item $v_i$ is an $(a_i, b_i, c_i, \beta, \gamma)$-fragility of $\bw$ 
for all $i \in \{2, \cdots, n\}$;
\item $(a_2, b_2, c_2) = (a, b, c)$ and $v_2 = v$;
\item $|v_{i+1}| < |v_i|$ for all $i \in \{2, \cdots, n-1\}$  ($v_i$ is a fragility propagated by $f_i$ from $v_{i+1}$);
\item $v_n = \varepsilon$ (origin of the fragilities);
\item the words $v_i a_i$, $\beta v_i b_i$, $\gamma v_i c_i$ are respectively prefixes of 
$f_i( v_{i+1} a_{i+1})\alpha_i$,
$\beta f_i( v_{i+1} b_{i+1})\alpha_i$,
$\gamma f_i( v_{i+1} c_{i+1})\alpha_i$
where $\alpha_i = \first(f_i)$ for $i \in \{2, \cdots, n-1\}$ (by the definition of propagated fragilities);
\item $a_n = \first(f_n)$;
\item $\beta b_n$, $\gamma c_n$ belong to $f_n({\rm alph}(\bw_{n+1}))$  (by the definition of new fragilities).
\end{itemize}

\medskip

\textit{Step 4: conclusion}.

Let $(q_i)_{i \geq 1}$ be the sequence of states along the path labeled by $\mbf$: 
for all $n \geq 1$, $(q_n, f_n, q_{n+1})$ is a transition of ${\cal A}_{R-bLSP}$.
At the end of Step 3, we learn that there exists an $(a_n, b_n, c_n)$-fragility in $f_n(\bw_{n+1})$. 
Hence
$a_n$, $b_n$, $c_n$ are pairwise distinct letters. 
Especially as $a_n = \first(f_n) \not\in \{b_n, c_n\}$ by properties of R-bLSP morphisms,
the words $\beta b_n$ and $\gamma c_n$ are factors of images of some letters, say $b_n'$ and $c_n'$.
By the definition of the automaton ${\cal A}_{R-bLSP}$,
${\rm alph}(q_n) = {\rm alph}(f_n({\rm alph}(q_{n+1})))$ 
and $f_n$ is defined exactly on ${\rm alph}(q_{n+1})$.
This implies that $b_n'$ and $c_n'$ belong to ${\rm alph}(q_{n+1})$ and $a_n$, $b_n$ and $c_n$ belong to ${\rm alph}(q_n)$. Moreover, as $\beta b_n$, $\gamma c_n$ are factors of words in $f_n({\rm alph}(q_{n+1}))$, we deduce that $(a_n, b_n, c_n) \in \frag(q_n)$.

Using a backward induction and the definition of the automaton ${\cal A}_{R-bLSP}$, 
we can show that for all $i$, $2 \leq i \leq n$, 
$(a_i, b_i, c_i) \in \frag(q_i)$.
Especially $(a_2, b_2, c_2) \in \frag(q_2)$.
As $(q_1, f_1, q_2)$ is a transition of ${\cal A}_{R-bLSP}$, 
by the definition of transitions, $f_1$ is not LSP $(a_2, b_2, c_2)$-breaking: a contradiction with Step~2 as 
$(a, b, c)= (a_2, b_2, c_2)$.
\end{proof}

\section{\label{sec:preservation}Morphisms preserving LSP words}

As explained at the end of \cite{Richomme2017DLT}, one can ask for a simpler characterization for LSP words.
In the context of substitutive-adicity, the next theorem shows that this is not possible: given any set $S$ of morphisms such that LSP words are $S$-adic, one needs a way to distinguish directive words of LSP words.

\begin{theorem}
\label{T:no simpler result}
Let $A$ be an alphabet containing at least three letters.
There exists no set $S$ of morphisms such that the set of LSP infinite words over $A$ is the set of $S$-adic words.
\end{theorem}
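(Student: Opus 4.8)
The plan is to argue by contradiction: suppose such a set $S$ exists. The key is to exhibit an infinite LSP word $\bw$ together with a morphism $g$ in $S$ (or a finite product of morphisms from $S$) such that $g(\bw)$ is not LSP, yet $g(\bw)$ must nevertheless be $S$-adic with directive word obtained by prepending $g$ (or its factors) to a directive word of $\bw$; this would contradict the hypothesis that \emph{every} $S$-adic word is LSP. To make this work I first need a structural lemma characterizing which morphisms preserve the LSP property. Concretely: a morphism $f$ preserves LSP (maps every infinite LSP word to an LSP word) essentially only if, on each relevant subalphabet, $f$ behaves like an R-bLSP morphism that is not $(a,b,c)$-breaking for any triple of letters actually appearing together with a fragility; moreover $f$ cannot \emph{create} a fragility that some later morphism in any admissible continuation would break. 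The heart of the argument is that over an alphabet with at least three letters one can always find an LSP word carrying a fragility of some type, and then a morphism that breaks exactly that type.

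The main steps, in order. First I would prove the morphism-preservation characterization: if $f$ is a morphism such that $f(\bw)$ is LSP for every infinite LSP word $\bw$ over $\mathrm{alph}$ of the appropriate size, then $f$ must be (a restriction/relabelling of) an R-bLSP morphism, using Lemma~\ref{lemma2} and Lemma~\ref{lemma1} together with the fact that $f$ applied to, e.g., a standard episturmian word must desubstitute correctly; the first-letter and last-letter rigidity from Property~\ref{prop:bLSP properties} pins down the shape. Second, I would show that among R-bLSP morphisms over a $\geq 3$-letter alphabet, none can simultaneously (i) avoid being $(a,b,c)$-breaking for all fragility types and (ii) be usable in an infinite admissible path — because, as Table~\ref{table} and the analysis around it show, any R-bLSP morphism that is not a pure $L_\alpha$-type either creates a fragility or breaks one, and the $L_\alpha$-type morphisms alone generate only the episturmian words, which do not exhaust the LSP words (e.g. words with nonempty $\frag$, such as $g(\bF)$ from Example~\ref{Exemple3}, are LSP but not episturmian). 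Third, combining these: any $S$ that generates all LSP words must contain a morphism $g$ that creates a fragility of some type $t$; but then one exhibits an LSP word $\bw$ whose $S$-adic expansion legitimately continues after $g$ by a morphism $h\in S$ that is $t$-breaking (such $h$ must exist in $S$ since some LSP word requires a $t$-breaking desubstitution step — indeed the words that are LSP precisely because a breaking morphism was \emph{not} applied have "siblings" reachable by applying a breaking morphism one step earlier), yielding an $S$-adic word $h(g(\bw'))$ that is not LSP — contradiction.

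The delicate point that needs care is step three's existence claim: one must guarantee that $S$ actually contains a morphism breaking the type of fragility created by $g$. I would handle this by a pumping/compactness argument on admissible directive words: since $S$ generates \emph{all} LSP words, it in particular generates LSP words with arbitrarily long fragilities of each type $t$ (these exist — iterate $L_\alpha$'s on $g(\bF)$), and tracing a directive word of such a word backwards through $S$, the fragility must have been created at some finite stage by a morphism in $S$ whose creation-type is $t$; similarly $S$ must generate words whose unique R-bLSP directive word uses a $t$-breaking morphism at some step (namely the non-fragile words obtained by applying a $t$-breaking morphism to a $t$-fragile LSP word after first checking via Lemma~\ref{when (a, b, c)-breaking morphisms appear} that the result is still LSP when the fragility type does not match — here one must be slightly careful and instead use that $S$ generates, say, the Tribonacci word together with its relabellings, forcing $L_a,L_b,L_c$-like morphisms into $S$, and then generates $\lambda_{abc}(\bw')$-type words forcing the $\lambda$-morphisms in). Once both a creating morphism and a breaking morphism of the same type lie in $S$, one composes them on a suitable LSP word and the composite image is $S$-adic but not LSP.

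The main obstacle I anticipate is the bookkeeping in step three: making precise, for a \emph{given} abstract set $S$ (whose morphisms need not literally be our $\SbLSP$ morphisms but only generate the same words), that $S$ must contain morphisms of both the fragility-creating and the fragility-breaking kinds of a matching type. This requires first normalizing the morphisms of $S$ — showing each $f\in S$ that is used in some admissible directive word acts, up to the rigidity constraints, like an R-bLSP morphism — and then transporting the local analysis of Section~\ref{sec:origin} and Table~\ref{table} to $S$. I expect the cleanest route is: reduce to $A=\{a,b,c\}$ (a larger alphabet only makes it easier, since the three-letter subalphabet LSP words form a subfamily that must also be $S$-generated after restriction), invoke the ternary automaton $\mathcal{A}_1$ of Figure~\ref{Graph_A1_ternary} to enumerate the finitely many possible creation/breaking behaviors, and check that no proper sub-behavior of $S$ can cover all reachable states without containing a create-then-break pair.
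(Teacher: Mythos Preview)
Your plan has a genuine gap in step three. You want to locate in $S$ a morphism $h$ that is $(a,b,c)$-breaking for some type $t$ matching a fragility created by another $g\in S$, and then argue that $h\circ g$ applied to a suitable LSP word is $S$-adic but not LSP. But by definition an LSP $(a,b,c)$-breaking morphism sends every $(a,b,c)$-fragile LSP word to a non-LSP word, and such fragile LSP words exist (Example~\ref{Exemple3}); hence a breaking morphism does \emph{not} preserve the LSP property. Yet every element of $S$ must preserve LSP: if $f\in S$ and $\bw$ is LSP (hence $S$-adic by hypothesis), then $f(\bw)$ is $S$-adic (prepend $f$ to a directive word) and therefore LSP. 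So $S$ contains no breaking morphism at all, and more generally any finite composition of elements of $S$ still preserves LSP, so the ``create-then-break'' strategy cannot produce a non-LSP $S$-adic word. Your justification that ``some LSP word requires a $t$-breaking desubstitution step'' conflates the unique $\SRbLSP$-directive word of an LSP word with its $S$-directive word: these are different decompositions, and the presence of a breaking morphism in the former tells you nothing about the contents of $S$.

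The paper takes the opposite tack. It first proves a sharp characterization (Theorem~\ref{th2}): the nonerasing endomorphisms of $A^*$ preserving the LSP property are exactly ${\cal L}^*\,Perm\cup{\cal P}_{\rm LSP}$; this is substantial and occupies most of Section~\ref{sec:preservation} (your step~1 sketch via Lemmas~\ref{lemma1}--\ref{lemma2} badly underestimates the work involved, cf.\ the seven-step proof of Proposition~\ref{P:CN_preserving2}). It then exhibits a concrete LSP word $\xi_a$, the fixed point of $\lambda_{abc}\lambda_{bca}\lambda_{cab}$, and shows (Lemma~\ref{L:xi_a}) that $\xi_a$ cannot be decomposed over any two-element set of words. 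This rules out ${\cal P}_{\rm LSP}$ and forces the first morphism of any $S$-directive word of $\xi_a$ to be defined on three letters; a cardinality argument then forces it into ${\cal L}^+Perm$, so $\xi_a=L_\alpha(\bw)$ for some letter $\alpha$. But $\xi_a$ contains the factor $abca$, which is impossible in any $L_\alpha$-image. In short, the paper shows that the LSP-preserving morphisms are too \emph{poor} to desubstitute a particular LSP word, rather than trying to show $S$ is rich enough to contain an incompatible pair.
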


The main idea of the proof lies on the fact that, given any set $S$ of morphisms, these morphisms preserve
the property of being an $S$-adic word: the image of any $S$-adic word by an element of $S$ is still $S$-adic.
Thus in this section, we first characterize the set of endomorphisms preserving the LSP property (the image of any LSP word is also LSP) and 
then show the existence of an LSP word that cannot be decomposed over this set of morphisms.

\subsection{Some Morphisms}

Let ${\cal L}(A) = \{ L_a \mid a \in A\}$. 
When the context will be clear, we will just write ${\cal L}$ instead of ${\cal L}(A)$. 
Observe that, in the binary case, ${\cal S}_{bLSP} = {\cal L}$.

\begin{lemma}
\label{L:La1}
A bLSP morphism preserves the LSP property if and only if it belongs to ${\cal L}$.
\end{lemma}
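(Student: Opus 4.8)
The plan is to prove both directions of the equivalence, starting with the easier ``if'' direction. For the ``if'' part, I would show that each $L_a$ (and more generally each element of $\mathcal{L}(A)$) preserves the LSP property. Given an LSP word $\bw$ and a letter $a$, I need to show $L_a(\bw)$ is LSP. By Lemma~\ref{when (a, b, c)-breaking morphisms appear}, $L_a(\bw)$ fails to be LSP only if $\bw$ is $(x,y,z)$-fragile for some pairwise distinct letters and the longest common prefix of $L_a(y)$ and $L_a(z)$ is strictly longer than that of $L_a(x)$ and $L_a(y)$. But $L_a$ sends every letter $\beta \neq a$ to $a\beta$ and $a$ to $a$, so the longest common prefix of $L_a(\beta)$ and $L_a(\gamma)$ for distinct letters $\beta,\gamma$ is always exactly $a$ (when neither is $a$), or $\varepsilon$/$a$ when one of them is $a$ --- in all cases it never strictly increases when passing from the $(x,y)$-pair to the $(y,z)$-pair in the way required. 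A short case analysis on whether $x$, $y$, $z$ equal $a$ finishes this direction; alternatively I can invoke Table~\ref{table} / the structure of $\mathcal{A}_{R\text{-}bLSP}$, noting $L_a$-transitions only propagate fragilities and are never breaking.

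For the ``only if'' direction, suppose $f$ is a bLSP morphism that preserves the LSP property; I must show $f \in \mathcal{L}$, i.e.\ that the rooted tree $T_f$ associated with $f$ has depth at most one (every non-root vertex is a child of the root $\alpha$). Suppose not: then there is a letter $c$ at depth $\geq 2$, so its parent $b$ is a non-root letter with $f(b) = \alpha\cdots b$ of length $\geq 2$ and $f(c) = f(b)c$. Now I want to produce an LSP word $\bw$ that is $(\alpha,b,c)$-fragile (or $(a,b,c)$-fragile for a suitable $a$), so that by Lemma~\ref{when (a, b, c)-breaking morphisms appear} combined with Corollary~\ref{carac LSP breaking morphisms}, $f(\bw)$ is not LSP --- because the longest common prefix of $f(b)$ and $f(c)$ is $f(b)$, which is strictly longer than the longest common prefix of $f(\alpha)=\alpha$ and $f(b)$, which is just $\alpha$. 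The key point is that $f$ is then LSP $(\alpha, b, c)$-breaking, so I just need \emph{some} LSP word which is $(\alpha,b,c)$-fragile and on which $f$ is defined.

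The main obstacle is exhibiting this LSP fragile word. The cleanest route: recall from the discussion after Definition~\ref{def_fragility} that any factor $\alpha b c$ or $\alpha c b$ in an infinite word produces an $(\alpha,b,c)$-fragility; more simply, a word with prefix $\alpha b$ having both $\alpha \varepsilon b$ and $x \varepsilon c$ as factors... but I need it LSP. A robust choice is the $k$-bonacci-type word: by the remarks following Theorem~\ref{th_carac}, the word $\bB = L_\alpha L_b L_c L_{d} \cdots$ (cycling through all letters of $A$, or at least through $\alpha, b, c$) is LSP, and one checks it contains an $(\alpha, b, c)$-fragility. Concretely, over $\{\alpha, b, c\}$ the Tribonacci word $\bT = L_\alpha L_b L_c(\bT)$ begins $\alpha b \alpha c \alpha b \alpha \cdots$; here $\varepsilon$ is an $(\alpha,b,c)$-fragility once one exhibits occurrences of $b$ preceded by two different letters --- actually one should pick the fragility witnessing the relevant triple carefully, using that in such words every letter other than the first is preceded by at least two distinct letters, or pass to a conjugate/shift. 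I would spell out: since $\alpha$ is the unique left special factor pattern source and $b, c$ both occur, there is a longest common context forcing a fragility of the required type; then apply the breaking morphism $f$ to get a contradiction with LSP-preservation. This completes the proof: $f$ must have tree depth $\leq 1$, hence $f = L_\alpha \in \mathcal{L}(A)$.
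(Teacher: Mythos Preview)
Your ``if'' direction is correct and actually cleaner than the paper's: you invoke Lemma~\ref{when (a, b, c)-breaking morphisms appear} and check that for $L_\alpha$ the longest-common-prefix condition can never be satisfied (all pairwise LCPs among images of distinct letters equal $\alpha$), whereas the paper argues directly by decomposing a putative bad left special factor of $L_\alpha(\bw)$. Both work; yours reuses existing machinery more efficiently.

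The ``only if'' direction, however, has a genuine gap. Your strategy is right: if $f\notin\mathcal{L}$ then the associated tree has depth $\geq 2$, so there are letters with $f(\alpha)=\alpha$, $f(b)=\alpha b$, $f(c)=\alpha bc$ (taking $c$ at depth $2$), and by Corollary~\ref{carac LSP breaking morphisms} the morphism $f$ is LSP $(\alpha,b,c)$-breaking. You then need an LSP word that is $(\alpha,b,c)$-fragile. But your proposed witness, the Tribonacci (or $k$-bonacci) word, does \emph{not} work: the paper explicitly states right after Theorem~\ref{th_carac} that $\bB_{a_1,\ldots,a_k}$ contains no fragility. Concretely, in the Tribonacci word every occurrence of $b$ and every occurrence of $c$ is preceded only by $\alpha$, so you can never find \emph{distinct} letters $\beta,\gamma$ with $\beta b$ and $\gamma c$ both factors. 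Your own hedging (``actually one should pick the fragility witnessing the relevant triple carefully\ldots'') signals you sensed the problem but did not fix it.

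The paper's fix is to use a word like $\lambda_{\alpha c b}(\bF)$ (Example~\ref{Exemple3} and the beginning of Section~\ref{sec:fragility}): this word is LSP and has the empty word as an $(\alpha,b,c)$-fragility, since $c\,b$ and $\alpha\,c$ are both factors. Any such explicit witness over $\{\alpha,b,c\}$ repairs your argument; the rest of your outline then goes through.
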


\begin{proof}
Let $f$ be a bLSP morphism that does not belong to ${\cal L}$.
There exist three pairwise distinct letters $a$, $b$ and $c$ such that
$f(a) = a$, $f(b) = ab$, $f(c) = abc$. 
As previously mentioned (see Example~\ref{Exemple3})
there exists an LSP word $\bw$ that contains an $(a, b, c)$-fragility (see the beginning of Section~\ref{sec:fragility}).
That is, for a word $u$ and distinct letters $\beta$, $\gamma$, $ua$ is a prefix of $\bw$ while $\beta u b$ and $\gamma uc$ are factors of $f(\bw)$.
Then $f(u)aa$ is a prefix of $f(\bw)$ while
$\beta f(u)ab$ and $\gamma f(u)ab$ are factors of $\bw$: $f(\bw)$ is not LSP and so $f$ does not preserve the LSP property.

Conversely for any LSP word $\bw$ and any letter $\alpha$, let us prove that $L_\alpha(\bw)$ is also LSP.
Assume by contradiction that $L_\alpha(\bw)$ is not LSP and consider a word $u$ and two distinct letters $a$ and $b$, such that $ub$ is a left special factor of $L_\alpha(\bw)$ while $ua$ is a prefix of $L_\alpha(\bw)$. 
Let $\beta$ and $\gamma$ be distinct
 letters such that $\beta ub$ and $\gamma ub$ are factors of $L_\alpha(\bw)$.
Observe that at least one of the letters $a$ and $b$ is different from $\alpha$, and also at least one of the letters 
$\beta$ and $\gamma$ is different from $\alpha$. 
The structure of $L_\alpha$ (any letter different from $\alpha$ is preceded by $\alpha$ in the image of any word)
implies that $u$ ends and begins with the letter $\alpha$ (in particular it is not empty) and so can be decomposed $u = L_\alpha(v)\alpha$ for a word $v$.
One can deduce that $va$ is a prefix of $\bw$ while $\beta v b$ and $\gamma v b$ are factors of $\bw$:  a contradiction with
$\bw$ LSP.
\end{proof}

Observe that Lemma~\ref{L:La1} does not extend to R-bLSP morphisms. 
For instance $[-, -, abc, ad]$ preserves the LSP property for words over $\{c, d\}$.

It is important also to observe that the first part of the proof of Lemma~\ref{L:La1} also shows the next result.

\begin{lemma}
\label{L:La1bis}
If a bLSP morphism does not belong to ${\cal L}$ then it is $(a, b, c)$-breaking for some pairwise distinct letters $a$, $b$ and $c$.
\end{lemma}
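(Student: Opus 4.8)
The plan is to prove Lemma~\ref{L:La1bis} essentially by isolating the relevant computation already carried out in the first half of the proof of Lemma~\ref{L:La1}. So let me restate what has to be shown: given a bLSP morphism $f$ on an alphabet $A$ that does not belong to $\mathcal L$, I must exhibit three pairwise distinct letters $a$, $b$, $c$ of $A$ such that $f$ is LSP $(a,b,c)$-breaking.

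\medskip

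First I would argue why a bLSP morphism $f \notin \mathcal L$ must contain an ``$\lambda$-like'' pattern. Recall that $f$ corresponds to a rooted labeled tree $T = (A,E)$ with root $\alpha = \first(f)$, and $f(\beta)$ is obtained by reading the labels along the path from $\alpha$ to $\beta$. The morphism $f$ lies in $\mathcal L$ precisely when $T$ has depth one, i.e. every non-root vertex is a child of $\alpha$. Hence $f \notin \mathcal L$ means $T$ has a vertex at depth at least two; equivalently there exist pairwise distinct letters $a = \alpha$, $b$, $c$ with $f(a) = a$, $f(b) = ab$, $f(c) = abc$, where $b$ is the parent of $c$ in $T$ and $a$ is the root (so the path from the root to $c$ reads $a,b,c$). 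This is exactly the configuration assumed at the start of the proof of Lemma~\ref{L:La1}.

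\medskip

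Next I would invoke Corollary~\ref{carac LSP breaking morphisms}: an R-bLSP morphism $f$ is LSP $(a,b,c)$-breaking for pairwise distinct $a,b,c$ if and only if the longest common prefix of $f(b)$ and $f(c)$ is strictly longer than the longest common prefix of $f(a)$ and $f(b)$. With $f(a) = a$, $f(b) = ab$, $f(c) = abc$, the longest common prefix of $f(a)$ and $f(b)$ is $a$ (length $1$), while the longest common prefix of $f(b) = ab$ and $f(c) = abc$ is $ab$ (length $2$), which is strictly longer. Hence $f$ is LSP $(a,b,c)$-breaking, as desired. (Equivalently, and as the paper does in Lemma~\ref{L:La1}, one can argue directly: using the existence of an LSP word $\bw$ with an $(a,b,c)$-fragility $u$ — guaranteed since any occurrence of the factor $abc$ creates one, cf.\ the discussion at the start of Section~\ref{sec:fragility} — the word $f(u)aa$ is a prefix of $f(\bw)$ while $\beta f(u)ab$ and $\gamma f(u)ab$ are factors, so $f(u)ab$ is left special in $f(\bw)$ but not a prefix; applying this to an \emph{arbitrary} LSP $(a,b,c)$-fragile word yields the breaking property.)

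\medskip

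I do not expect any serious obstacle here: the statement is essentially a factorization of the argument of Lemma~\ref{L:La1}, and the only thing to be careful about is making explicit that ``$f \notin \mathcal L$'' forces the tree-depth-$\geq 2$ configuration, rather than some other deviation — but since $\mathcal L$ consists of exactly the depth-one trees (the morphisms $L_\alpha$), there is no other possibility. The mild subtlety worth spelling out is that the three letters $a$, $b$, $c$ produced are genuinely pairwise distinct (they are three vertices on a simple path in a tree, hence distinct), so that Corollary~\ref{carac LSP breaking morphisms} and the definition of ``$(a,b,c)$-breaking'' (which require three pairwise distinct letters) indeed apply.
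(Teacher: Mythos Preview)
Your proposal is correct and follows the paper's own approach: the paper simply states that the first part of the proof of Lemma~\ref{L:La1} (identifying letters $a,b,c$ with $f(a)=a$, $f(b)=ab$, $f(c)=abc$ from the depth-$\geq 2$ tree structure, and then checking that $f$ destroys the LSP property on any $(a,b,c)$-fragile LSP word) already establishes Lemma~\ref{L:La1bis}. Your only variation is to conclude via Corollary~\ref{carac LSP breaking morphisms} rather than by the direct computation, which is an equivalent and slightly cleaner way to finish.
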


We will also need the following result.

\medskip %Extra space added in order to make "Lemma 19" appears

\begin{lemma}
\label{L:La2}
Let $\alpha$ be a letter, $a$, $b$ and $c$ be three distinct letters and $\bw$ be a word.
The word $\bw$ contains an $(a, b, c)$-fragility if and only if
$L_\alpha(\bw)$ contains an $(a, b, c)$-fragility.
\end{lemma}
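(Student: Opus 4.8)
The plan is to obtain this from Lemma~\ref{origine fragilities}, applied with the R-bLSP morphism $L_\alpha$ (or its restriction to ${\rm alph}(\bw)$), for which $\first(L_\alpha) = \alpha$. That lemma identifies the fragilities of $L_\alpha(\bw)$ as the fragilities associated with $L_\alpha$ together with the fragilities of $\bw$ propagated by $L_\alpha$, so it is enough to understand these two contributions and to see that each respects the $3$-tuple of letters.

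First I would observe that $L_\alpha$ has no associated (new) fragility: such a fragility would require distinct letters $\beta,\gamma$ with $\beta b$ and $\gamma c$ both factors of images of letters by $L_\alpha$, but every image $L_\alpha(\delta)$ equals $\alpha$ or $\alpha\delta$, so every length-two factor of an image begins with $\alpha$, forcing $\beta = \gamma = \alpha$, a contradiction. Hence every fragility of $L_\alpha(\bw)$ is propagated by $L_\alpha$ from a fragility of $\bw$.

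The heart of the proof is then to check that propagation through $L_\alpha$ neither destroys a fragility nor alters its letters. The key computation is that for any letter $\delta$, the word $L_\alpha(\delta)\alpha$ is $\alpha\delta\alpha$ if $\delta \neq \alpha$ and $\alpha\alpha$ if $\delta = \alpha$, so its first two letters are always $\alpha$ and $\delta$. Thus, if $v$ is an $(a',b',c',\beta,\gamma)$-fragility of $\bw$ (with $a',b',c'$ pairwise distinct), the three words $L_\alpha(va')\alpha$, $L_\alpha(vb')\alpha$, $L_\alpha(vc')\alpha$ share the common prefix $L_\alpha(v)\alpha$ and are continued respectively by the pairwise distinct letters $a'$, $b'$, $c'$; by the definition of propagated fragilities together with Property~\ref{prop:bLSP properties}(\ref{prop:last}), this is exactly what is needed to conclude that $L_\alpha(v)\alpha$ is an $(a',b',c',\beta,\gamma)$-fragility of $L_\alpha(\bw)$. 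Conversely, a fragility of $L_\alpha(\bw)$ propagated from such a $v$ has the form $L_\alpha(v)\alpha u'$ where $\alpha u' a$, $\alpha u' b$, $\alpha u' c$ are prefixes of $L_\alpha(a')\alpha$, $L_\alpha(b')\alpha$, $L_\alpha(c')\alpha$; comparing second letters, if $u'$ were nonempty its first letter would have to equal $a'$, $b'$ and $c'$ simultaneously, which is impossible, so $u' = \varepsilon$ and $(a,b,c) = (a',b',c')$. Combining the two contributions gives $\frag(L_\alpha(\bw)) = \frag(\bw)$, which is the statement.

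I expect the only point requiring a little care to be this last step, namely ruling out that $L_\alpha$ could propagate a fragility of $\bw$ into a fragility of $L_\alpha(\bw)$ carrying a different $3$-tuple; the rest is a direct unwinding of definitions. If one prefers to avoid Lemma~\ref{origine fragilities} altogether, the same elementary facts about $L_\alpha$ (images begin with $\alpha$, end with the letter being substituted, and $\alpha$ occurs only as a prefix of an image) yield a direct two-way argument: for the forward implication, check that $L_\alpha(u)\alpha$ is an $(a,b,c)$-fragility of $L_\alpha(\bw)$ whenever $u$ is an $(a,b,c)$-fragility of $\bw$ (using that $\bw$ is infinite, so each relevant factor is followed by a letter whose image starts with $\alpha$); for the converse, given an $(a,b,c)$-fragility $U$ of $L_\alpha(\bw)$, use that the occurrences of $\alpha$ mark the starts of images to write $U = L_\alpha(v)\alpha$ and recover an $(a,b,c)$-fragility $v$ of $\bw$.
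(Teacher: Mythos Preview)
Your argument is correct. The paper, however, does not route the proof through Lemma~\ref{origine fragilities}: it gives a direct two-line argument, essentially the ``elementary'' alternative you sketch at the end. For the \emph{if} part it observes (as in the converse of Lemma~\ref{L:La1}) that any $(a,b,c)$-fragility $u$ of $L_\alpha(\bw)$ must begin and end with $\alpha$, hence $u = L_\alpha(v)\alpha$, and then reads off an $(a,b,c)$-fragility $v$ of $\bw$; the \emph{only if} part is declared straightforward and left to the reader. Your approach is more systematic---it shows the stronger equality $\frag(L_\alpha(\bw)) = \frag(\bw)$ and makes explicit why $L_\alpha$ neither creates nor alters fragility types---at the cost of invoking the general machinery of Section~\ref{sec:origin}. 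Both proofs ultimately rest on the same elementary fact that the second letter of $L_\alpha(\delta)\alpha$ is always $\delta$.
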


\begin{proof}
The proof of the if part is similar to the converse part of the proof of Lemma~\ref{L:La1}
(if $u$ is a word such that $ua$ is a prefix of $L_\alpha(\bw)$ 
and the words $\beta u b$ and $\gamma uc$ are factors of $L_\alpha(\bw)$ with $\beta \neq \gamma$
then $u = L_\alpha(v)\alpha$ for a word $v$ such that $va$ is a prefix of $\bw$ and the words
$\beta w b$ and $\gamma w c$ are factors of $\bw$).
The proof of the only if part is straightforward and is left to readers.
\end{proof}

There exist morphisms that preserve the LSP property without being bLSP morphisms. We introduce some of them.
First, we denote by ${\cal U}(A, B)$ (or simply ${\cal U}$ when the context is clear), the set of all nonerasing morphisms 
from $A^*$ to $B^*$ 
such that for each letter $a$ in $B$, $\sum_{\alpha \in A} |f(\alpha)|_a \leq 1$ 
(each letter of $B$ occurs at most once in the set of images of letters by $f$). Here ${\cal U}$ stands for uniqueness. 
\textit{Renaming morphisms} are nonerasing morphisms such that, for all $a$ in $A$, $|f(a)| =1$ and such that
$\#\{f(a) \mid a \in A\} = A$. 
Renaming morphisms are elements of ${\cal U}(A, B)$. 
\textit{Permutation morphisms} (or simply \textit{permutations}), that is, morphisms such that $\{f(a) \mid a \in A\} = A$, are particular renaming morphisms. Let $Perm(A)$ (or simply $Perm$) denote the set of permutation morphisms.

Second, we denote by ${\cal P}_{\rm LSP}(A, B)$ (or simply ${\cal P}_{\rm LSP}$ when the context is clear), the set of all nonerasing morphisms 
from $A^*$ to $B^*$ for which there exists a word $u$ over $B$ with $f(\alpha) \subseteq u^+$ for all letters $\alpha$ in $A$ and
$u^\omega$ an LSP infinite word. Here ${\cal P}$ stands for periodic as any morphism in ${\cal P}_{\rm LSP}$ maps any infinite word to a periodic word. The proof of the next lemma follows quite immediately from the definitions and is left to the reader.

\begin{lemma}
All morphisms in ${\cal U}(A, B) \cup {\cal P}_{\rm LSP}(A, B)$ preserve the LSP property.
\end{lemma}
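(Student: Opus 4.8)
The statement asserts that every morphism in $\mathcal{U}(A,B)$ and every morphism in $\mathcal{P}_{\rm LSP}(A,B)$ preserves the LSP property. The plan is to treat the two families separately, since the underlying mechanisms are quite different: morphisms in $\mathcal{P}_{\rm LSP}$ produce purely periodic words, so LSP-ness of the image is controlled by the fixed periodic word $u^\omega$, whereas morphisms in $\mathcal{U}$ are ``almost injective'' in the sense that each target letter appears at most once among the images, which lets us lift left special factors of the image back to left special factors of the source.

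For the $\mathcal{U}(A,B)$ case, I would argue by contradiction in the style of the proof of Lemma~\ref{lemma1}. Suppose $\bw$ is LSP but $f(\bw)$ is not, and pick a left special factor $U$ of $f(\bw)$ of minimal length that is not a prefix of $f(\bw)$; then there are a word $V$, distinct letters $a',b'$ and distinct letters $\beta,\gamma$ such that $Va'$ is a prefix of $f(\bw)$ while $\beta V b'$ and $\gamma V b'$ are factors of $f(\bw)$. The key point is the uniqueness condition $\sum_{\alpha\in A}|f(\alpha)|_a\le 1$ for every $a\in B$: since each letter of $B$ occurs at most once across all images of letters, the occurrences of any fixed letter in $f(\bw)$ determine a canonical factorization into blocks $f(\alpha)$, so one can locate where in this block decomposition the factor $V$ (and its extensions) sit. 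Writing $V = f(v)\,s$ where $s$ is a proper prefix of some image and $v$ is the corresponding prefix of $\bw$, the three occurrences $Va'$, $\beta V b'$, $\gamma V b'$ force, by the uniqueness of the block to the right of $s$, three letters $a,b,c$ in $\bw$ (with $b\ne c$ because otherwise $vb$ would be a non-prefix left special factor of $\bw$, contradicting LSP). One then extracts from $\bw$ a left special factor shorter than $|U|$ that is not a prefix — e.g. by descending into the block $f(a)$ versus $f(b)$ as in Lemma~\ref{lemma1}, or directly the factor $v$ together with the letters $a,b$ — contradicting either minimality of $U$ or the LSP-ness of $\bw$.

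For the $\mathcal{P}_{\rm LSP}(A,B)$ case, the argument is essentially immediate: if $f\in\mathcal{P}_{\rm LSP}$ with witnessing word $u$, then for any infinite word $\bw$ the image $f(\bw)$ is an infinite word all of whose letters' images lie in $u^+$, hence $f(\bw)$ is a concatenation of blocks each equal to a power of $u$, so $f(\bw)=u^\omega$ is purely periodic with period $u$. Since $u^\omega$ is assumed to be LSP by hypothesis on $f$, we conclude that $f(\bw)$ is LSP. (One should note that here LSP-ness does not depend on $\bw$ at all, so there is essentially nothing to prove beyond unwinding the definition of $\mathcal{P}_{\rm LSP}$.)

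I expect the main obstacle to be the bookkeeping in the $\mathcal{U}$ case: making precise that the uniqueness condition really does let one recover a well-defined ``desubstitution'' of the relevant portion of $f(\bw)$, and handling the boundary subtleties when $V$ ends in the middle of an image block versus exactly at a block boundary, and when $a'$ or $b'$ happens to be the first letter of the next block. These are exactly the kinds of edge cases that appear in the proofs of Lemma~\ref{lemma1} and Lemma~\ref{origine fragilities}, and the fact that the paper explicitly says the proof ``follows quite immediately from the definitions and is left to the reader'' suggests the intended argument is precisely this contradiction-plus-block-decomposition routine, so I would keep the write-up short and lean on the analogy with Lemma~\ref{lemma1} rather than redoing every case in full detail.
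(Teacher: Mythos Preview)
The paper gives no proof of this lemma (it is ``left to the reader''), so there is nothing to compare against directly. Your treatment of the $\mathcal{P}_{\rm LSP}$ case is correct and is exactly the one-line argument the definition is set up for.

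For the $\mathcal{U}$ case your approach works, but you are not fully exploiting how strong the condition $\sum_{\alpha\in A}|f(\alpha)|_a\le 1$ is, and this leads you to an unnecessarily elaborate sketch. The point is that every letter of $B$ occurring in $f(\bw)$ lies in \emph{exactly one} image $f(\alpha)$ and at exactly one position there; hence any factor of $f(\bw)$ has a \emph{unique} decomposition into blocks, independent of where it occurs. In your notation this forces two things you treat as open cases: first, if $s\neq\varepsilon$ then the letter following $s$ inside its block is uniquely determined, so $a'=b'$, contradicting $a'\neq b'$; thus $s=\varepsilon$ and $V=f(v)$. Second, once $s=\varepsilon$, the letter $b'$ is the first letter of a unique image $f(b)$, so your $b$ and $c$ are automatically equal. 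Your parenthetical ``otherwise $vb$ would be a non-prefix left special factor of $\bw$'' is therefore not a side case to be discharged but the whole argument: $vb$ is left special in $\bw$ and not a prefix, contradicting LSP. The minimality of $U$ and the further extraction step you mention are not needed.

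In fact a direct (non-contradiction) version is shorter: for any left special factor $U$ of $f(\bw)$, write $U=f(v)s$ by the unique block decomposition; the two distinct letters preceding $U$ must be last letters of two distinct blocks, giving two distinct letters preceding $v$ (or $v\delta$ if $s\neq\varepsilon$) in $\bw$; LSP-ness of $\bw$ then makes $v$ (resp.\ $v\delta$) a prefix of $\bw$, hence $U$ a prefix of $f(\bw)$.
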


\subsection{A first necessary condition}

\begin{proposition}
\label{P:CN_preserving}
Any nonerasing morphism $f$ that preserves the LSP property for infinite words belongs to $\SRbLSP(B)^*{\cal U}(A, B) \cup {\cal P}_{LSP}(A, B)$.
\end{proposition}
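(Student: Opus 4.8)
The plan is to dispose first of a degenerate possibility and otherwise to argue by induction on $N(f):=\sum_{\alpha\in A}|f(\alpha)|$. Assume first that the words $f(\alpha)$, $\alpha\in A$, all lie in $u^+$ for a common word $u$, which we may take primitive. Since $a^\omega$ is LSP for every $a\in A$ and $f$ preserves the LSP property, $f(a^\omega)=f(a)^\omega=u^\omega$ is LSP, so $f\in{\cal P}_{\rm LSP}(A,B)$ and we are done. From now on assume no such $u$ exists (in particular $\#A\ge 2$); I shall prove $f\in\SRbLSP(B)^*{\cal U}(A,B)$. Note that this non-degeneracy is inherited by left-factors: if $f=g\circ f'$ with $g$ and $f'$ nonerasing, then $f'(\alpha)\in v^+$ for all $\alpha$ would force $f(\alpha)\in g(v)^+$ for all $\alpha$, a contradiction; and since an LSP-preserving $f'$ all of whose images lie in a common $v^+$ would satisfy $v^\omega$ LSP and hence lie in ${\cal P}_{\rm LSP}$, such an $f'$ never falls into the ${\cal P}_{\rm LSP}$ alternative.

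\emph{First case: all $f(\alpha)$ begin with the same letter $\delta$.} Write $A=\{a_1,\dots,a_k\}$ and put $\bw=(a_1\cdots a_k)^\omega$, which is LSP (its only left special factor is the empty word) with $\mathrm{alph}(\bw)=A$. Then $f(\bw)$ is LSP, so by Lemma~\ref{lemma2} it factors uniquely as $f(\bw)=g(\by)$ with $\by=y_1y_2\cdots$ infinite and $g\in\SRbLSP(\mathrm{alph}(\by))$. By the definition of R-bLSP morphisms and Property~\ref{prop:bLSP properties}(\ref{property1}) we have $\first(g)=\delta$, and $\delta$ occurs exactly once, at its first position, in each $g(y_i)$; thus the occurrences of $\delta$ in $f(\bw)$ are exactly the left endpoints of the factors $g(y_1),g(y_2),\dots$. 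Since every $f(a_i)$ begins with $\delta$, each block $f(a_i)$ of $f(\bw)$ starts at such a left endpoint, hence is a concatenation of consecutive factors $g(y_j)$: $f(a_i)=g(z_i)$ for a unique nonempty word $z_i$ over $C:=\mathrm{alph}(\by)$. Defining $f'(a_i)=z_i$ yields a nonerasing morphism $\map{f'}{A^*}{C^*}$ with $\mathrm{alph}(f'(A))=C$ and $f=g\circ f'$. Because $f$ is non-degenerate, $\#C\ge 2$ (otherwise $\by=c^\omega$ and every $f(a_i)=g(c)^{|z_i|}$), so $g$ maps the non-root letters of $C$ to words of length $\ge 2$ and $N(f')<N(f)$. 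Moreover $f'$ preserves the LSP property: for any LSP word $\bv$ over a subalphabet of $A$, the word $g(f'(\bv))=f(\bv)$ is LSP, hence $f'(\bv)$ is LSP by Lemma~\ref{lemma1}. By the induction hypothesis (and the remark above, $f'$ being non-degenerate) $f'\in\SRbLSP(C)^*{\cal U}(A,C)$, whence $f=g\circ f'\in\SRbLSP(B)^*{\cal U}(A,B)$, as $C=\mathrm{alph}(f(A))\subseteq B$.

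\emph{Second case: the $f(\alpha)$ do not all begin with the same letter} (then $f$ is automatically non-degenerate). I claim $f\in{\cal U}(A,B)$, which places it in $\SRbLSP(B)^*{\cal U}(A,B)$ with an empty R-bLSP part. Suppose not: some letter $\gamma\in B$ occurs at least twice among the images — either in two distinct images $f(a),f(b)$ with $a\ne b$, or at least twice in a single image $f(c)$. Fix $e\in A$ with $\first(f(e))\ne\gamma$ (possible, since the images do not all begin with $\gamma$). The goal is to build an LSP word $\bw$ over $A$ that begins with $e$ and inside whose image $f(\bw)$ the letter $\gamma$ is preceded by two distinct letters: then $\gamma$ is a left special factor of $f(\bw)$ that is not a prefix (as $f(\bw)$ begins with $\first(f(e))\ne\gamma$), contradicting that $f(\bw)$ is LSP. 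One constructs such a $\bw$ by putting $e$ at its head and placing $a,b$ (resp.\ $c$, together with a suitably chosen preceding letter) further along; using that periodic words $(e\,a_{i_1}a_{i_2}\cdots)^\omega$ on pairwise distinct letters are LSP, and that standard episturmian words with a prescribed initial letter are LSP, one has enough freedom to realize a chosen two-letter context around an occurrence of $\gamma$, and — after a short case analysis according to whether the relevant occurrences of $\gamma$ are interior to an image or leading (in the latter case their left context is the last letter of the preceding image, which can be prescribed) — one forces the two left contexts to differ. This last step is the main technical obstacle; the rest is bookkeeping. The two cases cover every non-degenerate LSP-preserving $f$, and the recursion of the first case strictly decreases $N$ while preserving non-degeneracy, so it terminates, establishing $f\in\SRbLSP(B)^*{\cal U}(A,B)$.
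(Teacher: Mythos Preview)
Your overall architecture matches the paper's: handle the periodic case first, then induct on $N(f)$, splitting on whether all images share a first letter. Your Case~1 is fine and in fact slightly slicker than the paper's Step~7, since you extract the R-bLSP factor $g$ directly from Lemma~\ref{lemma2} applied to $f((a_1\cdots a_k)^\omega)$ rather than building it by hand.

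The genuine gap is Case~2. You claim that if the images do not all begin with the same letter and some letter $\gamma$ occurs at least twice in $f(A)$, then one can produce an LSP word $\bw$ beginning with $e$ (where $\first(f(e))\ne\gamma$) in which $\gamma$ is left special in $f(\bw)$. But nothing you have said rules out the possibility that \emph{every} occurrence of $\gamma$ in every $f(\bw)$ has the same left neighbour. Concretely, if the two occurrences of $\gamma$ are both internal and both preceded by the same letter $\beta$, and $\gamma$ is never a first letter of an image, then $\gamma$ is simply never left special, and your construction cannot succeed. Your parenthetical about ``interior versus leading'' occurrences does not address this, and your own sentence ``This last step is the main technical obstacle'' concedes that the argument is not actually given.

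This is not a minor omission: it is where essentially all the work lies. The paper spends six steps on exactly this point (Steps~1--6 of Proposition~\ref{P:CN_preserving2}, together with Lemma~\ref{lemma_x4}). One first shows, using carefully chosen LSP test words (Fibonacci-type and $k$-bonacci-type) together with Lemma~\ref{lemma_x3}, that distinct letters must have images with distinct last letters; after normalizing so that $\last(f(\alpha))=\alpha$, one then shows that any repeated letter, any internally left-special letter, or any pair of images sharing a first letter, forces \emph{all} images to share a first letter. The backtracking idea you hint at (pass from $\gamma$ to its unique predecessor $\beta$, and iterate) is indeed part of the argument, but making it terminate correctly requires the ``distinct last letters'' and ``distinct primitive roots'' facts, neither of which you have established. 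In particular, Lemma~\ref{lemma_x4} (two images sharing a primitive root forces all images to) is needed and is itself nontrivial; your dichotomy only assumes the weaker ``not all images lie in a common $u^+$'', which is not obviously enough for the Case~2 analysis.
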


The converse of this proposition is false as some R-bLSP morphisms do not preserve the LSP property (see the notion of LSP breaking morphisms).

The rest of the section is devoted to the proof of Proposition~\ref{P:CN_preserving}.
We first need a technical lemma on factors of LSP infinite words.
\begin{lemma}
\label{lemma_x3}
Let $\bw$ be an LSP word, $u$ be a prefix of $\bw$ and $n \geq 1$ be an integer. 
If $u^{n+1}$ is a factor of $\bw$, then $u^n$ is a prefix of $\bw$.
\end{lemma}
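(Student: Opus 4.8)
The plan is to prove Lemma~\ref{lemma_x3} by induction on $n$, exploiting the LSP property to control where occurrences of $u$ can sit inside $\bw$.

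\textbf{Base case $n = 1$.} Here we must show that if $u^2$ is a factor of $\bw$ then $u$ is a prefix of $\bw$. By hypothesis $u$ is already a prefix of $\bw$, so there is nothing to prove. (The point of the lemma only becomes substantive for $n \geq 2$; but having $n=1$ trivially true lets the induction start cleanly.)

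\textbf{Inductive step.} Assume the statement holds for $n-1$ (with $n \geq 2$), and suppose $u^{n+1}$ is a factor of $\bw$. Then in particular $u^{n}$ is a factor of $\bw$, so by the induction hypothesis $u^{n-1}$ is a prefix of $\bw$; moreover $u$ is a prefix of $\bw$, hence $u^{n-1}$ and $u$ are comparable for the prefix order and so $u^{n}$, being $u\cdot u^{n-1}$ read appropriately, actually makes $u^{\infty}$ a prefix direction — more carefully, since $u$ is a prefix of $\bw$ and $u^{n-1}$ is a prefix of $\bw$, and $|u^{n-1}| \geq |u|$, the word $u^{n-1}$ begins with $u$, so $\bw$ begins with $u^{n-1}$ and we can write $u^{n-1}$ as a prefix. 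The goal is to upgrade this to: $u^{n}$ is a prefix of $\bw$. Suppose not. Take the occurrence of $u^{n+1}$ guaranteed by hypothesis, say $\bw = p\, u^{n+1}\, s$ for some finite $p$ and infinite tail. Consider the factor $v := u^{n-1}$, which occurs both as a prefix of $\bw$ (just shown) and internally, namely inside $u^{n+1}$ shifted by one copy of $u$: both $u^{n-1}$ and $u\cdot u^{n-1}=u^{n}$ are factors, the latter witnessing $u^{n-1}$ preceded by (the last letter of) $u$. The strategy is to produce a left special factor of $\bw$ that is not a prefix, contradicting the LSP property: compare the letter preceding the prefix occurrence of $u^{n-1}$ (there is none, it is a prefix) versus the letter preceding the internal occurrence. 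The cleaner route is: let $m$ be the largest integer such that $u^{m}$ is a prefix of $\bw$; by the induction hypothesis $m \geq n-1$, and by contradiction $m \leq n-1$, so $m = n-1$. Write $\bw = u^{m} x \cdots$ where $x = \first(u)$ fails, i.e., the letter of $\bw$ at position $|u^{m}|$ differs from $\first(u)$; call this letter $d \neq \first(u)$. Now inside the occurrence $p\,u^{n+1}\,s$, the prefix of length $|u^{m}|$ of the suffix $u^{n+1}s$ equals $u^{m}$ (since $n+1 > m$), and it is immediately followed by the letter $\first(u)$ (the first letter of the next copy of $u$, since $n+1 \geq m+1$). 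Hence $u^{m}$ occurs in $\bw$ followed by $\first(u)$, so $u^{m}\first(u)$ is a factor; but also $u^{m}d$ is a prefix of $\bw$ with $d\neq\first(u)$. If $p$ is nonempty, the internal occurrence of $u^m$ is preceded by some letter $e$, while the prefix occurrence of $u^m$ is preceded by nothing; we need a \emph{left} special factor. So instead: let $z$ be the longest common prefix of $u^{m}\first(u)$ and $u^{m}d$ — this is $u^{m}$ itself. The factor $u^{m-1}$ is then left special: indeed $u^{m-1}$ occurs inside $\bw$ preceded by $\first(u^2\cdots)$ in two incompatible ways...

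\textbf{The main obstacle.} The real difficulty is arranging the argument so that the contradiction is genuinely about a \emph{left} special factor (two different \emph{left} extensions), since the obvious discrepancy — $u^{m}$ followed by two different letters — is about \emph{right} extensions, which the LSP property does not control. The fix is to look one copy of $u$ later: since $u^{n+1}$ is a factor and also (by maximality of $m=n-1$) $u^{m+1}=u^{n}$ is \emph{not} a prefix, yet $u^{m}$ \emph{is} a prefix, the word $u^{m}$ occurs internally preceded by $\last(u)$ (from the previous $u$-block) whereas reading $\bw$ from a position where a \emph{prefix} copy $u^{m}$ sits shifted by one letter gives $u^{m}$ preceded by a letter determined by $u$ being a prefix — and crucially these two left-extending letters of the common factor $u^{m-|u|+1}$ (or a suitable internal factor of $u^{m}$) differ because otherwise $\bw$ would be periodic with period $|u|$ starting earlier, forcing $u^{m+1}$ to be a prefix, contradicting maximality of $m$. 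Formalizing exactly which internal factor of $u^{m}$ to take and checking its two left extensions differ is the delicate bookkeeping; everything else is routine. I would write it by fixing $m$ maximal with $u^m$ a prefix, assuming $m<n$, and deriving from the factor $u^{n+1}$ (with $n+1\ge m+2$) a left special factor of length $\ge |u^m|$ that is not a prefix — which is the desired contradiction with the LSP property, completing the induction.
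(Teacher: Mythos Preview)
Your proposal has a genuine gap: you correctly identify the obstacle (the naive discrepancy you find is about right extensions, not left) but never actually produce a left special factor or verify that its two left extensions differ. The sentence ``Formalizing exactly which internal factor of $u^{m}$ to take and checking its two left extensions differ is the delicate bookkeeping; everything else is routine'' is precisely where the content of the proof lives, and you have left it blank. The induction scaffolding is also a red herring: knowing that $u^{n-1}$ is a prefix does not by itself bring you closer to exhibiting a left special factor, and the vague talk of ``shifting by one letter'' and ``$u^{m-|u|+1}$'' does not converge to a concrete word with two distinct left extensions.

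The idea you are missing, and which the paper uses, is this: take the given occurrence of $u^{n+1}$ and extend it to the \emph{left} as far as the $|u|$-periodicity persists. This yields a decomposition $u = u_1u_2$ such that either $u_2(u_1u_2)^{n+1}$ is a prefix of $\bw$, or $\alpha\, u_2(u_1u_2)^{n+1}$ is a factor of $\bw$ for some letter $\alpha \neq \last(u_2u_1)$. In the second case the conjugate power $(u_2u_1)^{n}$ is left special --- it is preceded both by $\alpha$ and, one period deeper inside the block, by $\last(u_2u_1)$ --- hence a prefix of $\bw$ by the LSP property; in the first case $(u_2u_1)^{n}$ is visibly a prefix. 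Now the hypothesis that $u$ itself is a prefix of $\bw$ closes the argument: since $n \geq 1$, both $u$ and $u_2u_1$ are prefixes of $\bw$ of the same length $|u|$, so $u = u_2u_1$ and $u^{n}$ is a prefix. No induction is needed; the proof is direct once you slide the occurrence leftward and work with the appropriate conjugate of $u$.
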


\begin{proof}
Let us consider an occurrence of the word $u^{n+1}$.
Observing its extension on the left, we deduce that $u = u_1u_2$ for two words $u_1$ and $u_2$ such that one of the following two possibilities hold:
\begin{itemize}
\item $u_2(u_1u_2)^{n+1}$ is a prefix of $\bw$,
\item $\alpha u_2(u_1u_2)^{n+1}$ is a factor of $\bw$ with $\alpha$ a letter different from $\last(u_2u_1)$.
\end{itemize}
In the second case, the word $(u_2u_1)^n$ is a left special factor of $\bw$.
As $\bw$ is LSP, $(u_2u_1)^n$ is a prefix of $\bw$.
This also holds in the first case.
By hypothesis, $u$ is a prefix of $\bw$. As $n \geq 1$, both $u$ and $u_2u_1$ are prefixes of $w$: $u = u_2u_1$ and consequently $u^n$ is a prefix of $\bw$. 
\end{proof}

For a non-empty finite word $w$, let $\sqrt{w}$ denote its primitive root, that is, the shortest word $u$ such that $w = u^k$ for some integer $k$.
We now show when  the second condition of Proposition~\ref{P:CN_preserving} happens.

\begin{lemma}
\label{lemma_x4}
Let $f$ be a nonerasing morphism that preserves the LSP property for infinite words.
Assume there exist two distinct letters $a$ and $b$ such that $\sqrt{f(a)} = \sqrt{f(b)}$. 
Then for all letters $\alpha$, $f(\alpha) \in \sqrt{f(a)}^+$, that is, $f \in {\cal P}_{\rm LSP}$.
\end{lemma}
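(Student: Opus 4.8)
The plan is to argue by contradiction: assuming some image $f(c)$ fails to be a power of $u:=\sqrt{f(a)}$, I will exhibit an LSP word whose image under $f$ is not LSP, contradicting the hypothesis that $f$ preserves the LSP property.

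\emph{Setup and the free half.} Write $f(a)=u^{p}$, $f(b)=u^{q}$ with $u$ primitive and $p,q\ge 1$; by the symmetry of $a$ and $b$ in the hypothesis we may assume $p\le q$. Since $a^{\omega}$ is LSP and $f$ preserves the LSP property, $u^{\omega}=f(a^{\omega})$ is LSP, which is exactly the extra condition demanded for membership in ${\cal P}_{\rm LSP}$; so it suffices to prove $f(\alpha)\in u^{+}$ for every letter $\alpha$. This is clear for $\alpha\in\{a,b\}$, so fix $c\notin\{a,b\}$ (hence $A$ has at least three letters), put $v:=f(c)$, and suppose $v\notin u^{+}$.

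\emph{The aligned cases, handled by short LSP words.} For every $k\ge 1$ the word $(a^{k}c)^{\omega}$ is LSP (a direct check: its only left special factors are the powers $a^{i}$, $i<k$, which are prefixes), so $f((a^{k}c)^{\omega})=(u^{pk}v)^{\omega}$ is LSP. Applying Lemma~\ref{lemma_x3} to this word ($u$ is one of its prefixes, and its longest $u$-run exceeds the $u$-run at its start by the number of trailing copies of $u$ in $v$) forces $v=u^{i}\,w\,u^{\varepsilon}$ with $w\neq\varepsilon$ neither beginning nor ending with $u$, $i\ge 0$, $\varepsilon\in\{0,1\}$. Now I compare the longest common prefixes of $f(a)=u^{p}$, $f(b)=u^{q}$, $f(c)=v$ with $u^{\omega}$; if the two smallest of the three corresponding lengths differ, then, ordering the letters accordingly as $x_{1},x_{2},x_{3}$, the morphism $f$ is LSP $(x_{1},x_{2},x_{3})$-breaking by the longest-common-prefix argument in the proof of Lemma~\ref{when (a, b, c)-breaking morphisms appear} (applied, as needed, after observing that for the relevant triples the last letters of $f(x_i)$ are distinct), and since $(x_{1}x_{2}x_{3})^{\omega}$ is LSP and $(x_{1},x_{2},x_{3})$-fragile with empty fragility, $f((x_{1}x_{2}x_{3})^{\omega})$ is not LSP. (Equivalently, in these cases one checks directly, via Lemma~\ref{lemma_x3}, that the periodic words $f((a^{k}c)^{\omega})=(u^{pk}v)^{\omega}$ or $f((cab)^{\omega})=(vu^{p+q})^{\omega}$ are not LSP, using $p+q\ge 2$.) This disposes of every configuration except the two where the two smallest lengths coincide: (H1) $p=q$ and $v$ begins with $u^{p}$; (H2) $p<q$, $i=p$, and $\first(w)\neq\first(u)$.

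\emph{The residual cases, and the obstacle.} In (H1)--(H2) one has $f(a)=f(b)=u^{p}$ (in H1), or the three images agree exactly up to $u^{p}$ and then $f(a)$ stops while $f(b),f(c)$ diverge immediately (in H2), so the empty-fragility trick no longer bites. Here I feed $f$ a richer LSP word. In (H1), $f(\bw)$ depends only on the positions of $c$ in $\bw$: writing $\bw''$ for $\bw$ with $a$ and $b$ identified and $v=u^{p}v'$ with $v'\notin u^{+}$, we get $f(\bw)=h(\bw'')$ where $h(\star)=u^{p}$, $h(c)=u^{p}v'$, i.e.\ $h=\mu\circ L_{\star}$ with $\mu(\star)=u^{p}$, $\mu(c)=v'$. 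Taking $\bw$ to be the $k$-bonacci word over $\{a,b,c\}$ (LSP by Theorem~\ref{th_carac}, its directive word staying at the fragility-free state), the word $\bw''$ is strongly unbalanced (its maximal $\star$-runs take at least three distinct lengths), and one shows that $\mu$ — precisely because $v'\notin u^{+}$ — cannot turn this imbalance into an LSP word: $h(\bw'')$ still has two factors of some common length whose numbers of occurrences of a fixed letter differ by at least two, so $h(\bw'')$ is unbalanced, hence not LSP; (H2) is reduced to the same situation after the evident two-letter reduction. This contradicts $f$ preserving the LSP property, so $v\in u^{+}$, and therefore $f\in{\cal P}_{\rm LSP}$. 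The step I expect to be the real obstacle is this last one: making rigorous that the morphism $\mu$ with $v'\notin u^{+}$ cannot re-periodicize an unbalanced word into an LSP word — concretely, pinning down an explicit left special non-prefix factor of $f(\bw)$ — which rests on careful bookkeeping of the runs of $u$ in $f(\bw)$, together with Lemma~\ref{lemma_x3} and the primitivity of $u$.
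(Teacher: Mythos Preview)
Your argument has a genuine gap exactly where you anticipate it, and the patch you sketch does not work. In the residual cases (H1)--(H2) you conclude that ``$h(\bw'')$ is unbalanced, hence not LSP''; but unbalanced does not imply non-LSP. Over three or more letters there exist LSP words that are not balanced (for instance certain Arnoux--Rauzy words, which are standard episturmian and therefore LSP), and nothing prevents $h(\bw'')$ from living over such an alphabet. Without that implication you have produced no left-special non-prefix factor of $f(\bw)$, and the contradiction is not reached. There is also a secondary issue already in your ``aligned'' cases: the passage $2\Rightarrow 3$ in Lemma~\ref{when (a, b, c)-breaking morphisms appear} uses that the last letters of images are the original letters (an R-bLSP feature) so that the two fragility witnesses $\beta',\gamma'$ remain distinct after applying $f$; for your general $f$ one has $\last(f(a))=\last(f(b))=\last(u)$, so the parenthetical ``the last letters of $f(x_i)$ are distinct'' is not a formality and can actually fail for the witnesses you need.

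The paper's proof bypasses all of this case analysis with a single well-chosen LSP word, letting Lemma~\ref{lemma_x3} and primitivity do all the work. For every $n\ge 1$ the episturmian word $L_a^{\,n-1}(\bT)$ (with $\bT$ the Tribonacci word on $\{a,b,c\}$) is LSP, has prefix $a^{n}ba^{n}ca$, and contains the factor $a^{n}ba^{n+1}ba^{n}$. Since $f(a),f(b)\in u^{+}$, this factor maps to a pure power $u^{m+1}$, so Lemma~\ref{lemma_x3} forces $u^{m}$ to be a prefix of $f(L_a^{\,n-1}(\bT))$. Choosing $n$ large enough that $|f(a^{n}ba^{n}c)\,u|\le m|u|$, the prefix $f(a^{n}ba^{n})\,f(c)\,u=u^{p}f(c)u$ lies inside $u^{m}$; primitivity of $u$ then forces $f(c)\in u^{+}$. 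This is uniform in $f(c)$ and requires neither breaking-morphism machinery nor balance considerations, and it handles your (H1)/(H2) along with everything else.
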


\begin{proof}
Assume by contradiction that there exists a third letter $c$ such that 
$\sqrt{f(c)} \neq \sqrt{f(a)} = \sqrt{f(b)}$. 
Let $u = \sqrt{f(a)}$.
Consider the Tribonacci
word $\bT$ over $\{a, b, c\}$. By induction, one can prove that, for all $n \geq 1$, $L_a^{n-1}(\bT)$ 
begins with the prefix
$a^nba^nca$ and contains the word $a^nba^{n+1}ba^n$.
As $L_a^{n-1}(\bT)$ is an standard episturmian word, it is LSP.

Choose an integer $n$ such that $n \geq 1$
and $|f(a^nba^nc)u| \leq |f(a^nba^{n+1}ba^n)|-|u|$.
Let $m, p$ be the integers such that $f(a^nba^n) = u^p$ 
and $f(a^nba^{n+1}ba^n) = u^{m+1}$.
As $f$ preserves the LSP property, $f(L_a^{n-1}(\bT))$ is LSP.
As $u^{m+1}$ is a factor of $f(L_a^{n-1}(\bT))$ and $u$ is a prefix of $f(L_a^{n-1}(\bT))$, 
by Lemma~\ref{lemma_x3}, $u^m$ is a prefix of $f(L_a^{n-1}(\bT))$.
Also $u^pf(c)u$ is a prefix of $f(L_a^{n-1}(\bT))$
and $|u^pf(c)u| \leq |u^{m}|$.
Thus $f(c)$ is a power of $u$ or the last occurrence of $u$ in $u^pf(c)u$ is an internal factor of $u^2$ (i.e. $uu = p u s$ with $p$ and $s$ non-empty).
The former case is not possible as by hypothesis $\sqrt{f(c)} \neq u$.
The latter case contradicts the primitivity of $u$ (as it implies $u = ps = sp$, a well-known equation which implies that $u$ is not primitive -- see, \textit{e.g.}, \cite{Lothaire1983book}).
\end{proof}

Lemma~\ref{lemma_x4} implies that Proposition~\ref{P:CN_preserving} is a corollary of the next result.

\begin{proposition}
\label{P:CN_preserving2}
Let $f: A^* \to B^*$ be a nonerasing morphism that preserves the LSP property for infinite words.
Assume that for all letters $a$ and $b$, the words $f(a)$ and $f(b)$ do not have the same primitive root. 
Then $f \in \SRbLSP^* {\cal U}$.
\end{proposition}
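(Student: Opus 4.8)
The goal is to show that a nonerasing morphism $f\colon A^*\to B^*$ that preserves the LSP property and for which no two images of letters share a primitive root must decompose as a composition of R-bLSP morphisms followed by a morphism in $\mathcal U$. The natural strategy is induction on $|f(A)|$, the total length of the images of all letters of $A$. The base case is when all images have length~$1$: then $f$ is a renaming morphism, the distinct-primitive-roots hypothesis forces $f$ to be injective on letters, hence $f\in\mathcal U$ and we are done (taking the empty composition of R-bLSP morphisms). For the inductive step the plan is to extract one R-bLSP ``layer'': find an R-bLSP morphism $g$ and a nonerasing morphism $f'$ with $f=g\circ f'$, verify that $f'$ still preserves the LSP property and still has the distinct-primitive-roots property, and observe $|f'(A)| < |f(A)|$ so the induction hypothesis applies.

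\textbf{Extracting the first layer.} The key structural fact I would exploit is that since $f$ preserves LSP and LSP words over $B$ are exactly the $\SRbLSP(B)$-adic words lying on infinite paths in $\mathcal A_{R\text{-}bLSP}$ (Theorem~\ref{th_carac}, and in the binary case Corollary~\ref{C:carac1_binary}), the image $f(\bw)$ of any LSP word $\bw$ is LSP and therefore, by Lemma~\ref{lemma2}, has a unique first desubstitution by an R-bLSP morphism. The plan is to pick a ``rich enough'' LSP test word $\bw$ over $A$ — for instance a $k$-bonacci word on $\mathrm{alph}(\bw)=A$, or more carefully a word guaranteeing that all letters of $B$ occurring in $f(A)$ actually occur in $f(\bw)$ and that enough left-special behaviour is visible — apply Lemma~\ref{lemma2} to $f(\bw)$ to get $f(\bw)=g(\bz)$ with $g\in\SRbLSP(\mathrm{alph}(\bz))$, and then argue that $g$ can be ``factored out of $f$'' uniformly: that is, for every letter $a\in A$, the word $f(a)$ is a concatenation of blocks from $g(\mathrm{alph}(\bz))$, so there is a letter-to-word map $f'\colon A^*\to\mathrm{alph}(\bz)^*$ with $f=g\circ f'$. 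The distinct-primitive-roots hypothesis is exactly what prevents the degenerate situation (two letters with proportional periodic images) where this block decomposition could fail to be well-defined or where $g$ would be trivial; one also needs that $f$ is not itself already in $\mathcal U$ (otherwise we stop), which guarantees some image $f(a)$ has a genuine R-bLSP structure forcing $g\neq\mathrm{Id}$ and hence $|f'(A)|<|f(A)|$.

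\textbf{Propagating the hypotheses to $f'$.} Once $f=g\circ f'$ with $g$ an R-bLSP morphism, I must check that $f'$ inherits the two hypotheses. For LSP-preservation: if $\bw'$ is any LSP word over $\mathrm{alph}(\bz)$, I want $f'(\bw')$ LSP; the idea is that $g$ is injective on infinite words (Property~\ref{prop:bLSP properties}) and by Lemma~\ref{lemma1} $g(\by)$ LSP implies $\by$ LSP, so it suffices to produce an LSP word $\bx$ over $A$ with $f'(\bw')=$ (a desubstitution related to) $g(f'(\bx))=f(\bx)$ — i.e. one realizes $\bw'$ as $f'$ of an LSP word, uses that $f$ preserves LSP, and then strips $g$ via Lemma~\ref{lemma1}. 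Some care is needed about which infinite words over $\mathrm{alph}(\bz)$ arise as $f'(\bx)$ for LSP $\bx$; here the uniqueness in Lemma~\ref{lemma2} and the construction of $g$ are what make the two ``layers'' match up. For the distinct-primitive-roots property of $f'$: if $\sqrt{f'(a)}=\sqrt{f'(b)}$ then applying $g$ and using that $g$ maps a word and all its powers consistently would give $\sqrt{f(a)}=\sqrt{f(b)}$ (more precisely $g$ of a power of $v$ is a power of the ``$g$-image block word'' of $v$), contradicting the hypothesis — alternatively one invokes Lemma~\ref{lemma_x4} to say that if $f'$ had two images with equal primitive root then $f'\in\mathcal P_{\mathrm{LSP}}$, and then $f=g\circ f'$ would land in $\mathcal P_{\mathrm{LSP}}$ too (compositions with R-bLSP on the left of a periodic-image morphism stay periodic-image, provided $g$ maps the relevant periodic word to an LSP periodic word), again a contradiction. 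Applying the induction hypothesis to $f'$ yields $f'\in\SRbLSP^*\mathcal U$, hence $f=g\circ f'\in\SRbLSP^*\mathcal U$, completing the induction.

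\textbf{Main obstacle.} The delicate point is the extraction step: showing that the \emph{single} R-bLSP morphism $g$ obtained by desubstituting one particular test word $f(\bw)$ is simultaneously a left factor of $f(a)$ for \emph{every} letter $a$ — i.e. that the block decomposition is uniform across all of $A$ — and that $g$ is nontrivial. This is where one must use both hypotheses crucially (LSP-preservation to even have $g$ via Lemma~\ref{lemma2}, and distinct primitive roots to rule out the periodic degeneracy and guarantee $|f'(A)|<|f(A)|$) and where the choice of test word matters: the word must be chosen so that its $f$-image exposes enough left-special factors to pin down $g$ on all of $\mathrm{alph}(f(A))$. Handling the bookkeeping of how $g$ restricted to subalphabets interacts with which letters of $B$ are actually used — i.e. keeping everything ``fitted'' in the sense of Section~\ref{sec:s-adicity} — is the fussiest part of the argument.
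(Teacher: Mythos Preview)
Your high-level inductive scheme --- induct on $\sum_{a\in A}|f(a)|$, stop if $f\in\mathcal U$, otherwise peel off one R-bLSP layer $g$ so that $f=g\circ f'$, check $f'$ inherits the hypotheses, recurse --- is exactly the skeleton of the paper's Step~7, and your verification that $f'$ preserves LSP (via Lemma~\ref{lemma1}) and keeps distinct primitive roots is correct.

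The genuine gap is the extraction step, which you yourself flag as the ``main obstacle'' but do not actually resolve. Concretely: applying Lemma~\ref{lemma2} to $f(\bw)$ for one test word $\bw$ gives you an R-bLSP morphism $g$ with $\first(g)=\alpha=\first(f(\bw))$, and the $g$-block boundaries in $f(\bw)$ occur exactly at the occurrences of $\alpha$. For the factorization $f=g\circ f'$ to exist you need every $f(a)$ to be a concatenation of complete $g$-blocks, and since $g$-blocks begin with $\alpha$ this forces in particular that \emph{every} $f(a)$ begins with $\alpha$. Nothing in your argument establishes that. If some $f(b)$ begins with a letter $\beta\neq\alpha$, then at each occurrence of $b$ in $\bw$ the $f$-block boundary between the preceding image and $f(b)$ is not an occurrence of $\alpha$, so a $g$-block straddles it and $f(b)$ is not a concatenation of $g$-images --- no amount of cleverness in choosing $\bw$ fixes this, because the issue is about the first letters of the $f(a)$'s, not about which factors of $f(\bw)$ you can see.

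This is precisely what the paper's Steps~1--6 are for. Those steps use several \emph{different} carefully chosen test words (Fibonacci, $L_a(\bF)$, $k$-bonacci words through $\lambda$-morphisms, etc.) together with the distinct-primitive-roots hypothesis and Lemma~\ref{lemma_x3} to prove, in order: no $f(a)$ is a suffix of another; last letters of images are pairwise distinct (so after composing with a renaming one may assume $\last(f(\alpha))=\alpha$); and then, crucially, that if $f\notin\mathcal U$ (equivalently some letter repeats in $f(A)$) then all images begin with the same letter. Only once that common first letter is secured does the R-bLSP factoring become immediate (using that at most the first letter can be left special in $f(A)$). Your proposal tries to shortcut Steps~1--6 by reading $g$ off a single desubstitution, but the alignment you need is equivalent to the conclusion of those steps, so you cannot avoid proving something of that strength.
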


\begin{proof}
We proceed by steps. Each step provides a stronger property on $f$.

\medskip

\noindent
\textbf{Step 1}. For all distinct letters $a$ and $b$ in $A$, 
$f(a)$ is not a suffix of $f(b)$.

\begin{proof}
Assume by contradiction that $f(a)$ is a suffix of $f(b)$ for two distinct letters $a$ and $b$.
Let $\ell \geq 0$ be the greatest integer such that $f(a)^\ell$ is a prefix of $f(b)$.
The word $f(a^{3+\ell})$ is a factor of $f(baab)$.
Let $\bw$ be any LSP word containing the factor $baab$ and having $aba$ as a prefix (for instance the Fibonacci word).
By Lemma~\ref{lemma_x3}, $f(a)^{2+\ell}$ is a prefix of $f(\bw)$.
Also $f(\bw)$ begins with $f(aba)$.
By the definition of $\ell$, this is possible only if 
$f(b) = f(a)^\ell u_1$,
$f(a) = u_1 u_2$ for words $u_1$ and $u_2$ with $u_2 \neq \varepsilon$ a prefix of $f(a)$.
But then, as $f(a)$ is a suffix of $f(b)$,
we have $\ell \geq 1$ and $f(a) = u_2 u_1$. Hence $u_1u_2 = u_2 u_1$.
Thus $u_1 = \varepsilon$ or the words $u_1$ and $u_2$ have the same primitive root by \cite[Prop. 1.3.2]{Lothaire1983book}.
It follows that $f(a)$ and $f(b)$ have also the same primitive root: a contradiction with our hypotheses.
\end{proof}

\noindent
\textbf{Step 2}. For all distinct letters $a$ and $b$ in $A$, 
$\last(f(a)) \neq \last(f(b))$.

\begin{proof}
Assume by contradiction that $\last(f(a)) = \last(f(b))$ for two distinct letters $a$ and $b$.
By Step~1, $f(a)$ is not a suffix of $f(b)$ and $f(b)$ is not a suffix of $f(a)$.
Hence there exist words $u$, $u_1$, $u_2$ and distinct letters $\alpha$, $\beta$ such that 
$f(a) = u_1\alpha u$, $f(b) = u_2 \beta u$ with $u \neq \varepsilon$.

We observe that for any LSP word $\bw$ containing $a$ and $b$, the word $u$ is a prefix of $f(\bw)$ 
(remember that $f$ preserves the LSP property, hence $f(\bw)$ is LSP and its left special factors are prefixes of it).
As $\bw$ can be chosen with $a$ as first letter or with $b$ as first letter, this implies that $u$ is a prefix of $f(a)$ and of $f(b)$.

Let $\bw$ be any LSP word beginning with $aab$ and containing the word $baaab$ (for instance the word $L_a(\bF)$ with $\bF$ the Fibonacci word).
Let $\ell$ be the greatest integer such that $u^\ell$ is a prefix of $f(a)$.
Observe $\ell \geq 1$. 
The word $\alpha u^{\ell+1}$ is a factor of $f(aa)$ and
the word $\beta u^{\ell+1}$ is a factor of $f(ba)$. 
So $u^{\ell+1}$ is a left special factor of $f(\bw)$ and so it is one of its prefixes.
By the definition of $\ell$, this implies that $f(a) = (u_1u_2)^\ell u_1$ with $u = u_1u_2$ and $u_2$ a prefix of $f(a)$.
Now using the fact that $u^{\ell+1}$ is a prefix of $f(aa)$, 
we deduce that $\alpha u^{\ell+2}$ and $\beta u^{\ell +2}$ 
are factors of $f(baaa)$ and so of $f(\bw)$:
$u^{\ell+2}$ is a left special factor  of $f(\bw)$.
Thus $f(\bw)$ begins with $u^\ell u_1u_2u_1u_2$ and with 
$f(aa) = u^\ell u_1 u_1 u_2 u^{\ell-1}u_1$
as $\ell \geq 1$.
Hence $u_1u_2 = u_2 u_1$.
By \cite[Prop. 1.3.2]{Lothaire1983book}, $\sqrt{u_1} = \sqrt{u_2}$ which implies that $\sqrt{f(a)} = \sqrt{u}$.

Exchanging the roles of $a$ and $b$, we deduce similarly that $\sqrt{f(b)} = \sqrt{u}$.
Hence  $\sqrt{f(b)} = \sqrt{f(a)}$: a contradiction with our hypotheses.
\end{proof}

From Step~2, we know that $ f = g \circ r$ for some renaming morphism $r$ and some morphism $g$ such that $\last(g(\alpha)) = \alpha$ for all letters $\alpha \in r(A)$.
Observe that $f$ preserves the LSP property for words over $A$ if and only if $g$ preserves the LSP property for words over $r(A)$.
Moreover the composition of any element of ${\cal U}$ with $r$ is, when defined, an element of ${\cal U}$. 
Thus if Proposition~\ref{P:CN_preserving2} holds for $g$, it also holds for $f$.
Hence replacing $f$ by $g$, \textit{from now on, 
we assume  that $\last(f(\alpha)) = \alpha$ for all letters $\alpha$}.

\medskip

\noindent
\textbf{Step 3}. If the images of two distinct letters begin with the same letter,
then all images of letters begin with the same letter.

\begin{proof}
Assume $k = \#A \geq 3$ and that for two distinct letters $\alpha$, 
$\beta$, $\first(f(\alpha)) = \first(f(\beta))$.
Recall that we assumed after Step~2 that for all $\delta$ in $A$, $\last(f(\delta)) = \delta$.

Now consider the morphism $g = \lambda_{a_1\ldots a_k}$ ($g(a_i) = a_1\ldots a_i$ for all $i = 1, \ldots, k$; $a_1$, \ldots, $a_k$ are pairwise distinct letters).
This morphism is bLSP.
As said just after Theorem~\ref{th_carac}  the $k$-bonacci word $\bB_{a_1, a_2, a_3, \ldots, a_k}$ contains no fragility. 
Hence
$g(\bB_{a_1, a_2, a_3, \ldots, a_k})$ is LSP.
As $f$ preserves the LSP property, $f(g(\bB_{a_1, a_2, a_3, \ldots, a_k}))$ is LSP: 
it begins with $f(a_1)$ and contains the word $f(a_1a_2a_3)$ ($\bB_{a_1, a_2, a_3, \ldots, a_k}$ begins with $a_1a_2a_1a_3$ and $g(a_3) = a_1a_2a_3$).
When $a_2 = \alpha$ and $a_3 = \beta$, denoting by $a$ the first letter of $f(\alpha)$, we see that $a_1 a$ and $\alpha a$ are factors of $f(g(\bw))$ showing that $a$ is a left special factor of $f(g(\bw))$, and so a prefix of $f(a_1)$. Consequently, $a$ is the first letter of $f(a_1)$ whatever $a_1 \in A \setminus\{\alpha, \beta\}$.
\end{proof}

\noindent
\textbf{Step 4}. If a letter is left special in $f(A)$,
then all images of letters begin with the same letter.

\begin{proof}
Here we assume that for a letter $\alpha$, there exist distinct letters $\beta$ and $\gamma$ such that
$\beta\alpha$ and $\gamma\alpha$ are both factors of images of letters. 
For any LSP word $\bw$ containing two letters whose images contains $\beta\alpha$ and $\gamma \alpha$, 
these words are factors of $f(\bw)$. As $f$ preserves the LSP property, $f(\bw)$ is LSP and so must begin with the letter $\alpha$.
As for any letter $x$, one can choose $\bw$ in such a way that it begins with $x$, 
Thus all images of letters must begin with the same letter (the letter $\alpha$).
\end{proof}

\noindent
\textbf{Step 5}. If there exists a letter $a$ such that a letter occurs twice in $f(a)$, 
then all images of letters begin with the same letter.

\begin{proof}
Let $\alpha$ be a letter occurring twice in $f(a)$.
Choose $\alpha$ such that its first occurrence is the leftmost as possible.
Assume $\alpha$ is not the first letter of $f(a)$.
If $\alpha$ is
left special in $f(a)$, then the claim is clear by Step~4.
So we can assume that all occurrences of $\alpha$ are preceded by the same letter $\beta$.
Hence we should have chosen $\beta$ instead of $\alpha$: a contradiction.
So $\alpha$ is the first letter of $f(a)$.
Let $b$ be a letter in $A$ different from $a$.
By Step~3 we only have to consider the case where $f(b)$ begins with a letter $\gamma$ different from $\alpha$.

There exists an infinite LSP word $\bw$ beginning with $a$ and containing the word $aba$ (for instance the Fibonacci word).
In $f(aba)$, only the first letter $\alpha$ of $f(\bw)$ can be left special.
But there also exist infinite LSP words beginning with $b$ and containing the word $aba$ (exchange the roles of $a$ and $b$ in the Fibonacci word).
As $f(b)$ does not begin with $\alpha$, this letter $\alpha$ cannot be left special in $f(aba)$.

Let $u$ be the word such that $|u|_\alpha = 0$ and $\alpha u \alpha$ is a prefix of $f(a)$.
As no letter is left special in the factor $f(ab)\alpha u \alpha$ of $f(aba)$,
each occurrence of $\alpha$ in $f(ab)\alpha u \alpha$ is preceded by $\alpha u$:
 $\alpha u$ is a period of $f(ab)$.
As $\alpha u$ is a prefix of $f(a)$, 
$f(ab)$ is a power of $\alpha u$. 
For some words $x$ and $y$ and some integers $k$ and $\ell$ with $k \geq 1$, $\ell \geq 0$,
we have $u = x \gamma y$, $f(a) = (\alpha x \gamma y)^k \alpha x$, 
$f(b) = \gamma y(\alpha x \gamma y)^\ell$.

Observe that $f(ba) = (\gamma y \alpha x)^{k+\ell+1}$.
Consider an LSP word $\bw$ beginning with $babbabab$ (for instance the word obtained from the Fibonacci word exchanging the roles of $a$ and $b$).
The word $f(\bw)$ contains $(f(ba))^2 = (\gamma y \alpha x)^{2(k+\ell+1)}$
and so by Lemma~\ref{lemma_x3},
$(\gamma y \alpha x)^{2k+2\ell+1}$ is a prefix of $f(\bw)$:
in particular as $k\geq 1$, $(\gamma y \alpha x)^{\ell+1+k+\ell}\gamma y\alpha$ is a prefix of $f(\bw)$.
But $f(\bw)$ begins with $f(babb)$ that begins with $(\gamma y \alpha x)^{\ell+1+k+\ell}\gamma y\gamma$. 
This contradicts the fact that $\alpha \neq \gamma$.
\end{proof}

\noindent
\textbf{Step 6}. If a letter occurs at least twice in $f(A)$, 
then all images of letters begin with the same letter.

\begin{proof}
Assume $\alpha$ occurs twice in $f(A)$.
After Step~5, we can assume that $\alpha$ occurs in $f(a)$ and $f(b)$ for two distinct letters $a$ and $b$.
For any letter $c$ (possibly $a$ or $b$),
there exists an LSP word $\bw$ beginning with $c$ and containing the word $aba$.
Thus if $\alpha$ is left special in $f(aba)$,
$\alpha$ is the first letter of $f(c)$ for all letters $c$.

Hence assume that $\alpha$ is not left special in $f(aba)$ nor in $f(bab)$.
Thus all occurrences of $\alpha$ in $f(aba)$ and in $f(bab)$ are preceded by the same letter $\beta$.
Possibly replacing iteratively $\alpha$ by $\beta$,
we see that we can choose $\alpha$ as the first letter of $f(a)$ or of $f(b)$.
Assume without loss of generality that it is the first letter of $f(a)$.
By Step~3, we can assume that $\alpha$ is not the first letter of $f(b)$.
As $\alpha$ is not left special in $f(ba)$ and as $f(b)$ ends with $b$ (so $b\alpha$ is a factor of $f(ba)$),
$b$ has two occurrences in $f(b)$.
By Step~5 all images of letters begin with the same letter.
\end{proof}

\noindent
\textbf{Step 7}. End of the proof of Proposition~\ref{P:CN_preserving2}.

The proof acts by induction on $||f|| = \sum_{a \in A} |f(a)|$.
When $||f|| = \#A$, by Step~2, $f$ is a renaming morphism: it belongs to $\cal U$.
Assume $||f|| > \#A$ and that $f$ does not belong to $\cal U$.
This means that a letter occurs at least twice in $f(A)$.
By Step~6, all images of letters begin with the same letter.
Let $a$ denote this letter.
By Step~2, for all distinct letters $x$ and $y$ in $A$, 
$\last(f(x)) \neq \last(f(y)$. 

Observe that as $f$ preserves the LSP property for infinite words, at most one letter can be left special in $f(A)$: 
this letter must be the letter $a$. 
Thus there exist an R-bLSP morphism $g_1$ and a morphism $h$,
such that $f = g_1 \circ h$ and $g_1$ is defined over ${\rm alph}(h(A))$.
Assume by contradiction that $h$ does not preserve the LSP property for infinite words. 
This means there exists an infinite LSP word $\bw$ such that $h(\bw)$ is not LSP.
By Lemma~\ref{lemma1}, $f(\bw)= g(h(\bw))$ is neither LSP: a contradiction.
Hence $h$ preserves the LSP property for infinite words.
By the definition of R-bLSP morphisms, $||g_1|| \geq \#A+1$ and so $||h|| < ||f||$.
By induction, $h \in \SRbLSP^*{\cal U}$: this also holds for $f$
 and Proposition~\ref{P:CN_preserving2} holds by induction.
\end{proof}

\subsection{Characterization of endomorphisms preserving LSP infinite words}

\begin{theorem}
\label{th2}
The set of nonerasing endomorphisms over an alphabet $A$ that preserve the LSP property for infinite words over $A$ is ${\cal L}^* Perm \cup {\cal P}_{\rm LSP}$
\end{theorem}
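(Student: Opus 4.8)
The plan is to prove the two inclusions separately. The inclusion $\supseteq$ is routine: each $L_\alpha$ preserves the LSP property by Lemma~\ref{L:La1}, each permutation is a renaming morphism hence an element of ${\cal U}(A,A)$ and so preserves it, each morphism of ${\cal P}_{\rm LSP}(A,A)$ preserves it by the lemma preceding Proposition~\ref{P:CN_preserving}, and the class of LSP-preserving morphisms is closed under composition; so every morphism of ${\cal L}^*Perm\cup{\cal P}_{\rm LSP}$ preserves the LSP property.

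For the inclusion $\subseteq$, let $f$ be a nonerasing endomorphism of $A$ preserving the LSP property. By Proposition~\ref{P:CN_preserving} applied with $B=A$, either $f\in{\cal P}_{\rm LSP}(A,A)$, and we are done, or $f=g_1\circ\cdots\circ g_m\circ h$ with $g_1,\dots,g_m\in\SRbLSP(A)$ and $h\in{\cal U}(A,A)$. I would first observe that ${\cal U}(A,A)=Perm(A)$: a nonerasing endomorphism in which each letter of $A$ occurs at most once among the images of letters has $\#A$ nonempty images landing in an alphabet of size $\#A$, so each image is a single letter and they are pairwise distinct, i.e.\ $h$ is a permutation. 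Next I would show that every $g_i$ is in fact a bLSP endomorphism of $A$: since $h$ is a permutation of $A$, the domain of $g_m$ contains $A$, and being R-bLSP over $A$ it is contained in $A$, so $g_m\in\SbLSP(A)$; by Property~\ref{prop:bLSP properties}(\ref{prop:last}) we have ${\rm alph}(g_m(A))=A$, so the codomain of $g_m$ is again $A$, and propagating this down the chain yields $g_{m-1},\dots,g_1\in\SbLSP(A)$. Hence $f=g_1\circ\cdots\circ g_m\circ h$ with all $g_i\in\SbLSP(A)$ and $h\in Perm(A)$.

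It then remains to prove that every $g_i$ lies in ${\cal L}(A)$, which will give $f\in{\cal L}^*Perm$. (This is automatic when $\#A\le 2$, since then $\SbLSP(A)={\cal L}(A)$.) Suppose not, and let $j$ be the largest index with $g_j\notin{\cal L}(A)$; by Lemma~\ref{L:La1bis}, $g_j$ is LSP $(a,b,c)$-breaking for some pairwise distinct $a,b,c\in A$. Set $Q=g_{j+1}\circ\cdots\circ g_m\circ h$; by maximality of $j$, $Q\in{\cal L}(A)^*Perm(A)$, so $Q$ preserves the LSP property, and writing $Q=L_{\alpha_{j+1}}\circ\cdots\circ L_{\alpha_m}\circ h$, repeated use of Lemma~\ref{L:La2} together with the behaviour of fragilities under the permutation $h$ shows that $Q(\bw_0)$ is $(a,b,c)$-fragile if and only if $\bw_0$ is $(h^{-1}(a),h^{-1}(b),h^{-1}(c))$-fragile. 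Choosing $\bw_0$ to be an LSP word over $\{h^{-1}(a),h^{-1}(b),h^{-1}(c)\}\subseteq A$ that is $(h^{-1}(a),h^{-1}(b),h^{-1}(c))$-fragile (such a word exists by the discussion at the start of Section~\ref{sec:fragility}, or by relabeling Example~\ref{Exemple3}), the word $Q(\bw_0)$ is an LSP $(a,b,c)$-fragile word, so $g_j(Q(\bw_0))$ is not LSP; applying the contrapositive of Lemma~\ref{lemma1} successively to the R-bLSP morphisms $g_{j-1},\dots,g_1$ shows that $f(\bw_0)=g_1(\cdots g_{j-1}(g_j(Q(\bw_0)))\cdots)$ is not LSP, contradicting that $f$ preserves the LSP property. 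Hence every $g_i\in{\cal L}(A)$ and $f\in{\cal L}^*Perm$.

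The step I expect to be the main obstacle is the reduction in the second paragraph: showing that the factors $g_i$ are genuine bLSP endomorphisms of $A$ rather than restrictions to proper subalphabets (which R-bLSP morphisms are allowed to be, and which could a priori enlarge the alphabet along the chain), and then the argument of the third paragraph that an LSP-preserving $(a,b,c)$-breaking layer $g_j$ cannot be neutralised by the morphisms around it, i.e.\ that $Q$ genuinely produces an LSP word carrying an $(a,b,c)$-fragility.
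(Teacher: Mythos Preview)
Your proposal is correct and follows essentially the same route as the paper: apply Proposition~\ref{P:CN_preserving}, force the ${\cal U}$-factor to be a permutation and each $g_i$ to be a genuine bLSP endomorphism of $A$, then use Lemma~\ref{L:La1bis} together with Lemma~\ref{L:La2} (and Lemma~\ref{lemma1} upstream) to rule out any $g_j\notin{\cal L}$. The only cosmetic differences are that the paper derives $\nu\in Perm$ via the alphabet-chain inclusions $B\subseteq A_n\subseteq\cdots\subseteq A_1=A$ rather than your direct counting argument for ${\cal U}(A,A)=Perm(A)$, and that the paper first reduces to $\nu={\rm Id}$ and argues by downward induction on the index, whereas you keep $h$ explicit and take the largest bad index~$j$; these are equivalent formulations.
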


\begin{proof}
Permutation morphisms (elements of $Perm$) and the elements of ${\cal P}_{\rm LSP}$ preserve the LSP property for infinite words.
By Lemma~\ref{L:La1}, the elements of ${\cal L}$ also preserve this property.
Hence any element of ${\cal L}^* Perm \cup {\cal P}_{\rm LSP}$ preserves  the LSP property for infinite words.

Conversely let $f$ be a nonerasing endomorphism over $A^*$ that preserves  the LSP property for infinite words.
Assume $f \not\in {\cal P}_{\rm LSP}$.
By Lemma~\ref{lemma_x4} and the definition of ${\cal P}_{\rm LSP}$, 
for all letters $a$ and $b$, the words $f(a)$ and $f(b)$ do not have the same primitive root. 
By Proposition~\ref{P:CN_preserving2}, there exist an integer $n \geq 0$, $n$ R-bLSP morphisms $g_1$, $g_2$, \ldots, $g_n$
and an element $\nu$ of ${\cal U}$ such that $f = g_1 \circ \cdots \circ g_n \circ \nu$.

Let $B = {\rm alph}(\nu(A))$ and define the alphabets $(A_i)_{i = n, n-1, \ldots 1}$ by $A_n = {\rm alph}(g_n(B))$ and 
for $i$ such that $n-1 \geq i \geq 1$, $A_i = {\rm alph}(g_i(A_{i+1}))$. 
By Property~\ref{prop:bLSP properties}(\ref{prop:last}), 
for any $i$ in $\{n-1, \ldots, 1\}$, $A_{i+1} \subseteq A_i$
and $B \subseteq A_n$. Note that $A_1 = A$.
Hence $B \subseteq A$.
As $\nu$ belongs to ${\cal U}$, $\#B \geq \#A$. Hence $A = B$ and $\nu$ is a permutation morphism.
It follows that $f$ preserves the LSP property for infinite words if and only if $g_1 \circ \cdots \circ g_n$ preserves it.
Thus from now on we assume that $\nu$ is the identity and $f = g_1 \circ \cdots \circ g_n$.
From what precedes, we have $A_i = A$ for all $i$ in $\{n, \ldots, 1\}$. 
Hence all morphisms $g_i$ are endomorphisms. 
By definitions, the only R-bLSP morphisms that are endomorphisms are LSP morphisms.

To end we prove by induction that each endomorphism $g_i$ belongs to ${\cal L}$.
By Lemma~\ref{lemma1}, $g_n$ must preserve the LSP property for infinite words (otherwise morphisms $g_i \circ \cdots \circ g_n$ ($1\leq i \leq n$) and so $f$ would not preserve the LSP property for infinite words). 
By Lemma~\ref{L:La1}, it must belong to ${\cal L}$.
Assume we have already proved that $g_{m+1}, \ldots, g_n \in {\cal L}$ for some integer $m$, $1 \leq m < n$.
By Lemma~\ref{L:La1bis} if $g_m \not\in {\cal L}$, then $g_m$ is $(a, b, c)$-breaking for some pairwise distinct letters $a$, $b$ and $c$.
Let $\bw$ be an $(a, b, c)$-fragile LSP word. 
By Lemma~\ref{L:La2}, $g_{m+1}\circ \cdots \circ g_n( \bw)$ is $(a, b, c)$-fragile and so 
$g_{m}\circ \cdots \circ g_n( \bw)$ is not LSP.
By Lemma~\ref{lemma1}, we deduce that $g_1 \circ \cdots \circ g_m \circ \cdots g_n( \bw) = f(\bw)$ is not LSP: 
a contradiction with the fact that $f$ preserves the LSP property for infinite words. 
Then for all pairwise distinct letters $a$, $b$ and $c$, $g_m$ is not $(a, b, c)$-breaking: by Lemma~\ref{L:La1} $g_m$ belongs to ${\cal L}$. 
Hence by induction $g_i \in {\cal L}$ for all $i$, $1 \leq i \leq n$.
\end{proof}

\subsection{\label{sec:a_particular_word}A particular word}

To end the proof of Theorem~\ref{T:no simpler result}, 
we need an LSP word defined on three letters
that cannot be decomposed on two words.
Let $\xi_a$ (resp. $\xi_b$ , $\xi_c$) be the fixed point of 
$f_a = \lambda_{abc} \circ \lambda_{bca} \circ \lambda_{cab}
= [ababcababca,$ $ababcababcaab,$ $ababc]$
(resp. $f_b =  \lambda_{bca} \circ \lambda_{cab} \circ \lambda_{abc}$, 
$f_c =  \lambda_{cab} \circ \lambda_{abc} \circ \lambda_{bca}$).
Considering the permutation $\pi = [b, c, a]$, 
observe that $\pi \circ \lambda_{abc} = \lambda_{bca} \circ \pi$,
$\pi \circ \lambda_{bca} = \lambda_{cab} \circ \pi$,
and $\pi \circ \lambda_{cab} = \lambda_{abc} \circ \pi$.
Hence $\xi_a = \pi(\xi_b)$ and $\xi_b = \pi(\xi_c)$ and $\xi_c = \pi(\xi_a)$.

From Theorem~\ref{th_carac},  
$\xi_a$, $\xi_b$ and $\xi_c$ are LSP words as there exist infinite paths labeled  by 
$(\lambda_{abc} \lambda_{bca}\lambda_{cab})^\omega$, 
$(\lambda_{bca} \lambda_{cab} \lambda_{abc})^\omega$
and
$(\lambda_{cab} \lambda_{abc} \lambda_{bca})^\omega$
in the automaton ${\cal A}_{\rm bLSP}$.

\begin{lemma}
\label{L:xi_a}
There does not exist two words $u$ and $v$ such that $\xi_a$ can be decomposed on $\{u, v\}$, that is, for any choice of the words $u$ and $v$, $\xi \not\in \{u, v\}^\omega$.
\end{lemma}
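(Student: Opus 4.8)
The plan is to argue by contradiction along the $3$-cycle $\xi_a=\lambda_{abc}(\xi_b)$, $\xi_b=\lambda_{bca}(\xi_c)$, $\xi_c=\lambda_{cab}(\xi_a)$, using a length descent. Suppose $\xi_a\in\{u,v\}^\omega$. First I would dispose of degenerate situations: if $u=v$, or $u$ or $v$ is empty, or one of them occurs only finitely often in the decomposition, then $\xi_a$ is eventually periodic; but the incidence matrix of $f_a$ is primitive with characteristic polynomial $X^3-11X^2+3X-1$, which has no rational root, so the letter frequencies of $\xi_a$ are irrational and $\xi_a$ is aperiodic, a contradiction. Hence we may assume $u\neq v$, both nonempty, both occurring infinitely often; writing $\xi_a=g(\bx)$ with $g(0)=u$, $g(1)=v$ and $\bx\in\{0,1\}^\omega$, the word $\bx$ is then aperiodic with both letters, so it contains $01$, $10$, and at least one of $00$, $11$. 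Since $\xi_a$ is the fixed point of the primitive aperiodic morphism $f_a$ it is linearly recurrent, hence uniformly recurrent and power-bounded (no factor $w^{k}$ for arbitrarily large $k$).

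Next I would record the rigid local structure of $\xi_a$ read off from $f_a(a)=ababcababca$, $f_a(b)=ababcababcaab$, $f_a(c)=ababc$ and the identity $\xi_a=f_a(\xi_a)$ (each $f_a(\beta)$ begins with $a$ and ends with $\beta$): every occurrence of $b$ in $\xi_a$ is immediately preceded by $a$; every occurrence of $c$ is immediately preceded by $b$ and immediately followed by $a$, i.e.\ sits in the context $abca$; and the return words of $c$ in $\xi_a$ are exactly $abab$, $aabab$, $aababab$, so consecutive occurrences of $c$ are at distance at most $8$. I would also use that, since $\lambda_{abc}(a)=a$, $\lambda_{abc}(b)=ab$, $\lambda_{abc}(c)=abc$ each contain a single $a$ and begin with it, the $\lambda_{abc}$-block boundaries of $\xi_a=\lambda_{abc}(\xi_b)$ are \emph{exactly} the occurrences of the letter $a$; this gives a canonical factorization of $\xi_a$, and the analogous statements hold for $\xi_b=\lambda_{bca}(\xi_c)$ (boundaries at the $b$'s) and $\xi_c=\lambda_{cab}(\xi_a)$ (boundaries at the $c$'s).

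The descent then runs as follows. If $u$ and $v$ begin with the same letter then that letter is $a$ (as $\xi_a$ begins with $a$); the contexts $ab$ and $abca$ around $b$'s and $c$'s force every $\{u,v\}$-block boundary of $\xi_a$ to fall at an occurrence of $a$, hence at a $\lambda_{abc}$-boundary, so each block is a concatenation of $\lambda_{abc}$-images and there are words $u'$, $v'$ with $\lambda_{abc}(u')=u$, $\lambda_{abc}(v')=v$ and $\xi_b\in\{u',v'\}^\omega$ (with $u'\neq v'$ nonempty since $\lambda_{abc}$ is injective and nonerasing); moreover $|u'|+|v'|<|u|+|v|$ unless both $u$ and $v$ are powers of $a$, which is impossible since $\xi_a$ uses all three letters. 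If $u$ and $v$ begin with different letters, say $u$ begins with $a$, then I look at $\bx$: if $11$ does not occur, every $v$-block is immediately followed by a $u$-block, and regrouping $u^{n}v$ as $u^{n-1}\cdot(uv)$ exhibits $\xi_a\in\{u,uv\}^\omega$ with both blocks beginning with $a$ and both still occurring infinitely often (the run lengths being bounded by power-boundedness, and not eventually all equal to $1$ by aperiodicity); symmetrically if $00$ does not occur one passes to $\{v,vu\}$ and then absorbs the leading letter, returning us to the equal-first-letter case. Iterating the descent around $\xi_a\to\xi_b\to\xi_c\to\xi_a$ produces an infinite strictly decreasing sequence of positive integers $|u^{(k)}|+|v^{(k)}|$ bounded below by $2$; when the value $2$ is reached the corresponding word is coded over two one-letter blocks, hence over two letters, contradicting that $\xi_a$, $\xi_b$, $\xi_c$ all use three letters. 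So a contradiction is reached in every case.

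The step I expect to be the main obstacle is the mixed-first-letter case when $\bx$ contains all of $00$, $01$, $10$, $11$, so that the easy regrouping does not apply: there one must genuinely use the recognizability of the $f_a$-desubstitution together with the rigid contexts $ab$, $abca$ and the bounded return words of $c$ to show that the block boundaries can still be synchronized with the $\lambda_{abc}$-factorization (equivalently, that $\xi_a$, or a shift of it in the same minimal subshift, admits a two-block decomposition with equal first letters); and the sub-cases in which $u$ or $v$ is very short, where the boundaries are least constrained, must be treated separately with the help of Lemma~\ref{lemma_x3}.
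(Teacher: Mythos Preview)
Your plan shares with the paper the opening move (minimal counterexample, equal-first-letter case via $\lambda_{abc}$-desubstitution and the cycle $\xi_a\to\xi_b\to\xi_c\to\xi_a$), and that part is fine: when $u$ and $v$ both start with $a$, every $\{u,v\}$-boundary is an $a$-position, hence a $\lambda_{abc}$-boundary, and you get $\xi_b\in\{u',v'\}^\omega$ with strictly smaller total length. The paper does exactly this (using the permutation $\pi$ to return to $\xi_a$).

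The genuine gap is the mixed-first-letter case, and you say so yourself. Two points:

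\begin{itemize}
\item Your regrouping only handles the subcase where one of $00$, $11$ is absent from $\bx$, and even there the descent is not established. Passing from $\{u,v\}$ to $\{u,uv\}$ \emph{increases} the total length; after desubstituting you land at $2|u|_a+|v|_a$, and you have not argued that this is $<|u|+|v|$. (It is equivalent to $|u|_a<|u|_{b,c}+|v|_{b,c}$, which needs a separate argument.) The symmetric case has the additional wrinkle that the initial $u$ is not absorbed by your grouping into $\{v,vu\}$.
\item The hard subcase, where $\bx$ contains all of $00,01,10,11$, is left as a hope (``use recognizability and rigid contexts''). This is precisely where the paper does the real work: it splits according to whether $v$ begins with $b$ or with $c$, pushes the decomposition through two or three layers of $\lambda$-desubstitution, and at each layer extracts a forbidden factor (e.g.\ $abaa$ in $\xi_a$, $bbb$ or $bcabca$ in $\xi_b$, $ccc$ in $\xi_c$) from the concrete shape of the images. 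The argument does not reduce to the equal-first-letter case; it terminates in a local contradiction. There is no soft synchronization lemma that replaces this analysis, because when $vv$ occurs the block boundaries genuinely fall at non-$a$ positions and do not line up with the $\lambda_{abc}$-factorization.
\end{itemize}

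So as written the proposal is not a proof: the decisive case is missing, and the descent bookkeeping in the easy subcases is incomplete. If you want to salvage the conceptual approach you will still need, at minimum, the explicit forbidden-factor contradictions that the paper obtains in its ``$v$ begins with $b$'' and ``$v$ begins with $c$'' analyses.
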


\begin{proof}
First observe that $\xi_a$ is not periodic (it has infinitely many left special factors). 
Then if $X$ is a set of words such that $\xi_a \in X^\omega$, the cardinality of $X$ is at least two.
Assume by contradiction that there exist two words $u$ and $v$ such that $\xi_a \in u\{u, v\}^\omega$.
Choose $u$ and $v$ such that $|uv|$ is minimal.
Observe also that $u$ begins with the letter $a$ as it is a prefix of $\xi_a$. It is important to remark that for any word $w$ beginning by $a$, there exists a word $w'$ such that $w = \lambda_{abc}(w')$.
We distinguish three cases depending on the first letter of $v$.

\medskip

\textit{Case $v$ begins with the letter $a$}.
Then $u = \lambda_{abc}(u')$ and $v = \lambda_{abc(v')}$ and
$\xi_b \in \{u', v'\}^\omega$. As $abc$ occurs in $u$ or $v$, $|u'v'| \leq |uv|-2$.
As $\xi_a = \pi(\xi_b)$, we get $\xi_a \in \{\pi(u'), \pi(v')\}^\omega$ with $|\pi(u')\pi(v')| < |uv|$.
This contradicts the choice of $u$ and $v$.

\medskip

\textit{Case $v$ begins with the letter $b$}.
Let $I$ be the set of integers such that
$uv^iu$ occurs in $\xi_a$: 
$\xi_a \in \{uv^i \mid i \in I \}^\omega$.
Note that $\#I \geq 2$ since $\xi_a$ is not periodic.
By the definition of $\lambda_{abc}$, 
since $uv$ occurs in $\xi_a = \lambda_{abc}(\xi_b)$ and since $v$ begins with $b$,
$u$ ends with the letter $a$. This is also the case for $v$ if $I \neq \{0, 1\}$,
that is, if $vv$ occurs in $\xi_a$.

Assume that $vv$ occurs in $\xi_a$ (there exists $j \in I$ with $j \geq 2$). 
There exist words $u_1$, $v_1$ such that
$u = \lambda_{abc}(u_1)a$ and $av = \lambda_{abc}(v_1)a$.
As $\xi_a = \lambda_{abc}(\xi_b)$, $\xi_b \in \{u_1 v_1^i a \mid i \in I\}^\omega$.
As $aba$ and $abca$ must occur at least once each in $uv$, $|u_1v_1| \leq |uv|-4$.
Both words $u_1$ and $v_1$ are not empty, otherwise $\xi_a \in \{ \pi(u_1), b\}^\omega$ or $\xi_a \in \{ \pi(v_1), b\}^\omega$: a contradiction with the choice of $u$ and $v$.

Remember that $\xi_b = \lambda_{bca}(\xi_c)$.
As $v_1v_1a$ occurs in $\xi_b$ ($j \geq 2$ belongs to $I$), the word $v_1$ ends with $bc$.
and consequently the word $v_1$ begins with $a$ or $b$.

Assume that $v_1$ begins with $a$. 
Then $\xi_b \in \{ u_1 a (v_1'a)^i \mid i \in I\}^\omega$ 
where $v_1'$ is the word such that $v_1 = av_1'$.
Hence $\xi_a = \pi(\xi_b) \in \{ \pi(u_1 a) \pi(v_1'a)^i \mid i \in I\}^\omega$.
As $|\pi(u_1a)\pi(v_1'a)|=|u_1 a v_1| < |uv|$, this contradicts the choice of $u$ and $v$.

Thus $v_1$ begins with $b$. We still have $\xi_b = \lambda_{bca}(\xi_c) \in \{u_1 v_1^i a \mid i \in I\}^\omega$.
Recall that $v_1$ ends with $bc$.
There exists a word $v_2$ such that $v_1 = \lambda_{bca}(v_2)bc$.
If $v_2 = \varepsilon$, as $(bc)^3$ is not a factor of $\xi_b$, 
$I \subseteq \{0, 1, 2\}$.
Also $0 \in I$ implies that $u_1$ ends with $bc$ and
so $(bc)^3 = bc v_1^2$ is a factor of $\xi_b$.
Thus $I = \{1, 2\}$ and $\xi_b \in \{u_1bca, u_1bcbca\}^\omega$.
As $u_1$ cannot ends with $bc$, it ends with $b$ or $bca$.
Then $bbca$ or $bcabca$ is a factor of $\xi_b$ which implies that $ba$ or $aa$ is a factor of $\xi_c$. 
Both are impossible.
Hence $v_2 \neq \varepsilon$.
As $v_1^2a = (\lambda_{bca}(v_2)bc)^2a$ is a factor of $\xi_b$, 
the word $v_2cv_2a$ is a factor of $\xi_c = \lambda_{cab}(\xi_a)$.
So $v_2$ must end with the letter $c$.
As $ccc$ is not a factor of $\xi_c$, $v_2$ does not begin with $c$: its first letter is $a$.
This implies that each occurrence of $v_2$ in $\xi_c = \lambda_{cab}(\xi_a)$
is preceded by the letter $c$.
Hence each occurrence of $\lambda_{bca}(v_2)$ in $\xi_b$ is preceded by $bc$.
It follows that the word $u_1$ must end with $bc$.
This word $u_1$ also begins with $b$ as it is a prefix of $\xi_b$.
Then
there exists a word $u_2$ such that $u_1 = \lambda_{bca}(u_2)bc$.
We get $\xi_c \in \{u_2(cv_2)^ia \mid i \in I\}^\omega$.
Recall that $v_2$ begins with $a$ and ends with $c$.
There exists a word $v_3$ such that $cv_2 = \lambda_{cab}(v_3)c$.
The word $u_2$ is not empty. Indeed otherwise $\xi_a = \pi^2(\xi_c) \in \{\pi^2(cv_2), \pi^2(a)\}^\omega$ which contradicts the choice of $u$ and $v$ since $|cv_2|+|a| < |uv|$.
\begin{itemize}
\item If $0 \in I$, the factor $u_2a$ is a factor of $\xi_c$ and so the word $u_2$ ends with the letter $c$.
As $u_2$ also begins with $c$ as it is a prefix of $\xi_c$, 
there exists a word $u_3$ such that $u_2 = \lambda_{cab}(u_3)c$.
We get $\xi_a \in \{u_3(cv_3)^ia \mid i \in I\}^\omega$.
\item If $0 \not\in I$, 
for a word $u_3$, $u_2 = \lambda_{cab}(u_3)$ and
$\xi_a \in \{u_3(v_3c)^{i-1}v_3a \mid i \in I\}^\omega$.
\end{itemize}
Assume $v_3 = \varepsilon$. 
Observe that $0 \not\in I$ since $I$ contains an integer greater than or equals to $2$ and $cc$ is not a factor of $\xi_a$
Similarly $I = \{1, 2\}$.
In this case, $\xi_a \in \{u_3a, u_3ca\}^\omega$. 
As $u_3c$ is a factor of $\xi_a = \lambda_{abc}(\xi_b)$,
the word $u_3$ must end with $ab$
which implies that $u_3a$ ends with $aba$.
As $a$ is the first letter of $u_3$, $abaa$ is a factor of $\xi_a$: a contradiction.
Thus $v_3 \neq \varepsilon$. In both cases $0 \in I$ and $0 \not\in I$,
the factor $u_3v_3c$ of $\xi_a$ ends with $abc$.
As $cv_3a$ is a factor of $\xi_a$, $v_3 \neq b$.
Then $v_3$ ends with $ab$.
Once again as $u_3$ begins with $a$, we get the factor $abaa$ of $\xi_a$: a contradiction that ends the study of the case ``$vv$ occurs in $\xi_a$".

To continue the study of the case ``$v$ begins with $b$", we have to study case $I = \{0, 1\}$.
The word $u$ begins with $a$ and so there exist words $u_1$ and $v_1$ such that
$u = \lambda_{abc}(u_1)a$, and either $v = b \lambda_{abc}(v_1)$ or $v = bc \lambda_{abc}(v_1)$.
It follows that $\xi_b \in \{u_1a, u_1bv_1\}^\omega$ or $\xi_b \in \{u_1a, u_1cv_1\}^\omega$.
As $\xi_b = \lambda_{bca}(\xi_c)$, the word $u_1$ ends with $bc$. 
As $cc$ is not a factor of $\xi_b$,
necessarily $\xi_b \in \{u_1a, u_1bv_1\}^\omega$ holds. Consequently there exist words $u_2$ and $v_2$ such that
$u_1 = \lambda_{bca}(u_2)bc$ and $bv_1 = \lambda_{bca}(v_2)$: $\xi_c \in \{u_2a, u_2cv_2\}^\omega$.
Necessarily $u_2$ begins with $c$, the first letter of $\xi_c = \lambda_{cab}(\xi_a)$.
From the factor $u_2a$ we deduce that $u_2$ ends also with $c$: $u_2 = \lambda_{cab}(u_3)c$ and $cv_2 \in \lambda_{cab}(v_3)$.
It follows that $\xi_a \in \{u_3a, u_3cv_3\}^\omega$.
From the factor $u_3c$ we deduce that $u_3$ ends with $ab$. 
Also $u_3$ must begin with $a$, the first letter of $\xi_a$.
Once again we find a factor $abaa$ in $\xi_a$: a contradiction. 
This ends the study of the case ``$v$ begins with $b$".

\medskip

\textit{Case $v$ begins with the letter $c$}.
Let $I$ be defined as in the previous case: $\xi_a \in \{uv^i \mid i \in I \}^\omega$.
Assume first that $vv$ occurs in $\xi_a = \lambda_{abc}(\xi_b)$. 
Then both words $u$ and $v$ ends with $ab$.
There exist words $u_1$ and $v_1$ such that 
$u = \lambda_{abc}(u_1)ab$ and $v = c\lambda_{abc}(v_1)ab$.
Then $\xi_b \in \{u_1(cv_1)^ib \mid i \in I\}^\omega$.
As $cv_1c$ is a factor of $\xi_b = \lambda_{bca}(\xi_c)$, 
the word $v_1$ is not empty and ends with $b$.
The word $u_1c$ is a prefix of $\xi_b$ and so $u_1$ begins with $b$.
As $v_1bu_1$ is a factor of $\xi_b$, $bbb$ also occurs in $\xi_b$: a contradiction.
Hence $vv$ does not occur in $\xi_a = \lambda_{abc}$.
There exist words $u_1$ and $v_1$ such that 
$u = \lambda_{abc}(u_1)ab$ and $v = c\lambda_{abc}(v_1)$.
It follows that $\xi_b \in \{u_1b, u_1cv_1\}$.
If $u$ is empty, we find a contradiction with the choice of $u$ and $v$ 
as $|bcv_1| < |uv|$. If $u$ is not empty, it begins with $b$.
The existence of the factor $u_1c$ shows that $u_1$ also ends with $b$.
As $u_1bu_1$ is a factor of $\xi_b$, we deduce that $bbb$ is a factor of $\xi_b$: a final contradiction.
\end{proof}

\subsection{Proof of Theorem~\ref{T:no simpler result}}

Assume by contradiction that there exists a set $S$ of morphisms such 
that the set of LSP infinite words is the set of $S$-adic words.
Then all morphisms in $S$ preserve the LSP property for infinite words over $A$.

Let $\xi_a$ be the word studied in Section~\ref{sec:a_particular_word}. 
Recall that $\xi_a$ is LSP. So it is $S$-adic.
By Lemma~\ref{L:xi_a} $\xi_a$ cannot be decomposed on two words.
It follows that any element $f_n$ occurring in a directive word 
$(f_n)_{n \geq 1} \in S^\omega$ must be a morphism defined on at least three letters and cannot belong to ${\cal P}_{\rm LSP}$.
By Proposition~\ref{P:CN_preserving}, $f_n \in \SRbLSP^* {\cal U}$. If $f_n$ is defined from the alphabet $A_n$ 
to the alphabet $B_n$, then $\#A_n \leq \#B_n$. 
As $\#B_1 = 3$, by induction one can see that $\#A_n =\#B_n = 3$ for all $n \geq 1$.
This implies that $f_n = g_n \circ \pi_n$ with $g_n \in \SRbLSP^*$ 
and $\pi_n$ a renaming morphism.
As $g_n$ is an endomorphism over $B_n$, $g_n \in \SbLSP^*$.

The fact that $f_n$ preserves the LSP property for infinite words (as all elements of $S$) implies that $g_n$ also preserves the LSP property. 
Hence $g_n \in {\cal L}^*Perm$ by Theorem~\ref{th2} ($\xi_a$ is not periodic so $g_n$ cannot belong to ${\cal P}_{\rm LSP}$).

From what precedes $\xi_a = L_\alpha(\bw)$ for a letter $\alpha$ and an infinite word $\bw$.
Indeed there must exist an integer $k$ such that
$g_k \in {\cal L}^+Perm$
and $g_i \in Perm$ for all $1 \leq i \leq k-1$.
As one can see that $Perm Perm \subseteq Perm$ and
$Perm {\cal L} \subseteq {\cal L}  Perm$, relation $\xi_a = L_\alpha(\bw)$ follows.
This is a contradiction with the fact that the word $abca$ is a factor of $\xi$. 

\section{\label{sec:conclusion}Final remarks}
 
In \cite{Fici2011TCS} G.~Fici~asked for a characterization of both finite and infinite words. As explained in \cite{Richomme2017DLT}, any finite LSP word can be extended to a longer LSP word and so: A finite word is LSP if and only if it is a prefix of an infinite LSP word. And thus any characterization of infinite LSP words provides naturally a characterization of finite LSP words (adding ``is a prefix of" before the characterization of infinite LSP words). 

A natural open question comes from the content of this paper.
Does there exist an $S$-adic characterization of infinite words having at most one left special factor of each length (but not necessarily as a prefix)?
Another question comes after Lemma~\ref{L:xi_a}. The proof of this result is rather technical while the result itself seems to be extendable. 
Let $\bw$ be an infinite LSP word such that for all prefixes $p$, there exist at least three letters $a$, $b$ and $c$ such that $ap$, $bp$ and $cp$ are factors of $\bw$. Is it true that $\bw$ cannot be decomposed on two words?
 More generally let $\bw$ be an infinite word having infinitely many factors that have at least $k$ left extensions (factors $u$ such that there exist at least $k$ distinct letters $a_1$, \ldots, $a_{k}$ with $a_iu$ factor of $\bw$ for all $i$, $1 \leq i \leq k$). Is it true that $\bw$ cannot be decomposed on a set of $k-1$ or less words?

\end{document}